\documentclass[10pt,journal]{IEEEtran}
\usepackage{amsmath}
\usepackage{amssymb}
\usepackage{mathrsfs}
\usepackage{amsthm}
\usepackage{multirow}
\usepackage{color}
\usepackage{array}
\usepackage{url}
\usepackage{comment}
\usepackage{enumerate}
\usepackage{eucal}
\usepackage[normalem]{ulem}
\usepackage{graphicx}
\usepackage{tabto}
\usepackage{amsmath}
\usepackage{caption}
\usepackage{subcaption}
\usepackage{bbm}
\usepackage{amsthm}
\usepackage{amssymb}
\usepackage{enumitem}
\usepackage{enumerate}
\usepackage{mathtools}
\usepackage{esvect}
\usepackage{caption}
\usepackage{float}
\usepackage{multicol}
\usepackage{tikz}
\usepackage{soul}

\usepackage[style=ieee]{biblatex}
\usepackage{balance}
\usepackage{adjustbox}

\usepackage{hyperref}
\newcommand\myshade{70}
\hypersetup{
	linkcolor  = red!\myshade!black,
	citecolor  = blue!\myshade!black,
	urlcolor   = blue!\myshade!black,
	colorlinks = true,
}

\usepackage{cleveref}

\usepackage{tabularx}
\usepackage{algorithm}
\usepackage[noend]{algpseudocode}

\usepackage{tikz}
\usetikzlibrary{arrows, patterns, shapes.arrows, decorations.pathmorphing}

\IEEEoverridecommandlockouts

\theoremstyle{plain}
\newtheorem{theorem}{Theorem}

\newtheorem{lemma}[theorem]{Lemma}
\newtheorem{proposition}[theorem]{Proposition}

\theoremstyle{definition}
\newtheorem{definition}{Definition}
\newtheorem{example}[definition]{Example}

\newtheorem{remark}[definition]{Remark}

\newcommand{\FF}{\mathbb{F}}
\newcommand{\BB}{\mathbb{B}}

\DeclareMathAlphabet{\mathbfsl}{OT1}{ppl}{b}{it} 

\newcommand{\vy}{\mathbfsl{y}}

\newcommand{\vc}{\mathbfsl{c}}

\newcommand{\cS}{\mathcal{S}}
\newcommand{\cA}{\mathcal{A}}

\newcommand{\cC}{\mathcal{C}}

\newcommand{\cF}{\mathcal{F}}

\newcommand{\cT}{\mathcal{T}}

\newcommand{\cR}{\mathcal{R}}
\newcommand{\cRT}{\mathcal{RT}}
\usepackage{balance}

\newcommand{\vlambda}{{\pmb{\lambda}}}

\newcommand{\rs}{{\rm RS}}
\newcommand{\grs}{{\rm GRS}}
\newcommand{\GF}{{\rm GF}}
\newcommand{\tr}{{\rm Tr}}
\newcommand{\supp}{{\rm supp}}
\newcommand{\wt}{{\rm wt}}

\newcommand{\floor}[1]{{\left\lfloor #1\right\rfloor}}
\newcommand{\ceil}[1]{{\left\lceil #1\right\rceil}}

\newcolumntype{Y}{>{\centering\arraybackslash}X} 

\newcommand{\hanmao}[1]{{\color{magenta} (HM: #1)}}

\newcommand{\kim}[1]{{\color{blue} (Kim: #1)}}

\title{Robust Repair of Reed-Solomon Codes}

\author{
    Wilton Kim\IEEEauthorrefmark{1}, 
    Stanislav Kruglik\IEEEauthorrefmark{2}, 
    Gaojun Luo\IEEEauthorrefmark{3}, 
    and Han Mao Kiah\IEEEauthorrefmark{1}%
    
    \thanks{\IEEEauthorrefmark{1}Wilton Kim and Han Mao Kiah are with the School of Physical and Mathematical Sciences, Nanyang Technological University, Singapore (email: \{wilton.kim, hmkiah\}@ntu.edu.sg).}
    \thanks{\IEEEauthorrefmark{2}Stanislav Kruglik is with the Department of Electrical and Photonics Engineering, Technical University of Denmark, Kongens Lyngby, Denmark (email: stakr@dtu.dk).}
    \thanks{\IEEEauthorrefmark{3}Gaojun Luo is with the School of Mathematics, Nanjing University of Aeronautics and Astronautics, Nanjing, China (email: gaojun\_luo@nuaa.edu.cn). Gaojun Luo’s contribution to this work was carried out while he was a Research Fellow at Nanyang Technological University, Singapore.}
    \thanks{The work of Wilton Kim and Han Mao Kiah was supported by the Ministry of Education, Singapore, under its MOE AcRF Tier 1 Award under Grant 24/25}
    \thanks{This paper was presented in part at the 2023 IEEE International Symposium on Information Theory~\cite{kruglik2023} DOI: 10.1109/ISIT54713.2023.10206718 and 2024 IEEE International Symposium on Information Theory~\cite{kim2024decoding} DOI: 10.1109/ISIT57864.2024.10619533.}
     \thanks{\textit{Corresonding author:} Stanislav Kruglik}
}

\begin{document}
\date{}

\maketitle
\begin{abstract}
    We study the problem of robust repair of a single erasure in Reed--Solomon codes under low communication bandwidth. 
Focusing on the Guruswami--Wootters trace repair framework, we investigate whether a failed node can be correctly repaired in the presence of erroneous responses from helper nodes. 
Equivalently, we view the collection of downloaded traces as a code, which we call the repair-trace code. By characterizing the zero coefficients of the associated polynomial in terms of cyclotomic cosets, we derive upper bounds on the dimension $k$ that allow correction of a given number of erroneous traces $e$, as well as lower bounds on the minimum distance as a function of $k$. For the case $q=2$, we exploit explicit formulas for cyclotomic coset representatives to obtain the exact optimal dimension bound for single-error correction. We also propose two efficient robust repair schemes. 
Our first scheme achieves the error-correction capability guaranteed by the BCH bound. 
To approach a stronger bound based on character sums, we develop a second scheme that tolerates more errors at the cost of an additional factor $n$ in computational complexity.
\end{abstract}

\begin{IEEEkeywords}
Reed-Solomon Codes, Low-Bandwidth Schemes, Distributed Storage
\end{IEEEkeywords}

\section{Introduction}\label{sec:intro}

Modern distributed storage systems store massive amounts of data across many servers, and server failures are routine rather than exceptional. 
To protect data against such failures, these systems commonly use erasure-correcting codes, with Reed--Solomon codes~\cite{originalRSpaper} being among the most widely deployed in practice (see~\cite{Dinh2022} for a survey). 
Reed--Solomon codes enjoy the maximum distance separable (MDS) property: given a codeword stored across $n$ nodes, any $k$ nodes suffice to recover the full data, so the system can tolerate up to $n-k$ simultaneous node failures. 
When a storage node fails, the system must reconstruct the missing content by contacting the remaining nodes. 
In practice, as documented by Rashmi \textit{et al.}~\cite{rashmi2013solution}, single-node failure is by far the most common case, making its efficient repair a central concern. 
Because communication is often a major bottleneck in large-scale storage systems, a central objective is to minimize the amount of data transferred during this recovery process. 
This communication cost, known as the {\em repair bandwidth}, was formalized by Dimakis \textit{et al.}~\cite{Dimakis2010}; a detailed and comprehensive surveys can be found in~\cite{overview2, 10.1145/3708994}.

For Reed--Solomon codes defined over $\FF = \GF(p^{mt})$, Guruswami and Wootters later showed that the repair bandwidth for single-node failure can be significantly reduced by downloading carefully chosen traces (sub-symbols over a smaller base field $\BB = \GF(p^m)$) instead of full symbols in $\FF$~\cite{guruswamiwooters2017}. 
The key insight is to contact more than $k$ helper nodes while downloading only a single trace from each, which can yield significant bandwidth savings over naive repair. 
This approach, known as the {\em trace repair framework}, was subsequently extended to many settings including multiple erasures and repair of linear combinations of erased nodes~\cite{Duursma2017, Dau2017, Kiah2018, Kiah2021, ShuttyWootters2022, Tamo2022i, Tamo2022ii, kiah2023, 10614076, 10533730}. 
However, all of these works assume that helper nodes return correct responses, which may fail in practice due to hardware faults, stale data, or adversarial behavior.

A natural question is whether the bandwidth savings of the trace repair framework can be retained when some helper nodes provide incorrect traces. 
Previous work~\cite{chenbarg2020} on erroneous-node repair for Reed--Solomon codes primarily address constructions in high sub-packetization regimes (e.g., schemes attaining the cut-set bound), which typically require very large field sizes and therefore high storage per node. 
In contrast, our focus is on the small- to moderate-field regime that is more relevant for practical implementations of Reed--Solomon-based storage. 
In this regime, the error-correcting capability of the trace repair mechanism itself has not been studied.

To address this, we focus on the Guruswami--Wootters repair scheme (which we review in Section~\ref{sec::tracerepairframework}) for a full-length Reed-Solomon code and ask the following question. 

\textit{Can we correctly repair a failed node in the presence of erroneous traces through the trace repair framework?}

In this paper, we consider a full-length Reed-Solomon code, the collection of traces downloaded during repair as a {\em $\BB$-linear code}. For a fixed erased node $\alpha^*\in\FF$, we call such a code the \textit{repair-trace} code $\cRT(\FF\setminus \{\alpha^*\},k)$ which we formally define in Definition~\ref{def:repairtracecode}. 
Specifically, we answer the following questions:
\begin{enumerate}[label = (\Alph*)]
    \item Fix the number of erroneous traces $e$. What is the largest dimension $k$ of the Reed--Solomon code for which the trace repair framework guarantees correct recovery in the presence of up to $e$ erroneous traces? 
    \item Fix the dimension $k$ of the Reed-Solomon code. What is a lower bound of the minimum distance of the corresponding repair-trace code? Or equivalently, how many erroneous traces can we tolerate in the repair-trace process?
\end{enumerate}
Furthermore, to practically realize these robust repair scheme, we also design efficient decoders that correct $e$ erroneous traces and, as a result, ensure correct repair. 

\subsection{Our Contributions}

We begin by showing that the problem of repairing any single erased node can be reduced to the problem of repairing $f(0)$. Therefore, it suffices to study $\cRT(\FF^*,k)$, where $\FF^* = \FF\setminus\{0\}$. To facilitate our analysis of the repair-trace code, we study the support $\supp(F)$, the set of the positions of nonzero coefficients of the corresponding polynomial $F(x)$ in $\cRT(\FF^*,k)$. We show that $\supp(F)$ is contained in a union of some cyclotomic cosets, which we refer to as $\cS_k$ (see Theorem~\ref{thm:sup_is_union_of_cyc_coset}). Then, we embed $\cRT(\FF^*,k)$ into an $\FF$-linear supercode $\cT(\FF^*,k)$ which consists of all polynomials whose support is contained within $\cS_k$, formally defined in Theorem~\ref{thm:sup_is_union_of_cyc_coset}. In fact, the repair-trace code $\cRT(\FF^*,k)$ is a subfield subcode of $\cT(\FF^*,k)$ over the base field $\BB$, i.e., $\cRT(\FF^*,k) = \cT(\FF^*,k) \cap \BB^n$. We show that $\cT(\FF^*,k)$ is a cyclic code (see Theorem~\ref{thm:rep_trace_is_cyc}) and utilize this property for our further analysis.

\vspace{2mm}

\noindent\underline{\textit{Dimension Upper Bounds for Error Correction.}} The cyclic property of $\cT(\FF^*,k)$ allows us to apply the BCH bound which guarantees error correction of $e$ given that there are $2e$ $b$-spaced consecutive zero coefficients with $\gcd(b,q^t-1) = 1$. These zero coefficients correspond to the cyclotomic cosets excluded from $\cS_k$. As established in Theorem~\ref{thm:sup_is_union_of_cyc_coset}, the set $\cS_k$ always contains the cyclotomic coset with the largest representative and it is constructed in a bottom-up manner, i.e., we add more cyclotomic cosets to $\cS_k$ as $k$ grows. Hence, to obtain the dimension upper-bound, we introduce a greedy pruning algorithm (Algorithm~K) that systematically removes the second largest representative coset (while keeping the largest) in order to determine the largest dimension $k$, say when $k = K$, that produces a gap of $2e$ $b$-spaced consecutive zero coefficients. Equivalently, the code $\cT(\FF^*,K)$ has minimum distance of at least $2e+1$. Since $\cRT(\FF^*,K)\subset \cT(\FF^*,K)$, the code $\cRT(\FF^*,K)$ also has a minimum distance of at least $2e+1$ and guarantees error correction of $e$ errors. 
Thus, the dimension upper bound $K$ also applies to the repair-trace code.

Although the dimension upper bound can be obtained by applying Algorithm~K, expressing it explicitly for a general $q,t$ and $e$ is not trivial. Nevertheless, for $\FF = \GF(2^t)$, we derive explicit formulas for the coset representatives and obtain the largest admissible dimension that guarantees single erroneous trace correction. We summarize it in the following theorem.

\begin{theorem}\label{thm:binary_bound}
Let $\mathbb{F} = \mathrm{GF}(2^t)$ with $t\ge 3$. The repair-trace code $\cRT(\FF^*,k)$ has minimum distance at least $3$ (guaranteeing single-error correction) for any $k \le K$, where
\begin{equation*}
K =
\begin{cases}
2^{t-1} - 2^{\lfloor (t-1)/2 \rfloor}, & \text{if } t \text{ is odd}, \\[4pt]
2^{t-1} - 2^{\lfloor (t-1)/2 \rfloor + 1}, & \text{if } t \text{ is even}.
\end{cases}
\end{equation*}
Furthermore, this upper bound $K$ is optimal: if $k = K+1$, the minimum distance of $\cRT(\cA,k)$ is at most $2$.
\end{theorem}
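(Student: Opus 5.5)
The plan is to work entirely with the cyclic supercode $\cT(\FF^*,k)$ and the description of $\cS_k$ from Theorem~\ref{thm:sup_is_union_of_cyc_coset}. Here $q=2$, $\BB=\GF(2)$ and $n=2^t-1$, and cyclotomic cosets are taken modulo $2^t-1$ under multiplication by $2$. First I will specialise Theorem~\ref{thm:sup_is_union_of_cyc_coset} to this case. This makes explicit which coset representatives (leaders) enter $\cS_k$ as $k$ grows. Recall that $\cS_k$ always contains the coset with the largest leader and is built bottom-up. So for a given $k$, the complement of $\cS_k$ is the union of all cosets whose leaders exceed a threshold determined by $k$, apart from that top coset.

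For the sufficiency direction I will use the BCH bound with $e=1$. It is enough to exhibit two $b$-spaced exponents outside $\cS_k$ with $\gcd(b,2^t-1)=1$. Since $\cRT(\FF^*,k)=\cT(\FF^*,k)\cap\BB^n$, the bound $d\ge 3$ then transfers to the repair-trace code. This is exactly what Algorithm~K does: it keeps the top coset and prunes the second-largest one. So the key step is to determine the largest and second-largest coset leaders modulo $2^t-1$ in binary.

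The largest leader is $2^{t-1}-1$, with binary pattern $011\cdots1$. For the second-largest leader I will argue combinatorially on binary necklaces. A leader is the largest rotation of its $t$-bit pattern, so I want the largest such pattern other than $01^{t-1}$. I expect a pattern with two zeros spaced as evenly as possible:
\begin{itemize}
\item for odd $t$, two zero blocks separated by runs of ones of lengths $\lfloor (t-1)/2\rfloor$ and $\lceil (t-1)/2\rceil$;
\item for even $t$, the balanced split is forced to lose one more position.
\end{itemize}
This parity effect explains the extra factor $2$ in the even case.

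Taking that leader together with an adjacent exponent (spacing $b=1$, or $b$ a suitable power of $2$, which is coprime to $2^t-1$) gives the two required consecutive zeros. I then translate the leader back to $k$ through the relation in Theorem~\ref{thm:sup_is_union_of_cyc_coset} between $k$ and the largest leader contained in $\cS_k$. This should give the threshold $K=2^{t-1}-2^{\lfloor (t-1)/2\rfloor}$ for odd $t$ and $2^{t-1}-2^{\lfloor (t-1)/2\rfloor+1}$ for even $t$.

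For optimality at $k=K+1$, the second-largest coset enters $\cS_k$. I must then show that no admissible gap remains, and more strongly that $\cRT$ itself contains a word of weight at most $2$. This does not follow from $\cT$ alone, because $\cRT$ is only a subfield subcode. I plan to use Delsarte's theorem: the dual of $\cRT(\FF^*,k)$ is the trace code of the dual of $\cT(\FF^*,k)$, whose exponents are the negatives of the (few) cosets outside $\cS_{K+1}$. A binary weight-$2$ codeword supported on $\{\alpha,\alpha'\}$ exists exactly when the two corresponding columns of the dual's parity-check matrix coincide, i.e.\ when $\alpha^{s}=\alpha'^{s}$ for every exponent $s$ in the remaining zero set $Z$. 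Since $Z$ at this point is essentially the top coset alone, this reduces to $\gcd(Z\cup\{2^t-1\})>1$. If that gcd turns out to be $1$, I will instead look for a weight-$1$ or weight-$2$ word by choosing $f$ and the repair polynomial directly.

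The main obstacle is the necklace argument that pins down the second-largest binary coset leader and its exact relation to $k$, including the odd/even split. A secondary difficulty is making the converse hold for $\cRT$ rather than only for $\cT$. For the first, I would begin by listing small cases $t=3,\dots,8$ to confirm the pattern before proving it.
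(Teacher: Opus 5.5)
\textbf{The converse does not go through as written.} The top coset $C_{r_{|\Xi|}}$ (leader $2^{t-1}-1$) always lies in $\cS_k$, because it carries $f_0$. So it is never part of the zero set $Z=\cR_{K+1}$. Had $Z$ been that coset, your column-coincidence criterion would give nothing, since $\gcd(2^t-1-2^j,\,2^t-1)=1$. You would then be left with the unspecified fallback ``choose $f$ directly'', which is exactly what the paper's converse supplies: it writes $f$ down explicitly.

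The correct computation is $\cS_{K+1}=C_{r_{|\Xi|}}\cup\bigcup_{r\le K-1}C_r$, with $K-1=r_{|\Xi|-1}$ for odd $t$ and $K-1=r_{|\Xi|-2}$ for even $t$.
\begin{itemize}
\item For odd $t$, $Z=\emptyset$, so $\cRT(\FF^*,K+1)=\GF(2)^n$ and there is a weight-$1$ word.
\item For even $t$, $Z=C_{r_{|\Xi|-1}}$ with $r_{|\Xi|-1}=(2^{t/2-1}-1)(2^{t/2}+1)$. Since $2^{t/2}+1$ divides $n$, every $s\in Z$ is divisible by it. Take $\beta=\omega^{2^{t/2}-1}$, which has order $2^{t/2}+1$. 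Your test then gives the weight-$2$ word on $\{1,\beta\}$. This is the same word the paper obtains from $f(x)=\sum_i(1+\omega^{-r_i(2^{t/2}-1)})x^{1+r_i}$.
\end{itemize}
Once $Z$ is corrected, your Delsarte route is a genuine and arguably cleaner alternative to the paper's construction. For $\BB=\GF(2)$ it shows that $d(\cRT(\FF^*,k))\le 2$ if and only if $\cR_k=\emptyset$ or $\gcd(\cR_k\cup\{n\})>1$. It also sidesteps checking the expansion of $\tr(c\,x^{r})$ on cosets of size smaller than $t$.

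\textbf{Your even-$t$ picture is also wrong, and this breaks the sufficiency step.} For even $t$, the balanced word $0\boldsymbol{1}_{t/2-1}0\boldsymbol{1}_{t/2-1}$ is a leader. It is in fact the second-largest leader, with the same formula $2^{t-1}-2^{\lfloor (t-1)/2\rfloor}-1$ as for odd $t$. Two smaller slips: leaders are the smallest rotations, not the largest, and for odd $t$ the runs have lengths $(t-3)/2$ and $(t-1)/2$.

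What actually changes for even $t$ is that this coset has only $t/2$ elements, all divisible by $2^{t/2}+1$. Any two of them therefore differ by a number sharing a factor with $n$, so pruning this coset alone never yields a BCH pair. That is why $K$ drops to $r_{|\Xi|-2}+1$. Then $\cR_K=C_{r_{|\Xi|-1}}\cup C_{r_{|\Xi|-2}}$, and the pair must straddle the two cosets. Examples are $4r_{|\Xi|-1},\,2r_{|\Xi|-2}$ (the paper's choice), or the two leaders themselves, which differ by $2^{t/2-1}$.

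Looking for ``the leader and an adjacent exponent'' inside a single coset cannot work here. A coset of two-zero words contains two residues at distance $1$ (equivalently, at distance a power of $2$) only if its runs of ones have lengths $c$ and $c+1$, which forces $t$ odd. Even for odd $t\ge 5$, the neighbours $r_{|\Xi|-1}\pm1$ have at least three zeros and so lie in $\cS_K$. You need a pair such as $2r_{|\Xi|-1}-1,\,2r_{|\Xi|-1}$ (as in the paper) or $r_{|\Xi|-1},\,r_{|\Xi|-1}+2^{(t-3)/2}$.
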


\noindent\underline{\textit{Lower Bounds for the Minimum Distance.}}
To study the number of erroneous traces the trace-repair framework tolerates, we establish three lower bounds for the minimum distance of the repair-trace code $\cRT(\FF^*,k)$. The first lower bound is inherited naturally due to the cyclic property of $\cT(\FF^*,k)$, which yields the BCH bound. By fixing the starting point of the run of consecutive zero terms, we obtain a closed-form lower bound, which we call the Degree bound. Since $\cRT(\FF^*,k)\subset \cT(\FF^*,k)$, the distance lower bounds established for $\cT(\FF^*,k)$ also hold for $\cRT(\FF^*,k)$. 

Crucially, the BCH bound and Degree bound are established for a denser supercode $\cT(\FF^*,k)$. Hence, they often severely underestimate the resilience of the actual repair-trace code $\cRT(\FF^*,k)$. In response to this, we derive a third lower bound, the character sum bound, which is established directly for $\cRT(\FF^*,k)$. We summarize all three lower bounds for the minimum distance in the following theorem.

\begin{theorem}[Distance Lower Bounds]\label{thm:distance_lower_bound}
Let $\mathbb{F} = \mathrm{GF}(p^{mt})$ with $\BB=\GF(p^m)$. 
Suppose that the repair-trace code $\cRT(\FF^*,k)$ has minimum distance $d$.
Then the following lower bounds for $d$ applies.
\begin{enumerate}[label = (\roman*),leftmargin=1.5em]
    \item \textit{(BCH Bound)}. Let $\mathcal{S}_k$ be the support as defined in Theorem~\ref{thm:sup_is_union_of_cyc_coset}. If there exist integers $a$ and $b$ with $\gcd(b, n) = 1$ such that $(a + i b) \pmod n \notin \mathcal{S}_k$ for all $0 \le i \le \delta-1$, then the minimum distance satisfies $d \ge d_{\text{BCH}} = \delta + 1$.
    \item \textit{(Degree Bound)}. If $k\leq p^m$, then $d\ge p^{mt}-1-\Delta$, where
\begin{equation*}\label{eq:Delta}
    \Delta \triangleq 
    \begin{cases}
        (k-1)p^{mt-m}, &\text{ when }k\ge 2,\\
        p^{mt-m}-1, &\text{ when }k= 1.
    \end{cases}
\end{equation*}
   \item \textit{(Character Sum Bound)}. If $k<1+\frac{p^{mt}-1} {\sqrt{p^{mt}}}$,then $d\ge d_{\text{CS}}$, where
    \begin{equation*}\label{eq:d3}
        \hspace{0cm}d_{\text{CS}} \triangleq 
        \begin{cases}
            \frac{p^m-p}{p^m}\left(p^{mt}-1-(k-1)\sqrt{p^{mt}}\right),\text{when }m\ge 2,\\
            \frac{p-1}{p}\left(p^{t}-1-(k-1)\sqrt{p^{t}}\right),\hspace{0.7cm}\text{when }m = 1.
        \end{cases}
    \end{equation*}
\end{enumerate}
\end{theorem}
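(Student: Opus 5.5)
The plan is to treat the three parts separately, using the embedding $\cRT(\FF^*,k)\subseteq\cT(\FF^*,k)$ for (i) and (ii) and working directly with traces for (iii). Throughout, $n=p^{mt}-1$ and codewords are indexed by $\FF^*=\{\alpha^0,\dots,\alpha^{n-1}\}$ for a primitive element $\alpha$. A codeword of $\cRT(\FF^*,k)$ has the form
\[
\bigl(\tr(\alpha_j^{-1}\,p(\alpha_j)\,f(\alpha_j))\bigr)_j,
\]
where $\tr$ is the trace from $\FF$ to $\BB$, $f$ ranges over polynomials of degree less than $k$, and $p$ is a repair polynomial of degree less than $n-k$ with $\deg(xp)$ small. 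This is the Guruswami--Wootters form, possibly after the reduction to repairing $f(0)$.

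For (i) I will invoke Theorem~\ref{thm:rep_trace_is_cyc}: $\cT(\FF^*,k)$ is cyclic of length $n$. By Theorem~\ref{thm:sup_is_union_of_cyc_coset}, every codeword corresponds to a polynomial $F$ with support in $\cS_k$. Read this as a statement about the evaluation/Fourier coefficients of the codeword: coefficients at indices $a+ib \notin \cS_k$ vanish. The standard BCH argument for arbitrary step $b$ with $\gcd(b,n)=1$ then applies. One substitutes the primitive element $\alpha^b$, which reduces to consecutive zeros, and a Vandermonde matrix shows that any nonzero codeword of weight at most $\delta$ would violate the $\delta$ zero conditions. Hence every nonzero codeword of $\cT(\FF^*,k)$ has weight at least $\delta+1$, and by containment so does every nonzero codeword of $\cRT(\FF^*,k)$.

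For (ii) I will bound the degree rather than use a run of zeros. Write each codeword as the evaluation vector of
\[
F(x)=\tr\bigl(x^{-1}p(x)f(x)\bigr)=\sum_{j=0}^{mt-1}\bigl(x^{-1}p(x)f(x)\bigr)^{p^{mj}}.
\]
In the Guruswami--Wootters choice $x^{-1}p(x)=\tr(\beta x)/x$ has degree $p^{mt-m}-1$, while $f$ has degree at most $k-1$. After reducing exponents modulo $n$, the largest exponent appearing in $F$ is at most $\Delta$. Here $k\le p^m$ guarantees that multiplying by $p^{m(t-1)}$ does not wrap past $n$; this is the reason for the hypothesis. The case $k=1$ gives the exponent $p^{mt-m}-1$ directly. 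For $k\ge 2$ I expect the top term to be $(k-1)p^{mt-m}$, coming from $x^{k-1}$ raised to the Frobenius $p^{m(t-1)}$ and combined with the constant part. A nonzero polynomial of degree at most $\Delta$ has at most $\Delta$ roots in $\FF^*$, so the weight is at least $n-\Delta$. The main delicate point is checking that $F$ is nonzero as a function on $\FF^*$ whenever the codeword is nonzero, which holds trivially, and getting the exact exponent bookkeeping that yields $\Delta$. Equivalently, this says the exponents in $(\Delta,n)$ lie outside $\cS_k$, which is the special case $a=\Delta+1$, $b=1$ of (i).

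For (iii), which I expect to be the main obstacle, I will count zeros of $c_j=\tr(g(\alpha_j))$ with $g=x^{-1}pf$ via additive characters. Let $\psi$ be the canonical additive character of $\BB$. Then
\[
\#\{x\in\FF^*:\tr(g(x))=0\}=\frac1{p^m}\sum_{x\in\FF^*}\sum_{u\in\BB}\psi\bigl(u\,\tr(g(x))\bigr).
\]
The $u=0$ term contributes $n/p^m$. For $u\ne0$ the inner sum is a character sum of the rational function $ug$ over $\FF^*$. By a Weil-type bound for the Laurent polynomial $ug$, whose pole at $0$ has order $1$ and whose degree at infinity is $k-1$ after the repair polynomial is accounted for, this sum is at most $(k-1)\sqrt{p^{mt}}$ in absolute value, up to a constant adjustment. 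This gives
\[
\text{zeros}\le \frac{n+(p^m-1)(k-1)\sqrt{p^{mt}}}{p^m},
\]
so the weight is at least $\frac{p^m-1}{p^m}\bigl(n-(k-1)\sqrt{p^{mt}}\bigr)$.

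The stated constants $\frac{p^m-p}{p^m}$ and $\frac{p-1}{p}$ suggest the authors instead use the absolute trace to $\GF(p)$, with $p^m-p$ arising from the additional characters in that setting. I would therefore apply the Weil bound over $\GF(p)$-characters and accept the weaker constant. The hypothesis $k<1+n/\sqrt{p^{mt}}$ ensures the bound is positive and hence that $g$ is not of the degenerate form $h^p-h+c$. Verifying this nondegeneracy is the crux of (iii).
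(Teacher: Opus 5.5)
Your part (i) is correct and matches the paper: $\cRT(\FF^*,k)\subseteq\cT(\FF^*,k)$, the latter is cyclic, and the BCH/Vandermonde argument with step $b$ applies. Parts (ii) and (iii) have genuine gaps.

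\textbf{Part (ii).} The trouble starts with your setup. The repair-trace codeword is $\bigl(\tr(f(\alpha)/\alpha)\bigr)_{\alpha\in\FF^*}$ itself, with no repair polynomial inside the trace. The multiplier $\tr(u_i\alpha)$ is applied only after download, and $\tr(u_i\alpha)\tr(f(\alpha)/\alpha)$ is a different vector. Also, $\tr(y)=\sum_{i=0}^{t-1}y^{p^{mi}}$, not a sum up to $mt-1$.

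With the correct $F(x)=\tr(f(x)/x)$, your claim that after reducing exponents modulo $n$ every exponent is at most $\Delta$ is false. The term $\tr(f_0/x)$ contributes $f_0x^{n-1}$, and indeed $n-1\in C_{r_{|\Xi|}}\subseteq\cS_k$ always. So your ``equivalent'' statement $(\Delta,n)\cap\cS_k=\emptyset$ is false as well.

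The missing step is the paper's shift: multiply the coordinate at $\alpha$ by $\alpha^{p^{mt-m}}\neq 0$, which preserves weights. Then
\[
x^{p^{mt-m}}\tr\!\left(\frac{f(x)}{x}\right)=\sum_{i=0}^{t-1}x^{p^{mt-m}-p^{mi}}f(x)^{p^{mi}}
\]
is an honest polynomial, with all exponents nonnegative and no reduction needed. Its degree is at most $\max_i\bigl(p^{mt-m}+(k-2)p^{mi}\bigr)=\Delta$, and a nonzero polynomial of that degree has at most $\Delta$ roots. In the language of (i), this is the run of $n-1-\Delta$ zeros starting at $a=\Delta+1-p^{mt-m}$ with $b=1$, not at $\Delta+1$.

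\textbf{Part (iii).} Your character computation is exactly the paper's non-degenerate case. However, the claim that $k<1+n/\sqrt{p^{mt}}$ forces non-degeneracy is a non sequitur. A size condition on $k$ cannot stop $g=f(x)/x$ from having the form $h^p-h+c$; take $f(x)=bx$. The degenerate case must be handled separately.

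For $m=1$ the paper does this. If $g=h^p-h+b$, then $\tr(g)=\tr(h)^p-\tr(h)+\tr(b)=\tr(b)$, so the codeword is constant. Otherwise every $u\in\GF(p)^*$ is non-degenerate, since scaling by $u^{-1}\in\GF(p)$ preserves the Artin--Schreier form. This gives your bound with constant $\frac{p-1}{p}$.

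For $m\ge 2$, degenerate $u$ really occur, and your constant $\frac{p^m-1}{p^m}$ is wrong. The choice $f(x)=\gamma^p x^{p+1}-\gamma x^2$ gives $c_\alpha=z^p-z$ with $z=\tr(\gamma\alpha)$. This codeword has weight $(p^m-p)p^{mt-m}$, which is below your bound for large $t$.

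The factor $p^m-p$ does not come from absolute-trace characters. The paper obtains it by asserting that the degenerate $a\in\BB^*$ form one coset of $\GF(p)^*$, so there are at most $p-1$ of them, and bounding those sums trivially by $n$. Be aware that this count is not valid in general. The $a\in\BB$ with $ag$ degenerate form a $\GF(p)$-subspace, which can have dimension $j\ge 2$.

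For a concrete instance, take $p=2$, $m=4$, $t=4$, $k=6$ and $f(x)=\gamma^4x^5+\gamma x^2$. Then $c_\alpha=z^4+z$, which vanishes exactly when $z\in\GF(4)$. This nonzero codeword has weight $12\cdot 2^{12}=49152<d_{\text{CS}}\approx 56223$.

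So the $m\ge2$ case of (iii) is not true as stated, and neither your route nor the paper's can prove it without changes. One option is an extra restriction: for $k\le p^2+1$, comparing coefficients in $ag=h^p-h+c$ shows that a non-constant codeword has at most $p-1$ degenerate $a$. The other is the weaker constant $\frac{p^m-p^j}{p^m}$.
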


\noindent\underline{\textit{Robust Repair Schemes.}} 
Finally, using insights from our theoretical bounds, we design efficient robust repair schemes that recover the lost symbol directly from the downloaded traces, in the presence of erroneous responses.
First, we introduce our baseline approach, \textbf{Robust Repair Scheme 1}, that recovers $f(0)$ directly while achieving the error-correcting capability guaranteed by the BCH bound. We begin by applying a specific bijective transformation to the received word that mirrors the algebraic shifts used in the BCH bound derivation. 
This transformation allows us to apply the Berlekamp--Welch algorithm \cite{guruswami2012} to decode up to a radius of $\lfloor\delta/2\rfloor$. Applying the inverse transformation, the output is mapped back to a polynomial corresponding to the supercode $\cT(\FF^*, k)$. Crucially, we prove that if $e \le \lfloor\delta/2\rfloor$, this scheme outputs a valid polynomial in $\cRT(\FF^*, k)$ and correctly repairs $f(0)$. 

As shown in Fig.~\ref{fig:comparison_of_decoders_and_distance_bound}, there remains a gap in between the number of erroneous traces correctable by Robust Repair Scheme 1 and the number guaranteed by the Character Sum bound. 
To approach the Character Sum bound, we introduce \textbf{Robust Repair Scheme 2}. The core idea is to systematically reduce the support by evaluating candidate values for certain coefficients of $f(x)$. As shown later in Theorem~\ref{thm:sup_is_union_of_cyc_coset}, the support of the repair-trace polynomial $F(x)=\tr(f(x)/x)$ is formed by a union of cyclotomic cosets. We observe that some specific cosets within this union correspond to exactly one coefficient of $f(x)$. 
Suppose that $C_*$ is such coset corresponding to coefficient $f_*$ and has coset representative $r_*$.
For a guess $f^*$, we subtract the corresponding terms from the received word $\boldsymbol{y} = (y_{\alpha})_{\alpha\in\FF^*}$, i.e., $y_{\alpha,f_*} = y_\alpha - \tr(f_* \alpha^{r_{*}})$ for each $\alpha\in\FF^*$. 
For the correct guess $f_*$, this transformation is a translation and we preserve the number of errors. Furthermore, for the correct guess $f_*$, the polynomial interpolating the uncorrupted symbols in $\boldsymbol{y}_{f_*}$ has a smaller support. This enlarges the pool of zero coefficients which can improve the BCH bound. Since the true $f_*$ is unknown, we iteratively test all candidates.
Specifically, for each candidate $f_*$, we apply the same transformation and decoding procedure in Robust Repair Scheme 1, albeit with a larger decoding radius. 
Incorrect guesses of $f_*$ are rejected by a consistency ensuring the the output polynomial lies in $\cRT(\FF^*,k)$. As we can see in Fig.~\ref{fig:comparison_of_decoders_and_distance_bound}, Robust Repair Scheme 2 achieves stronger error correction at the cost of an additional factor $n$ in computational complexity.

However, as we can see in Fig.~\ref{fig:comparison_of_decoders_and_distance_bound}, for certain dimensions $k$, there is still room for improvement in Robust Repair Scheme 2 compared to the number of errors guaranteed by the character sum bound. Therefore, we further analyze the error-correction capabilities by replacing the Berlekamp--Welch algorithm with the Guruswami--Sudan list decoding algorithm. Uniqueness of the valid output is guaranteed by capping the decoding radius at $\lfloor(d_{\text{CS}}-1)/2\rfloor$.

The remainder of this paper is organized as follows. In Section~\ref{sec:prelim}, we review the necessary preliminaries regarding Reed-Solomon codes, cyclic codes, and the Guruswami-Wootters trace repair framework, and formally state our problem. In Section~\ref{sec:support_of_rt}, we analyze the support of the repair-trace polynomials, establishing its relation to cyclotomic cosets, and introduce the cyclic supercode $\cT(\FF^*, k)$ of the repair-trace code. In Section~\ref{sec:dim_up_bound}, we introduce a greedy pruning algorithm to obtain an upper bound for the dimension that allows us to correct a given number of errors, and derive an explicit and optimal dimension upper bound for the binary case to correct at least one error. In Section~\ref{sec:lowerbounds}, we establish three theoretical lower bounds for the minimum distance of the repair-trace code, namely the BCH bound, the Degree bound, and the Character Sum bound. Finally, in Section~\ref{sec:decoders}, we propose two robust repair schemes. We present Robust Repair Scheme 1, which efficiently achieves the BCH bound, followed by Robust Repair Scheme 2 and its list-decoding extension, which systematically reduce the polynomial support to improve the error-correction capability closer to the theoretical guarantee by the Character Sum bound.

\begin{figure}
    \centering
    \includegraphics[width=\linewidth]{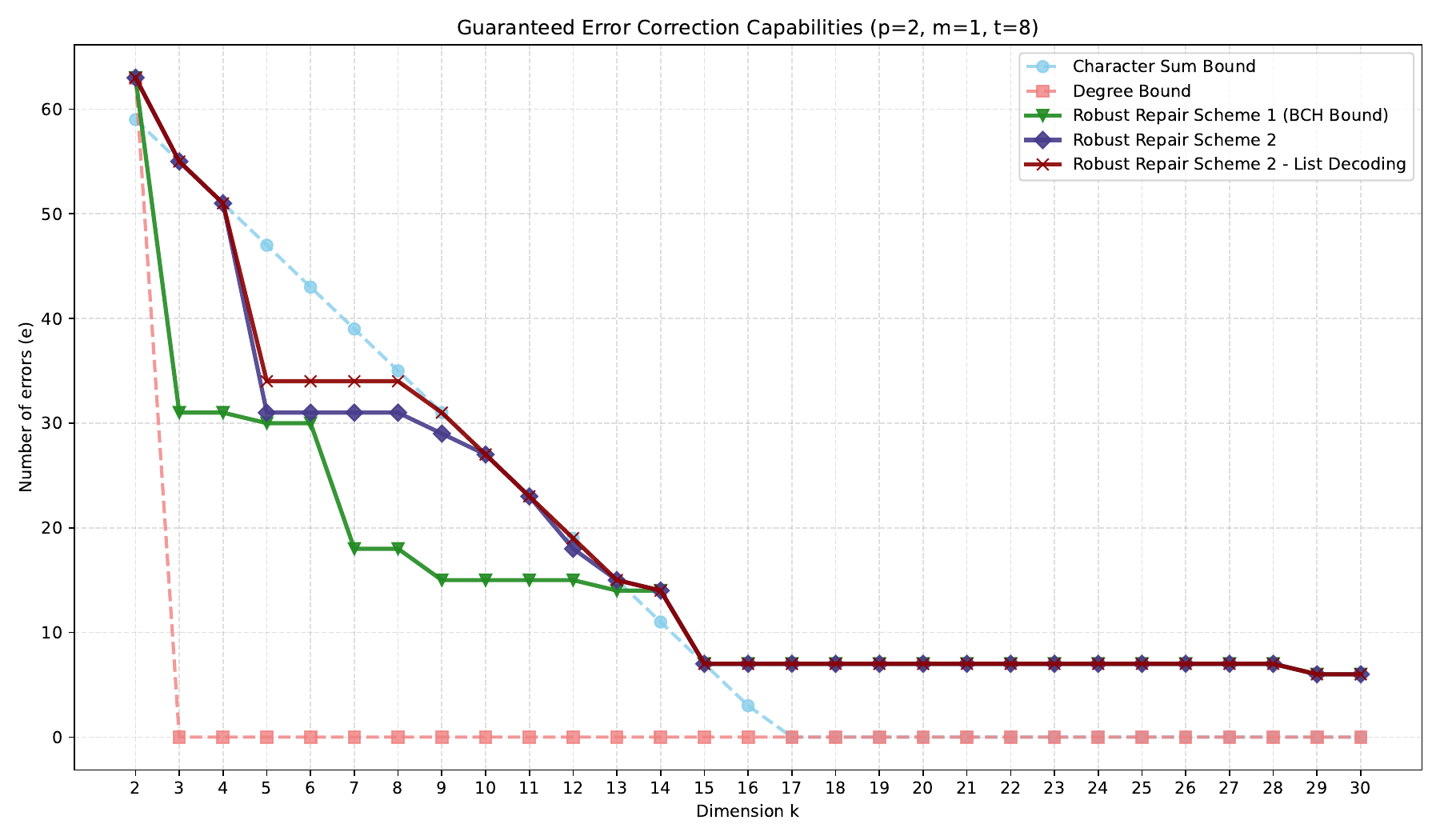}
    \caption{Number of correctable errors guaranteed by established lower bounds of minimum distance of $\cRT(\FF^*,k)$ with $\FF = \GF(2^8)$ and $\BB = \GF(2)$, varying $k$.}
    \label{fig:comparison_of_decoders_and_distance_bound}
\end{figure}

\section{Preliminaries}\label{sec:prelim}

Let $[n]$ denote the set of integers $\{1,\ldots,n\}$, and let $[a,b]$ denote the set of integers from $a$ to $b$. Let $\BB$ be the finite field with $p^m$ elements for some prime $p$ and integer $m\ge 1$, and let $\FF$ be its extension of degree $t\ge 1$. As a result, $|\FF|=p^{mt}$ and $|\BB|=p^m$. Let $\FF^* = \FF\setminus\{0\}$. 
We refer to the elements of $\FF$ as \textit{symbols} and to the elements of $\BB$ as \textit{sub-symbols}. We use $\FF[x]$ to denote the ring of polynomials over the finite field $\FF$.

It is known that an $\FF$-linear $[n, k]$ code $\cC$ is a $k$-dimensional subspace of $\FF^n$. We denote the dual of the code $\cC$ by $\cC^{\perp}$, and clearly, for each $\vc=(c_1,\ldots,c_n)\in\cC$ and $\vc^{\perp}=(c_1^{\perp},\ldots,c_n^{\perp})\in\cC^{\perp}$, it holds that $\sum_{i=1}^n c_i c_i^{\perp}=0$. We denote the minimum distance of $\cC$ by $d(\cC)$. The Singleton bound states that $d(\cC)\leq n-k+1$ (see, for example,~\cite{roth2006}). Codes that attain this bound are called maximum-distance separable (MDS) codes. In this paper, we focus on Reed-Solomon codes, which are MDS. Let us formally define them below.

\begin{definition}\label{RS}

The {\em Reed-Solomon} code $\rs(\cA,k)$ over finite field $\FF$ of dimension $k$ with evaluation points $\cA\subseteq\FF$ is defined as:
\begin{equation*}
\rs(\cA,k) \triangleq \{(f(\alpha))_{\alpha\in\cA}: f\in\FF[x], \deg(f(X))\le k-1\}\,,    
\end{equation*}
while the {\em generalized Reed-Solomon} code $\grs(\cA,k,\vlambda)$ of dimension $k$ with evaluation points $\cA\subseteq\FF$ and multiplier vector $\vlambda\in(\FF\setminus\{\mathbf{0}\})^n$ is defined as:  
{\small
\begin{equation*}
\grs(\cA,k,\vlambda) \triangleq \{(\lambda_\alpha r(\alpha))_{\alpha\in\cA}: f\in\FF[x], \deg(r(X))\le k-1\}\,.
\end{equation*}
}
\end{definition}

Clearly, the generalized Reed-Solomon code with multiplier vector $\vlambda=(1,\ldots,1)$ is \textcolor{black}{a Reed-Solomon code} of the same length and dimension. It is well known (see~\cite{roth2006}) that dual of $\rs(\cA,k)$ is $\grs(\cA,|\cA|-k,\vlambda)$ for $\vlambda=(\lambda_{\alpha})_{\alpha\in \cA}$ where
\begin{equation}
    \lambda_{\alpha_j}=\frac{1}{\prod_{\alpha_i\in\cA\setminus\{\alpha_j\}}(\alpha_j-\alpha_i)}.
\end{equation}

Note that when $\cA=\FF$, we have $\lambda_{\alpha}=1$ for all $\alpha\in\cA$. If it is clear from the context, we use $f(x)$ to denote the polynomial of degree at most $k-1$ corresponding to $\rs(\cA,k)$ and $r(x)$ to denote the polynomial of degree at most $|\cA|-k-1$ corresponding to the dual codeword in $\cC^{\perp}$.

A linear code ${\cal C}$ of length $n$ is {\em cyclic} if $(c_{n-1},c_0,\ldots,c_{n-2})\in {\cal C}$ whenever $(c_0,c_1,\ldots,c_{n-1})\in{\cal C}$. 
Typically, we express a codeword $(c_0,c_1,\ldots,c_{n-1})$ of cyclic code as a polynomial $c(x) = \sum_{i=0}^{n-1} c_ix^i$. Using this polynomial representation, it can be shown that if $\cC$ has dimension $k$, then
\begin{align}
        {\cal C} = \{c(x) = m(x)g(x) : \deg(m(x))\le k-1 \}
\end{align}
\noindent for some {\em generator polynomial }$g(x)$ of degree $n-k$ such that $g(x)$ divides $x^n-1$. The set of roots of $g(x)$ is called {\em the defining set of zeros} of ${\cal C}$. 

For cyclic codes, the structure of the defining set of zeros can be used to determine a lower bound on the minimum distance of the code. Examples include the famous BCH bound (see \cite{MacWilliams1977}) and its improvements by Hartmann-Tzeng~\cite{hartmanntzeng1972} and Roos~\cite{roos1982}. Let us formally state the BCH bound in Theorem~\ref{thm:bchbound}, as our schemes are inspired by it.

\begin{theorem}[BCH Bound~{\cite{MacWilliams1977}}]\label{thm:bchbound}
    Let $T$ be a defining set of zeros of a cyclic code $\cC$. If $T^*(c_1,c_2) \triangleq \{\alpha^{c_1+ic_2}: i\in[1,\delta-1]\}\subseteq T$ and $\gcd(n,c_2) = 1$, then the minimum distance of $\cC$ is at least $\delta$.
\end{theorem}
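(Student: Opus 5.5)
The plan is the classical Vandermonde argument, applied to the evaluation points $\alpha^{c_1+ic_2}$. Throughout, $\alpha$ is a primitive $n$-th root of unity in a suitable extension field, as is implicit in the description of the defining set of zeros. We argue by contradiction. Suppose $\cC$ contains a nonzero codeword $c(x)=\sum_{j=0}^{n-1}c_jx^j$ of weight $w\le \delta-1$. Let its support be $\{j_1,\ldots,j_w\}\subseteq[0,n-1]$, so that $c_{j_l}\neq 0$ for every $l$.

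Since every element of $T$ is a root of the generator polynomial $g(x)$, and $g(x)$ divides every codeword, the hypothesis $T^*(c_1,c_2)\subseteq T$ gives $c(\alpha^{c_1+ic_2})=0$ for all $i\in[1,\delta-1]$. In particular this holds for $i\in[1,w]$. Set $u_l\triangleq c_{j_l}\alpha^{c_1 j_l}$ and $\beta_l\triangleq\alpha^{c_2 j_l}$. The $w$ equations then read
\[
\sum_{l=1}^{w} u_l\,\beta_l^{\,i}=0,\qquad i=1,\ldots,w.
\]
This is a homogeneous square linear system in $(u_1,\ldots,u_w)$. Its coefficient matrix is $\bigl(\beta_l^{\,i}\bigr)_{i,l}$, whose determinant equals $\prod_{l}\beta_l\cdot\prod_{l<l'}(\beta_{l'}-\beta_l)$.

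The one step that needs care is showing this determinant is nonzero. First, each $\beta_l$ is nonzero because it is a power of a root of unity. Second, the $\beta_l$ are pairwise distinct: if $\beta_l=\beta_{l'}$, then $\alpha^{c_2(j_l-j_{l'})}=1$, so $n$ divides $c_2(j_l-j_{l'})$. Since $\gcd(n,c_2)=1$, this forces $n$ to divide $j_l-j_{l'}$, and hence $j_l=j_{l'}$ because both indices lie in $[0,n-1]$. This is exactly where the hypothesis $\gcd(n,c_2)=1$ enters.

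The system therefore has only the trivial solution, so every $u_l=0$. As $\alpha^{c_1 j_l}\ne 0$, this gives $c_{j_l}=0$ for all $l$, contradicting the choice of the support. Hence every nonzero codeword has weight at least $\delta$, that is, $d(\cC)\ge\delta$.
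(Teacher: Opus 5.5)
Your proof is correct and complete: it is the standard Vandermonde argument, and it uses the hypothesis $\gcd(n,c_2)=1$ exactly where it is needed, to make the $\beta_l=\alpha^{c_2 j_l}$ pairwise distinct. The paper gives no proof of its own and simply cites MacWilliams--Sloane, whose proof is this same argument (equivalently, $\alpha^{c_2}$ is again a primitive $n$-th root of unity, which reduces the statement to the case $c_2=1$).
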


Our further derivations rely on the fact that a Reed-Solomon code over $\FF$ with evaluation points $\cA = \FF^*$ is cyclic, as formally stated in the following proposition.

\begin{proposition}[{\cite{MacWilliams1977}}]
    Let $\cA = \FF\setminus\{0\}$ and $n = |\cA|$. The Reed-Solomon code $\rs(\cA, k)$ is a cyclic code with generator polynomial {$g(x) = \prod_{\gamma\in T} (x-\gamma)$}, where $T = \{\alpha^i:i\in[1,n-k]\}$.
\end{proposition}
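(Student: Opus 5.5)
The plan is to identify codewords of $\rs(\cA,k)$ with polynomials and check directly that each of them vanishes on $T$. Fix a primitive element $\alpha$ of $\FF$, so that $\cA=\FF^*=\{\alpha^0,\alpha^1,\ldots,\alpha^{n-1}\}$ with $n=p^{mt}-1$. Order the coordinates as $c_j=f(\alpha^j)$ for $0\le j\le n-1$, and attach to each codeword the polynomial $c(x)=\sum_{j=0}^{n-1}f(\alpha^j)x^j$.

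First, I will check cyclicity directly. The cyclic shift $(c_{n-1},c_0,\ldots,c_{n-2})$ has $j$-th entry $f(\alpha^{j-1})=\tilde f(\alpha^j)$, where $\tilde f(x)=f(\alpha^{-1}x)$. Since $\deg\tilde f=\deg f\le k-1$, the shifted word is again in $\rs(\cA,k)$. Linearity is clear, so the code is cyclic.

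Next, I will show that every codeword vanishes on $T$. Write $f(x)=\sum_{l=0}^{k-1}f_lx^l$. For $i\in[1,n-k]$,
\begin{equation*}
c(\alpha^i)=\sum_{j=0}^{n-1}\sum_{l=0}^{k-1}f_l\alpha^{jl}\alpha^{ij}=\sum_{l=0}^{k-1}f_l\sum_{j=0}^{n-1}\bigl(\alpha^{l+i}\bigr)^j .
\end{equation*}
The inner geometric sum equals $0$ unless $\alpha^{l+i}=1$, that is, unless $n\mid l+i$. Since $0\le l\le k-1$ and $1\le i\le n-k$, we have $1\le l+i\le n-1$, so $n\nmid l+i$ and every inner sum vanishes. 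Hence $g(x)=\prod_{\gamma\in T}(x-\gamma)$ divides $c(x)$. Moreover $g$ divides $x^n-1$, because the elements of $T$ are distinct $n$-th roots of unity.

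Finally, I will compare dimensions to get equality. The cyclic code generated by $g$ has dimension $n-\deg g=n-(n-k)=k$. On the other hand, $\rs(\cA,k)$ has dimension $k$ whenever $k\le n$, since a nonzero polynomial of degree at most $k-1$ cannot vanish at $n\ge k$ distinct points. So $\rs(\cA,k)$ is contained in the code generated by $g$, and the two have equal dimension, hence they coincide. The only delicate step is the index range argument $1\le l+i\le n-1$ for the vanishing of the character sums; the rest is bookkeeping about the coordinate ordering.
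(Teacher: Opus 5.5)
Your proof is correct. The paper gives no argument of its own and only cites MacWilliams--Sloane; your route (showing $c(\alpha^i)=\sum_{l} f_l\sum_{j}(\alpha^{l+i})^j=0$ because $1\le l+i\le n-1$, then matching dimensions via $k=n-\deg g$) is the standard textbook proof, and you correctly chose the coordinate ordering $c_j=f(\alpha^j)$ so that $T=\{\alpha^i: i\in[1,n-k]\}$ is the zero set. The separate cyclic-shift check is harmless but redundant: once $\rs(\cA,k)=\{m(x)g(x):\deg m\le k-1\}$ with $g\mid x^n-1$, cyclicity follows automatically.
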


\subsection{Guruswami-Wootters Trace Repair Framework}\label{sec::tracerepairframework}

In this section, we formally introduce the trace repair framework established by Guruswami and Wootters \cite{guruswamiwooters2017} to repair a single erasure in a {full-length} Reed-Solomon code. The main idea of the trace repair framework is to recover a symbol in $\FF$ by using sufficiently many sub-symbols in $\BB$. Specifically, suppose that $f(\alpha^*)$ is erased and let $\cA = \FF\setminus\{\alpha^*\}$ be the set of available evaluation points. 

In what follows, we consider the trace function $\tr:\FF\to\BB$ defined as
\begin{align*}
    \tr(x) = \sum_{i=0}^{t-1} {x^{|\BB|}}^i,\quad\text{for all }x\in\FF.
\end{align*}

Clearly, $\tr(x)$ is a polynomial in $x$ with degree $p^{mt-m}$. In what follows, we treat $\FF$ as a $\BB$-linear vector space of dimension $t$ and let $\{u_1,\ldots,u_t\}$ be a basis of $\FF$ over $\BB$. It is well known that there exists a {\em trace-dual basis} $\{\widetilde{u}_1,\ldots,\widetilde{u}_t\}$ for $\FF$ such that $\tr(u_i\widetilde{u}_j)=1$ if $i=j$, and $\tr(u_i\widetilde{u}_j)=0$ otherwise. The following result plays a crucial role in the repair framework.

\begin{proposition}[{\cite[Ch. 2]{Lidl1996}}]\label{prop:trace}
	Let $\{u_1,\ldots,u_t\}$ be a $\BB$-basis of $\mathbb{F}$. Then there exists a trace-dual basis  $\{\widetilde{u}_1,\ldots,\widetilde{u}_t\}$ and we can write each element $x\in\mathbb{F}$ as
	\begin{equation*}
		x=\sum_{i=1}^t\tr(u_ix)\widetilde{u}_i.
	\end{equation*}
\end{proposition}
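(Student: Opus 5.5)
The plan is to realise the trace-dual basis as the preimage of the coordinate functionals of $\{u_1,\ldots,u_t\}$ under the isomorphism $\FF\to\mathrm{Hom}_{\BB}(\FF,\BB)$ induced by the trace form. The expansion formula then follows by pairing with each $u_i$. Throughout, write $q=|\BB|=p^m$.

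\textbf{Step 1: basic properties of $\tr$.} I first check that $\tr$ maps $\FF$ into $\BB$ and is $\BB$-linear. The map $x\mapsto x^q$ is additive in characteristic $p$ and fixes $\BB$ pointwise. Hence each term $x^{q^i}$ is $\BB$-linear, and so is their sum. Moreover, $\tr(x)^q=\sum_{i=1}^{t}x^{q^i}=\tr(x)$ because $x^{q^t}=x$ for every $x\in\FF$. Elements fixed by $x\mapsto x^q$ are exactly the elements of $\BB$, so $\tr(x)\in\BB$.

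\textbf{Step 2: nondegeneracy (the main obstacle).} Define $\langle x,y\rangle=\tr(xy)$, a symmetric $\BB$-bilinear form on $\FF$. I must show that $\langle x,\cdot\rangle\equiv 0$ forces $x=0$.

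First, $\tr$ is not identically zero on $\FF$. As a polynomial it is nonzero of degree $q^{t-1}<q^t=|\FF|$, so it cannot vanish at every element of $\FF$.

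Now suppose $x\neq 0$ and $\tr(xy)=0$ for all $y\in\FF$. As $y$ ranges over $\FF$, the product $xy$ also ranges over all of $\FF$. Then $\tr$ would vanish identically, a contradiction.

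\textbf{Step 3: construct the dual basis and derive the expansion.} Consider the $\BB$-linear map $\Phi:\FF\to\mathrm{Hom}_{\BB}(\FF,\BB)$ given by $x\mapsto\langle x,\cdot\rangle$. By Step 2 it is injective. Both spaces have $\BB$-dimension $t$, so $\Phi$ is an isomorphism.

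Let $\varepsilon_j$ be the coordinate functionals, defined by $\varepsilon_j(u_i)=\delta_{ij}$. Set $\widetilde{u}_j=\Phi^{-1}(\varepsilon_j)$, so that $\tr(u_i\widetilde{u}_j)=\delta_{ij}$. Because $\Phi$ is an isomorphism and the $\varepsilon_j$ form a basis of the dual space, the elements $\widetilde{u}_1,\ldots,\widetilde{u}_t$ form a $\BB$-basis of $\FF$.

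Finally, write an arbitrary $x\in\FF$ as $x=\sum_j b_j\widetilde{u}_j$ with $b_j\in\BB$. Applying $\tr(u_i\,\cdot)$ and using $\BB$-linearity from Step 1 gives $\tr(u_ix)=\sum_j b_j\delta_{ij}=b_i$. Therefore $x=\sum_{i=1}^t\tr(u_ix)\widetilde{u}_i$, as claimed.
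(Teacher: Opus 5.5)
Your proof is correct. The paper gives no proof of its own and only cites Lidl--Niederreiter, Ch.~2; your argument is essentially the standard one found there: the trace form $(x,y)\mapsto\tr(xy)$ is nondegenerate, so every $\BB$-linear functional on $\FF$ has the form $x\mapsto\tr(\beta x)$, and the trace-dual basis consists of the preimages of the coordinate functionals of $\{u_1,\ldots,u_t\}$.
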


This means that to recover the value of $f(\alpha^*)$, we need to determine $\tr(u_if(\alpha^*))$ for all $i=1,\ldots,t$ using information downloaded from the available helper nodes. To achieve this, let
\begin{equation}
    r_i(x) = \frac{\tr(u_i(x-\alpha^*))}{x-\alpha^*},\quad\text{for all }i=1,\ldots,t.\label{eq:trace-definition}
\end{equation}

We can easily verify that $r_i$ is a polynomial of degree $p^{mt-m}-1$ and that $r_i(\alpha^*) = u_i$. If $k \le p^{mt} - p^{mt-m}$, then $r_i$ belongs to the dual code, and the following parity-check equations hold:
\begin{align}\label{eq:paritycheckGW}
    u_if(\alpha^*) = -\sum_{\alpha\in\FF\setminus\{\alpha^*\}} r_i(\alpha)f(\alpha).
\end{align}

Applying the trace function to both sides of \eqref{eq:paritycheckGW} and utilizing its $\BB$-linearity, along with the property that for $a \in \BB$, it holds that $\tr(a\cdot\alpha) = a\tr(\alpha)$, we obtain:
\begin{align*}
    \tr(u_if(\alpha^*)) &= -\sum_{\alpha\in\FF\setminus\{\alpha^*\}} \tr(r_i(\alpha)f(\alpha))\notag\\
    &= -\sum_{\alpha\in\FF\setminus\{\alpha^*\}} \tr(u_i(\alpha-\alpha^*))\tr\left(\frac{ f(\alpha)}{\alpha-\alpha^*}\right). \label{eq:trace-repair}
\end{align*}

Therefore, it suffices to download {\color{black}$\tr( f(\alpha)/(\alpha-\alpha^*))$} from each node $\alpha \in \FF\setminus\{\alpha^*\}$ to obtain $\tr(u_i f(\alpha^*))$ and, consequently, $f(\alpha^*)$. This motivates us to study the code formed by the downloaded traces, which we formally define below:
\begin{definition}\label{def:repairtracecode}
   For a fixed $\alpha^*\in\FF$, the {\em repair-trace} code with evaluation points $\cA = \FF\setminus\{\alpha^*\}$ is defined as:
   
\begin{equation*}
    \cRT(\cA,k) \triangleq \left\{ \left(\tr\left(\frac{f(\alpha)}{\alpha-\alpha^*}\right)\right)_{\alpha\in\cA} : \begin{aligned} &f\in\FF[x], \\ \deg(&f(x))\le k-1 \end{aligned} \right\} \,.
\end{equation*}
\end{definition}
\noindent The above techniques to repair of a single erased symbol in a Reed-Solomon code can be summarized in Theorem~\ref{thm:GW}.

\begin{theorem}[{\cite[\textcolor{black}{Guruswami-Wootters}]{guruswamiwooters2017}}]\label{thm:GW}
    Let $f(\alpha^*)$ be the erased code symbol. If $k\le p^{mt} - p^{mt-m}$ and given $\vc \in \cT(\cA,k)$, we can efficiently compute $f(\alpha^*)$.
\end{theorem}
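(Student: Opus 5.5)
The plan is to make rigorous the derivation sketched just before the statement. I read the hypothesis ``$\vc\in\cT(\cA,k)$'' as $\vc\in\cRT(\cA,k)$, i.e.\ $\vc=(c_\alpha)_{\alpha\in\cA}$ with $c_\alpha=\tr\bigl(f(\alpha)/(\alpha-\alpha^*)\bigr)$ for some $f$ with $\deg f\le k-1$. The argument has three steps: show that each $r_i$ in \eqref{eq:trace-definition} is a dual codeword, turn the resulting parity-check equation into a $\BB$-linear relation among the downloaded traces, and reconstruct $f(\alpha^*)$ with Proposition~\ref{prop:trace}.

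\emph{Step 1 (the $r_i$ are dual codewords).} Write $y=x-\alpha^*$. Then $\tr(u_iy)=\sum_{j=0}^{t-1}u_i^{|\BB|^j}y^{|\BB|^j}$ has no constant term, so it is divisible by $y$. The quotient is
\begin{equation*}
r_i(x)=u_i+\sum_{j=1}^{t-1}u_i^{|\BB|^j}(x-\alpha^*)^{|\BB|^j-1},
\end{equation*}
which is a polynomial of degree $p^{mt-m}-1$ with $r_i(\alpha^*)=u_i$. Since $\cA\cup\{\alpha^*\}=\FF$, the dual of $\rs(\FF,k)$ is $\grs(\FF,p^{mt}-k,\vlambda)$ with all $\lambda_\alpha=1$. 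Hence $r_i$ lies in the dual exactly when $p^{mt-m}-1\le p^{mt}-k-1$, which is the hypothesis $k\le p^{mt}-p^{mt-m}$. It follows that $\sum_{\alpha\in\FF}r_i(\alpha)f(\alpha)=0$. Isolating the term $\alpha=\alpha^*$ gives \eqref{eq:paritycheckGW}.

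\emph{Step 2 (linearisation via the trace).} Apply $\tr$ to \eqref{eq:paritycheckGW}. For $\alpha\ne\alpha^*$ we have $r_i(\alpha)=\tr(u_i(\alpha-\alpha^*))\cdot\frac{1}{\alpha-\alpha^*}$, and the first factor lies in $\BB$. By $\BB$-linearity of $\tr$,
\begin{equation*}
\tr\bigl(u_if(\alpha^*)\bigr)=-\sum_{\alpha\in\cA}\tr\bigl(u_i(\alpha-\alpha^*)\bigr)\,c_\alpha .
\end{equation*}
The coefficients $\tr(u_i(\alpha-\alpha^*))\in\BB$ depend only on the fixed basis and on $\alpha^*$, not on $f$. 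They can therefore be precomputed, and each $\tr(u_if(\alpha^*))$ is obtained as a $\BB$-linear combination of the entries of $\vc$.

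\emph{Step 3 (reconstruction and complexity).} Proposition~\ref{prop:trace} gives
\begin{equation*}
f(\alpha^*)=\sum_{i=1}^t\tr\bigl(u_if(\alpha^*)\bigr)\widetilde u_i .
\end{equation*}
The cost is $t$ inner products of length $n$ over $\BB$ plus $t$ scalar multiplications in $\FF$, i.e.\ $O(nt)$ field operations, which justifies ``efficiently''.

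The only step needing care is Step~1. One must check that the degree of $r_i$ is exactly $p^{mt-m}-1$, so that the condition on $k$ is both necessary and sufficient for the duality argument. One must also check that the all-ones multiplier vector applies, which holds because the code is full length over $\FF$. The remaining steps are direct manipulations.
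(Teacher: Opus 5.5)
Your proposal is correct and follows the same route as the paper's derivation preceding the theorem. That derivation has three steps: $r_i(x)=\tr(u_i(x-\alpha^*))/(x-\alpha^*)$ has degree $p^{mt-m}-1$ with $r_i(\alpha^*)=u_i$, so it is a dual codeword of the full-length code when $k\le p^{mt}-p^{mt-m}$; the trace turns the resulting parity checks into $\BB$-linear combinations of the downloaded traces; and the trace-dual basis recovers $f(\alpha^*)$.

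Reading $\cT(\cA,k)$ as $\cRT(\cA,k)$ is the intended meaning, and your explicit degree and complexity checks fill in details the paper leaves implicit. One small point: the multiplier formula actually gives $\lambda_\alpha=-1$ for every $\alpha$, since $\prod_{\gamma\in\FF^*}\gamma=-1$. A constant multiplier yields the same code, so your conclusion stands.
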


\subsection{Problem Statement and Reduction to $f(0)$}\label{sec:prob_statement_red_to_f0}

In a practical distributed storage environment, we assume that up to $e$ helper nodes may provide erroneous responses. Our primary objective is to determine the theoretical limits of $k$ and $e$ that guarantee successful recovery of $f(\alpha^*)$ through the trace-mapping framework, and to design efficient decoders that achieve these limits.

In this work, we focus strictly on the scenario of a full-length Reed-Solomon code, where traces are downloaded from all other nodes in the field, meaning $\mathcal{A} = \FF \setminus \{\alpha^*\}$. Extending to non-full-length Reed-Solomon codes where $\mathcal{A} \subset \FF \setminus \{\alpha^*\}$ is highly non-trivial and we defer it to the future work.

To simplify the analysis of the polynomial support, we observe that the repair problem for any arbitrary node $\alpha^*$ can be reduced to the repair of the node at evaluation point $0$. By employing a simple change of variables $u = x - \alpha^*$, the evaluations over $x \in \mathbb{F} \setminus \{\alpha^*\}$ map bijectively to evaluations over $u \in \FF^*$. Consequently, the downloaded traces can be rewritten as
\begin{equation*}
\tr\left(\frac{f(u+\alpha^*)}{u}\right) \quad \text{for } u \in \FF^*.
\end{equation*}

Now, let us define $g(u) \triangleq f(u+\alpha^*)$. Since the binomial expansion of $(u+\alpha^*)^i$ preserves the maximum degree, $g(u)$ is also a polynomial in $\mathbb{F}[u]$ of degree at most $k-1$. Furthermore, the mapping $f(x) \mapsto g(x) = f(x+\alpha^*)$ is an automorphism on the vector space of polynomials of degree at most $k-1$, the code $\cRT(\FF \setminus \{\alpha^*\}, k)$ is equivalent to the code $\cRT(\FF^*, k)$, up to a permutation of coordinates. Hence, without loss of generality, we set $\alpha^* = 0$ for the remainder of this paper. We focus on analyzing the code $\cRT(\FF^*, k)$ corresponding to the trace polynomial $F(x) = \tr(f(x)/x)$ used to repair the value $f(0)$. 

\section{The Support of the Repair-Trace Codes}\label{sec:support_of_rt}

In this section, we analyze the position of nonzero coefficients of $F(x) = \tr(f(x)/x)$ which we refer to as the \textit{support} of $F$, denoted by $\supp(F)$. To this end, we recall the cyclotomic coset definition.

\begin{definition}
    Fix $t$ and prime power $q$. A subset $\{a_1,\ldots,a_s\} \subset \{0,1,\ldots,q^t-2\}$ is called a \textit{cyclotomic coset} if $qa_j = a_{j+1}(\text{mod } q^t-1)$ for all $j\in\{1,\ldots, s-1\}$ and $qa_s = a_1(\text{mod } q^t-1)$. 
    The collection of all such cosets partitions $\{0,1,\ldots,q^t-2\}$ and we refer to it as the \textit{collection of cyclotomic cosets modulo $q^t-1$}.
\end{definition}

We call the smallest element in a cyclotomic coset as the \textit{representative} and we use $C_{r_i}$ to denote the cyclotomic coset with the representative $r_i$. We write the collection of cyclotomic cosets $\Xi$ in an increasing order of its representative, that is,
$$
\Xi = \{C_{r_1}, C_{r_2}, C_{r_3},\ldots, C_{r_{|\Xi|}}\},
$$
where $r_1<r_2<\cdots < r_{|\Xi|}$. 

Observe that the trace $\tr(\cdot)$ is $\BB$-linear, and the expansion
$\tr(f_{i+1}x^i) = \sum_{j=0}^{t-1} f_{i+1} x^{iq^{j}}$
has exponents $\{iq^j \bmod (q^t-1) : 0 \le j \le t-1\}$, which form a cyclotomic coset. 
Hence, the support of $F$ is a union of cyclotomic cosets, as we will further prove formally in Theorem~\ref{thm:sup_is_union_of_cyc_coset}. 
Moreover, the support is formed by including cosets in increasing order of their representatives up to a certain point, and always contains the coset with the largest representative, $C_{r_{|\Xi|}}$. 
We therefore recall the closed-form expression of $r_{|\Xi|}$ in Lemma~\ref{lem:largest_rep}, established by Ding \textit{et al.}~\cite{ding17}. 
While explicit expressions for the next largest representatives were also derived in~\cite{ding17}, we provide an alternative derivation in the binary case ($q=2$) based on the binary representation of $r_{|\Xi|}$.

\begin{lemma}[{Ding et al. \cite{ding17}}]\label{lem:largest_rep}
    Fix $t$ and prime power $q$. Then, $r_{|\Xi|} = q^t - q^{t-1} - 1$.
\end{lemma}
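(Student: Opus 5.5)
We need to prove that the largest coset representative modulo $q^t-1$ equals $q^t-q^{t-1}-1$. Throughout we identify each $a\in\{0,\ldots,q^t-2\}$ with its base-$q$ digit string $(a_{t-1},\ldots,a_0)$ of length $t$. Multiplication by $q$ modulo $q^t-1$ is a cyclic shift of this string. Hence $C_a$ is the set of integers whose strings are cyclic rotations of $a$'s string. The representative is the numerically smallest rotation, equivalently the lexicographically smallest rotation when read from the most significant digit. The claim is that $N:=q^t-q^{t-1}-1$ is the largest integer that equals the minimum over its own rotations. In base $q$, $N$ has digits $(q-2, q-1, q-1,\ldots,q-1)$: the leading digit is $q-2$ and the other $t-1$ digits are $q-1$. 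This holds because $q^t-1$ has all digits $q-1$, and subtracting $q^{t-1}$ lowers the leading digit by one.

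\textbf{Step 1: $N$ is a representative.} Every rotation of $N$ other than the identity moves the digit $q-2$ to a lower position. That rotation therefore has leading digit $q-1>q-2$, so it is strictly larger than $N$. Hence $N=\min C_N$. For $q=2$ the string is $0\,1\cdots1$, and the argument is the same.

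\textbf{Step 2: no larger integer in range is a representative.} Take $a$ with $N<a\le q^t-2$. Then either $a_{t-1}=q-1$, or $a_{t-1}=q-2$ with some lower digit below $q-1$. The latter case is impossible: with leading digit $q-2$, the value $N$ is the maximum possible, so $a\le N$.

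So $a_{t-1}=q-1$. Since $a\ne q^t-1$, some digit $a_j<q-1$. Rotate $a$ so that this digit $a_j$ becomes the leading digit. The rotated string has leading digit $a_j<q-1=a_{t-1}$, so it is strictly smaller than $a$. It lies in $C_a$, so $a$ is not the minimum of its coset.

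Combining the two steps, $r_{|\Xi|}=N$.

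The only delicate points are the correspondence between multiplication by $q$ modulo $q^t-1$ and cyclic digit shifts, including the wrap-around, and the exclusion of $q^t-1$ itself (all digits $q-1$), which is outside the range $\{0,\ldots,q^t-2\}$. These are routine. I expect no real obstacle beyond stating the digit-rotation correspondence carefully.
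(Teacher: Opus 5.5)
Your proof is correct: identifying multiplication by $q$ modulo $q^t-1$ with a cyclic rotation of base-$q$ digits, checking that $N=(q-2,q-1,\ldots,q-1)_q$ is the minimum of its orbit, and using a rotation to rule out every $a$ with $N<a\le q^t-2$ are all sound. In Step~2 it would be cleaner to say that $a>N$ forces $a_{t-1}\ge q-2$, and that $a_{t-1}=q-2$ forces $a\le N$, but this is cosmetic. The paper gives no proof of this lemma and cites Ding et al.; your argument is the general-$q$ version of the cyclic-shift reasoning the paper uses in its appendix for $q=2$ (where $r_{|\Xi|}$ is written as $0\boldsymbol{1}_{t-1}$), so it is essentially the same approach.
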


\begin{theorem}
\label{thm:sup_is_union_of_cyc_coset}
     Let $f(x)$ be a polynomial of degree at most $k-1$ in $\FF[x]$. Set $F(x) \triangleq \tr(f(x)/x)$, then its support is given a union of cyclotomic cosets. Specifically, we have that 
     $$\mathrm{supp}(F) \subseteq {\mathcal S}_k \triangleq C_{r_{|\Xi|}}\cup\left(\bigcup_{r\le k-2} C_{r}\right).$$
\end{theorem}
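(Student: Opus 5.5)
Write $f(x)=\sum_{i=0}^{k-1} f_i x^i$ and $q=p^m=|\BB|$, and view $F$ as a function on $\FF^*$, i.e., as a polynomial reduced modulo $x^{q^t-1}-1$ with exponents taken in $\{0,\ldots,q^t-2\}$. The plan is to split $f(x)/x$ into the constant-term part $f_0x^{-1}$ and the polynomial part $\sum_{i=1}^{k-1} f_i x^{i-1}$. Then we use $\BB$-linearity of the trace to write
\begin{equation*}
F(x)=\tr(f_0x^{-1})+\sum_{i=1}^{k-1}\tr(f_ix^{i-1}),
\end{equation*}
and handle each summand separately.

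For a general monomial term we expand $\tr(cx^{j})=\sum_{l=0}^{t-1} c^{q^l}x^{jq^l}$. On $\FF^*$ we have $x^{q^t-1}=1$, so each exponent $jq^l$ can be reduced modulo $q^t-1$. Hence the exponents appearing lie in $\{jq^l \bmod (q^t-1): 0\le l\le t-1\}$, which is exactly the cyclotomic coset containing $j \bmod (q^t-1)$. Applied to the summands with $1\le i\le k-1$, the exponent $j=i-1$ ranges over $0,\ldots,k-2$. Each such $j$ lies in some coset $C_r$ whose representative satisfies $r\le j\le k-2$, since the representative is the smallest element of the coset. So all these terms are supported in $\bigcup_{r\le k-2}C_r$. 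Cancellation between terms can only shrink the support, so containment is all we need.

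For the term $\tr(f_0x^{-1})$ we use $x^{-1}=x^{q^t-2}$ on $\FF^*$, so the relevant exponent is $j=q^t-2$ and its coset is $\{-q^l \bmod (q^t-1)\}=\{q^t-1-q^l: 0\le l\le t-1\}$. The smallest element of this set is $q^t-1-q^{t-1}=q^t-q^{t-1}-1$, which by Lemma~\ref{lem:largest_rep} equals $r_{|\Xi|}$. Therefore this term is supported in $C_{r_{|\Xi|}}$. Combining the two parts gives $\supp(F)\subseteq\cS_k$.

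The only mildly delicate step is this identification of the coset of $-1$ with $C_{r_{|\Xi|}}$. One must observe that multiplication by $q$ permutes the elements $q^t-1-q^l$ cyclically, and that $q^{t-1}$ is the largest power of $q$ below $q^t-1$, so the minimum of the coset is $q^t-1-q^{t-1}$. Everything else is a direct expansion of the trace.
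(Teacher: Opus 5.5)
Your proof is correct and follows the paper's argument essentially step for step. You split off $\tr(f_0x^{q^t-2})$, note that each $\tr(f_{i+1}x^i)$ is supported on the coset of $i$, whose representative is at most $i\le k-2$, and identify the coset $\{q^t-1-q^l\}$ of $q^t-2$ as $C_{r_{|\Xi|}}$ via Lemma~\ref{lem:largest_rep}; your conjugate coefficients $c^{q^l}$ and explicit reduction of exponents modulo $q^t-1$ are slightly more careful than the paper's write-up.
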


\begin{proof}
     We expand the polynomial $F$,
    \begin{align*}
        F(x) &= \tr\left(f_0/x + \sum_{i\in[0,k-2]} f_{i+1} x^{i}\right)\\
        &= \tr(f_0 x^{q^t-2}) + \sum_{i\in[0,k-2]} \tr(f_{i+1} x^i).
    \end{align*}
    
    The support of $F$ is the union of the supports of each traces $\tr(f_0 x^{q^t-2})$ and $\tr(f_{i+1} x^i)$ for all $i\in[0,k-2]$. Furthermore, the expansion of $\tr(f_{i+1}x^i) = \sum_{j=0}^{t-1} f_{i+1} x^{iq^{j}}$ has exponents $\{iq^j\mod(q^t-1)|j\in[0,t-1]\}$, which is a cyclotomic coset containing $i$. Then, for each $i\in[0,k-2]$, the cyclotomic coset $C_r$ containing $i$ must satisfy $r\le i \le k-2$. Hence, $$\mathrm{supp}\left(\sum_{i\in[0,k-2]} \tr(f_{i+1} x^i)\right)\subseteq \bigcup_{r\le k-2} C_{r}.$$
    
    We are left to show that $\mathrm{supp}(\tr(f_0x^{q^t-2})) \subseteq C_{r_{|\Xi|}}$, i.e., the cyclotomic coset containing $q^t-2$ is the one containing $r_{|\Xi|}$, the largest representative. Let $C'$ be the cyclotomic coset containing $q^t - 2$, then
    $$
    C' = \{q^t - 2, q^t - 1 - q, q^t - 1 - q^2, \ldots, q^t - 1 - q^{t-1}\}.
    $$
    Clearly, the representative of $C'$ is $q^t - q^{t-1} - 1$, which by Lemma~\ref{lem:largest_rep} is the largest representative.
\end{proof}


\begin{example}\label{ex:support_of_F}
Fix $\FF = \mathrm{GF}(2^5), \BB = \mathrm{GF}(2)$. Then, we have
\begin{align*}
    C_0 &= \{0\}\\
    C_1 &= \{1,2,4,8,16\}\\
    C_3 &= \{3,6,12,24,17\}\\
    C_5 &= \{5,10,20,9,18\}\\
    C_7 &= \{7,14,28,25,19\}\\
    C_{11} &= \{11,22,13,26,21\}\\
    C_{15} &= \{15,30,29,27,23\}
\end{align*}
Here, $r_1 = 0 < r_2 = 1 < r_3 = 3 < \cdots < r_7 = 15$. Suppose that $f$ is a polynomial of degree at most $9$, then 
$$f(x)/x = f_0/x + \sum_{i\in[1,9]}f_i x^{i-1} = \left(\sum_{i\in[1,9]}f_i x^{i-1}\right) + f_0 x^{30} ,$$
\begin{align*}
F(x) &= \tr(f(x)/x) \\
&= \tr \Biggl( f_0 x^{30} + f_1 + \sum_{i\in\{1,2,4,8\}}f_{i+1}x^i \\
&\qquad\qquad\qquad\quad\,\,\,\, + \sum_{i\in\{3,6\}}f_{i+1} x^i + f_6 x^5 + f_8 x^7 \Biggr) \\
&= \sum_{i\in C_{r_7} \cup \big(\bigcup_{j=1}^5 C_{r_j}\big)} F_i x^i.
\end{align*}
In other words, the support of $F(x) = \tr(f(x)/x)$ is $C_{r_7}\cup \left(\bigcup_{j\in[1,5]} C_{r_j}\right)$, where each $r_j \le 8$.
\end{example}

\subsection{Cyclic Supercode $\cT(\FF^*,k)$ of Repair-Trace Code $\cRT(\FF^*,k)$}

Motivated by Theorem~\ref{thm:sup_is_union_of_cyc_coset}, we can embed the repair-trace code into a $\FF$-linear supercode. Specifically, we consider the supercode $\cT(\FF^*,k)$, where
\begin{align}\label{eq:supercode}
    \cT(\FF^*,k) = \{(F(\alpha))_{\alpha\in\FF^*}: \supp(F) \subseteq \cS_k\}. 
\end{align}

In fact, the repair-trace code $\cRT(\FF^*,k)$ is a subfield subcode of $\cT(\FF^*,k)$, i.e.,
$\cRT(\FF^*,k) = \cT(\FF^*,k) \cap \BB^n$.
In \cite{kim2024decoding}, we established that any sparse Reed-Solomon code over $\FF$ with evaluation points $\cA = \FF^*$ is a cyclic code. Specializing this property to $\cT(\FF^*,k)$ yields Theorem~\ref{thm:rep_trace_is_cyc}.

\begin{theorem}\label{thm:rep_trace_is_cyc}
    The code $\cT(\FF^*,k)$ is a cyclic code with generator polynomial $g(x) = \prod_{\gamma\in T}(x-\gamma)$, where $T = \{\alpha^{n-i}: i \in \{0,1,\ldots,n-1\}\setminus {\mathcal{S}}_k\}$.
\end{theorem}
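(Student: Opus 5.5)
We prove Theorem~\ref{thm:rep_trace_is_cyc} by writing every codeword of $\cT(\FF^*,k)$ as a sum of monomial evaluations, checking cyclicity directly, and then reading off the zeros from the Fourier (Mattson--Solomon) relation between evaluations and coefficients. Let $\alpha$ be a primitive element of $\FF$ and $n=q^t-1$. Order the coordinates as $(F(\alpha^0),F(\alpha^1),\ldots,F(\alpha^{n-1}))$, so a codeword is $c_j=F(\alpha^j)=\sum_{i\in\cS_k}F_i\alpha^{ij}$.

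\textbf{Linearity and cyclicity.} The code is $\FF$-linear, since the set of polynomials with support in $\cS_k$ is an $\FF$-subspace and evaluation is linear. For cyclicity, take $F(x)=\sum_{i\in\cS_k}F_ix^i$ and define $G(x)=F(\alpha^{-1}x)$, which has coefficients $G_i=F_i\alpha^{-i}$. Then $\supp(G)=\supp(F)\subseteq\cS_k$, and $G(\alpha^j)=F(\alpha^{j-1})$. Hence the evaluation vector of $G$ is the cyclic shift $(c_{n-1},c_0,\ldots,c_{n-2})$, and it lies in $\cT(\FF^*,k)$.

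\textbf{Locating the zeros.} Identify the codeword with $c(x)=\sum_{j=0}^{n-1}c_jx^j$. For $0\le s\le n-1$,
\[
c(\alpha^{s})=\sum_{j}\sum_{i\in\cS_k}F_i\alpha^{(i+s)j}=\sum_{i\in\cS_k}F_i\sum_{j=0}^{n-1}\alpha^{(i+s)j}.
\]
The inner sum equals $n$ if $i\equiv -s \pmod n$ and $0$ otherwise. Since $n=q^t-1\equiv -1$ in $\FF$, we get $c(\alpha^{s})=-F_{(n-s)\bmod n}$. Consequently $c(\alpha^{n-i})=-F_i$ for $i\in\{0,\ldots,n-1\}$, where exponents are read mod $n$. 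Thus every codeword vanishes at $\alpha^{n-i}$ for each $i\notin\cS_k$, and $g(x)=\prod_{\gamma\in T}(x-\gamma)$ divides every $c(x)$.

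\textbf{Equality of the code with $\langle g\rangle$.} Conversely, suppose $c(x)$ is a multiple of $g(x)$ of degree less than $n$. Define $F_i=-c(\alpha^{n-i})$ and $F(x)=\sum_iF_ix^i$. Then $F_i=0$ for $i\notin\cS_k$, so $\supp(F)\subseteq\cS_k$. By the inverse DFT, the evaluations of $F$ at $\alpha^j$ recover $c_j$, which can be checked with the same orthogonality computation. Hence the code equals $\langle g(x)\rangle$. As a dimension check, $\dim\cT(\FF^*,k)=|\cS_k|=n-|T|=n-\deg g$, because the evaluation map is injective on polynomials of degree less than $n$ (Vandermonde on $n$ distinct points).

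The only delicate point is bookkeeping: indices must be reduced mod $n$ (the index $i=0$ gives $\alpha^{n}=\alpha^0$), and the sign and scalar $n\equiv -1$ must be tracked. These affect neither the zero set nor the argument. Alternatively, one could simply cite the sparse-RS cyclicity result of \cite{kim2024decoding}, specialized to the support $\cS_k$.
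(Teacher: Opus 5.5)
Your proof is correct, but it takes a different route from the paper. The paper gives no argument of its own: it specializes the result of \cite{kim2024decoding} that every sparse Reed--Solomon code with evaluation set $\FF^*$ is cyclic. You instead prove the statement from scratch through the Mattson--Solomon (discrete Fourier) relation $c(\alpha^{n-i})=-F_i$. From it you get both inclusions, $\cT(\FF^*,k)\subseteq\langle g(x)\rangle$ and $\langle g(x)\rangle\subseteq\cT(\FF^*,k)$, and the dimension count $|\cS_k|=n-\deg g$.

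The citation is brief and places the theorem inside the general sparse-RS framework. Your argument is self-contained and makes explicit the bookkeeping the paper uses immediately after the theorem: a vanishing coefficient $F_i$ with $i\notin\cS_k$ corresponds to the zero $\alpha^{n-i}$ of the code. This involves the index reversal $i\mapsto n-i$ and the scalar $n\equiv -1$ in $\FF$, and it is what justifies reading BCH runs directly off $\{0,\ldots,n-1\}\setminus\cS_k$ in Algorithm~D.

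Your argument never uses the cyclotomic structure of $\cS_k$, so it actually reproves the general sparse-RS statement for an arbitrary support set. Your separate cyclic-shift check is redundant once you have equality with $\langle g(x)\rangle$: $g(x)$ has distinct roots that are $n$-th roots of unity, so it divides $x^n-1$, and that is worth stating explicitly.
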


Theorem~\ref{thm:rep_trace_is_cyc} shows a strong connection between $\cS_k$ and the defining set of zeros of $\cT(\FF^*,k)$. Specifically, if the term $x^c$ is not in $F(x)$, i.e., $c\notin \cS_k$, then $\alpha^{n-c}\in T$, the defining set of zeros of $\cT(\FF^*,k)$.
Hence, the BCH bound guarantees a minimum distance of $\delta+1$ and error correction up to $\lfloor\delta/2\rfloor$ errors, if there are $\delta$ consecutive zero coefficients in $F$.

\section{Dimension Upper Bounds for Error Correction}\label{sec:dim_up_bound}

In this section, we provide an algorithm to obtain an upper bound for the dimension $k$ of a full-length Reed-Solomon code $\rs(\mathbb{F},k)$ such that a single erased node can be repaired by using the trace-repair scheme despite the presence of $e$ erroneous traces. 
To correct $e$ erroneous traces, the repair-trace code must possess a minimum distance of at least $2e+1$. Based on our previous observations, this distance is guaranteed if we can find an integer $b$ coprime to $n$ such that the permuted polynomial $G(x) = F(x^b)$ has a sequence of $2e$ consecutive zero coefficients. 

The zero coefficients of $F(x)$ correspond exactly to the cyclotomic cosets that are \textit{excluded} from the support $\mathcal{S}_k$. Recall from Theorem~1 that the support $\mathcal{S}_k$ always contains the coset $C_{r_{|\Xi|}}$ and is constructed by including cosets in increasing order of their representatives, starting from $r_1$. Therefore, the zero coefficients of $F(x)$ will naturally correspond to the omitted cosets. Since the support is built by including cosets with the smallest representatives first, the omitted cosets are strictly those with the large representatives (excluding the largest).
Because our goal is to find an \textit{upper bound} for the dimension $k$, we must maximize the size of the support $\mathcal{S}_k$. This implies we should exclude the minimum possible number of cosets needed such that the excluded cosets contain $2e$ consecutive terms. This naturally leads to a greedy pruning approach:
\begin{enumerate}
    \item We start by excluding only the second largest available coset, $C_{r_{|\Xi|-1}}$, and check if there exists any integer $b$ that is coprime to $n$
    that organizes these elements into $2e$ consecutive integers modulo $n$.
    \item If no such $b$ exists, the current set of zeros is insufficient. We then prune the next largest coset, $C_{r_{|\Xi|-2}}$, adding its elements to our pool of zeros, and test all permutations again.
    \item We iterate this process, pruning cosets one by one from the top down, stopping when we find a valid permutation $b$ that successfully forms the $2e$ consecutive terms.
\end{enumerate}
The index at which this process terminates directly dictates the optimal maximum dimension $k$. This procedure is formalized in Algorithm~K.

\begin{figure}[htbp]
    \centering
    \vspace{0cm}
    \fbox{
        \begin{minipage}{0.5\textwidth}
        \vspace{0.5em}
        \noindent \textbf{Algorithm K} (Greedy Pruning Algorithm for Dimension Upper Bound)

        \vspace{0.5em}
        \noindent\textbf{Input:} Erroneous traces correction capability $e$, code length $n = p^{mt}-1$, and the cyclotomic cosets $C_{r_1},\ldots, C_{r_{|\Xi|}}$.

        \noindent\textbf{Output: } Dimension upper bound $K_{\text{max}}$, or a failure flag.

        \begin{enumerate}[label = \textbf{K\arabic*.}, leftmargin=*, labelsep = 0.5em]
            \item \textbf{(Initialization)} Set $z \gets 0$, the removed cosets $\mathcal{R} \gets \emptyset$, \texttt{SeqFound} $\gets$ \texttt{False}.
            \item \textbf{(Pruning Iteration)} While \texttt{SeqFound} is \texttt{False} and $z < |\Xi|$:
            \begin{enumerate}[label = \textbf{K2.\arabic*.}, leftmargin = 0.7em, labelsep = 0.5em]
                \item Update $z \gets z + 1$ and $\mathcal{R} \gets \mathcal{R} \cup C_{r_{|\Xi|-z}}$.
                \item For each integer $b\in \{1,\ldots,n-1\}$ where $\gcd(b,n) = 1$:
                \begin{enumerate}[label = \textbf{K2.2.\arabic*.}, leftmargin =0.7em, labelsep = 0.5em]
                    \item Compute the permuted set $\cR^{(b)} \gets \{bx \pmod{n}:x\in\mathcal{R}\}$.
                    \item If $\cR^{(b)}$ contains a sequence of $2e$ consecutive integers modulo $n$, update \texttt{SeqFound} $\gets$ \texttt{True}.
                \end{enumerate}
            \end{enumerate}
            \item \textbf{(Output)} If \texttt{SeqFound} is \texttt{True}, return the upper bound $K_{\text{max}} \gets r_{|\Xi|-z} + 1$. Otherwise, return ``Not achievable''.
        \end{enumerate}
        \vspace{0.5em}
        \end{minipage}
    }
    \label{fig:greedy_pruning}
\end{figure}

\begin{figure}[htbp]
    \centering
    \begin{subfigure}[b]{0.5\textwidth}
        \centering
        \includegraphics[width=0.75\textwidth]{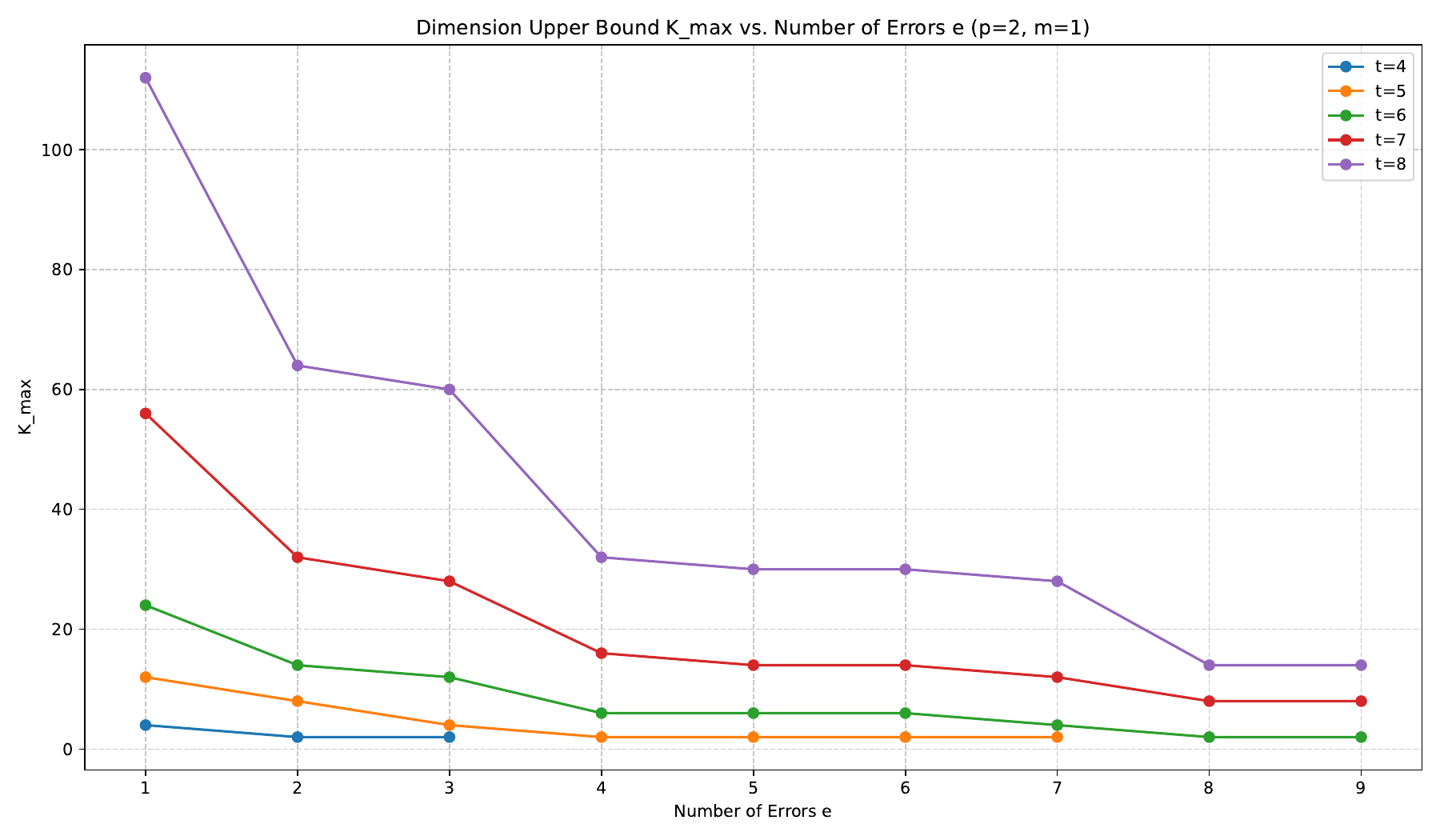}
        \caption{Dimension Upper Bound $K_{\max}$}
        \label{fig:upper_bound_plot}
    \end{subfigure}
    \hfill 
    \begin{subfigure}[b]{0.5\textwidth}
        \centering
        \includegraphics[width=0.75\textwidth]{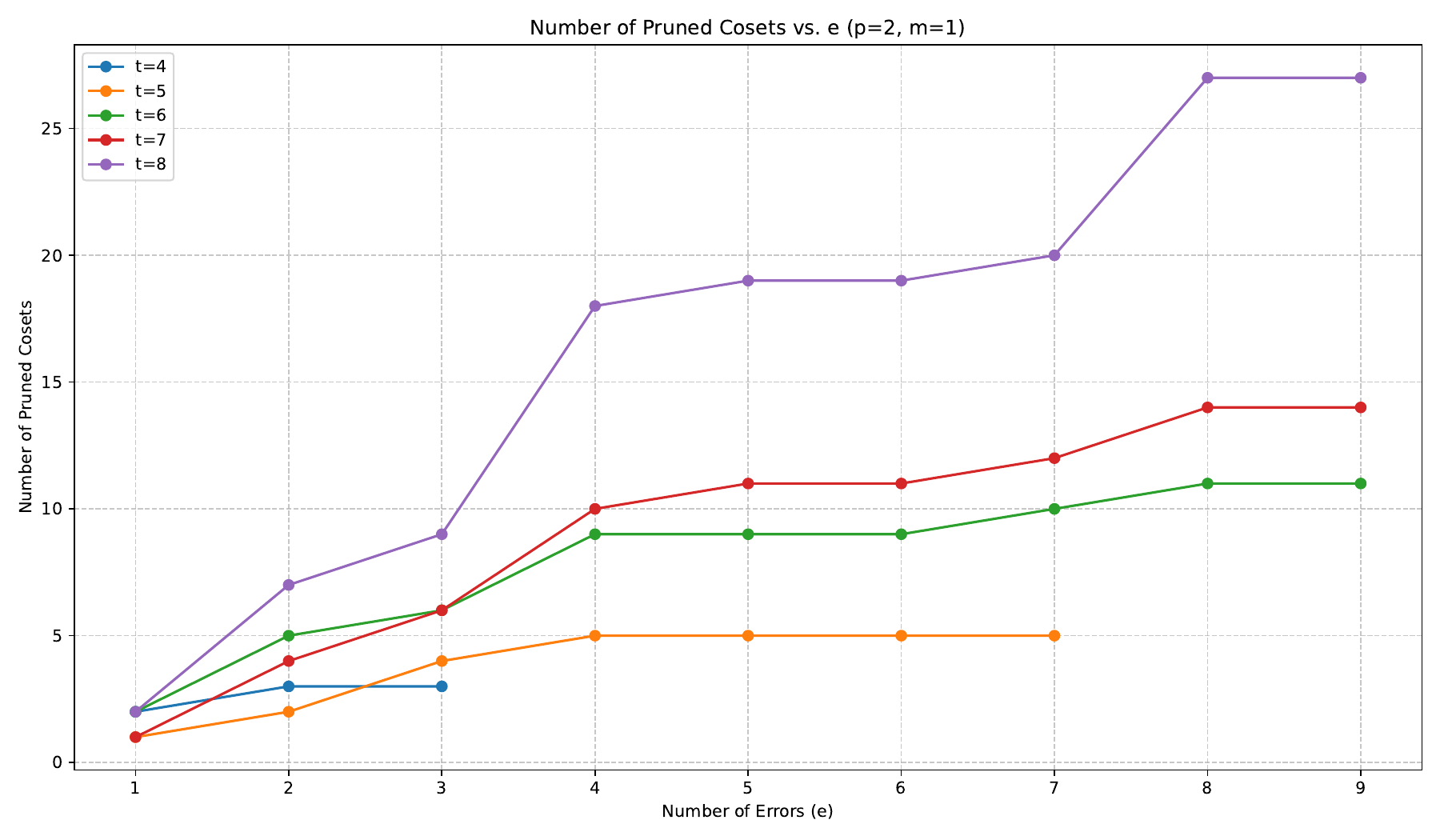}
        \caption{Number of Pruned Cosets}
        \label{fig:pruned_cosets_plot}
    \end{subfigure}
    
    \caption{The dimension upper bound $K_{\max}$ (left) and the corresponding number of pruned cosets required (right) versus the error-correction capability $e$, for $p=2, m=1$, varying $t$.}
    \label{fig:kmax_and_pruning_analysis}
\end{figure}

\subsection{Binary Case}

While Algorithm~K computes an upper bound on the dimension $k$ for general fields and error-correction capability $e$, obtaining a closed-form expression remains difficult. 
Nevertheless, for the case $q=2$, we leverage on the structure of the cyclotomic cosets to analytically derive the exact maximum dimension required to guarantee the single error-correction ($e=1$).
In particular, the bound given by Algorithm~K is determined by either the second or third largest representative. Moreover, this upper bound is sharp;
see Theorem~\ref{thm:binary_bound}.

Before that, we need the closed-form expression for coset representatives for the binary case $q = 2$, presented in Lemma~\ref{lem:binary_rep}. The general forms of these representatives for any prime power $q$ were established by Ding et al.~\cite{ding17}. In this work, we provide an alternative proof focused on the binary case in the Appendix.

\begin{lemma}[{Ding et al. \cite{ding17}}]\label{lem:binary_rep}
        If $q = 2$, then 
    \begin{enumerate}
        \item $r_{|\Xi|} = 2^{t-1} - 1$
        \item $r_{|\Xi|-1} = 2^{t-1} - 2^{\lfloor(t-1)/2\rfloor} - 1$, for $t\ge 3$
        \item $r_{|\Xi|-2} = 2^{t-1} - 2^{\lfloor(t-1)/2\rfloor+ 1} - 1$, for $t\ge 4$
    \end{enumerate} 
\end{lemma}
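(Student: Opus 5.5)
We argue entirely in binary. Write each $x\in\{0,\ldots,2^t-2\}$ as a $t$-bit string $x_{t-1}x_{t-2}\cdots x_0$, most significant bit first. Multiplication by $2$ modulo $2^t-1$ is a cyclic left rotation of this string. So a cyclotomic coset is the set of integers given by the rotations of one binary necklace, and its representative is the lexicographically (equivalently, numerically) smallest rotation. The all-ones string equals $2^t-1\equiv 0$ and is excluded. Hence every string under consideration contains at least one $0$, and its minimal rotation begins with $0$. Consequently every representative is at most $2^{t-1}-1$.

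\textbf{Item 1.} The string $01^{t-1}$ is its own minimal rotation, because every other rotation has leading bit $1$. Its value is $2^{t-1}-1$, which meets the upper bound just noted. This proves item 1, consistent with Lemma~\ref{lem:largest_rep}.

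\textbf{Item 2.} Every necklace with exactly one zero lies in $C_{2^{t-1}-1}$. So the second largest representative comes from necklaces with at least two zeros. For such a necklace, the minimal rotation starts with $0\,1^{a}\,0$, where $a$ is the \emph{shortest} run of ones lying between two cyclically consecutive zeros. Since comparison is lexicographic, a smaller prefix run $a$ gives a smaller number.

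We first treat necklaces with exactly two zeros, written as $0\,1^{a}\,0\,1^{b}$ with $a+b=t-2$ and $a\le b$. The largest possible $a$ is $a=\lfloor (t-2)/2\rfloor$, which gives $b=\lceil (t-2)/2\rceil=\lfloor (t-1)/2\rfloor$. The second zero then sits at bit position $b$, so the value is
\[
2^{t-1}-1-2^{\lfloor (t-1)/2\rfloor}.
\]
Now consider a necklace with $j\ge 3$ zeros. Its shortest gap is at most $\lfloor (t-j)/j\rfloor\le\lfloor (t-2)/2\rfloor$. If the gap is strictly smaller, the necklace's representative is smaller at once. If the gaps are equal, the prefixes $0\,1^a\,0$ agree, and the $j$-zero string has an extra $0$ in a later position where the two-zero string has a $1$, so it is again smaller. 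This proves item 2 for $t\ge 3$; the condition $t\ge 3$ is exactly what is needed for a second zero to exist.

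\textbf{Item 3.} Among two-zero necklaces, the next candidate is $0\,1^{a-1}\,0\,1^{b+1}$. This is a genuine minimal rotation because $a-1\le b+1$, and it is nonnegative because $a-1\ge 0$, which holds when $t\ge 4$. Its value is
\[
2^{t-1}-1-2^{b+1}=2^{t-1}-2^{\lfloor (t-1)/2\rfloor+1}-1.
\]
It remains to show that no necklace with $j\ge 3$ zeros exceeds this value. Such a necklace has shortest gap at most $a-1$: for $t\ge 4$ one checks $\lfloor (t-3)/3\rfloor\le\lfloor (t-2)/2\rfloor-1$. If the shortest gap is below $a-1$, the representative is smaller. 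If it equals $a-1$, the prefix-tie argument from item 2 applies: the extra later zero makes the $j$-zero string smaller.

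The main obstacle is the careful lexicographic comparison between necklaces with different numbers of zeros, including the edge cases of periodic necklaces (for example $a=b$) and small $t$. I would handle this through the prefix-tie argument described above, and verify $t=4$ and $t=5$ directly against Example~\ref{ex:support_of_F}. For $t=5$ the formulas give $r_{|\Xi|}=15$, $r_{|\Xi|-1}=11$ and $r_{|\Xi|-2}=7$, matching the coset representatives listed there.
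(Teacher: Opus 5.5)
Your proof is correct and follows essentially the same route as the paper's appendix proof. Both arguments treat multiplication by $2$ as a cyclic rotation of the $t$-bit string, note that all one-zero strings form $C_{r_{|\Xi|}}$, and show that strings with more zeros never beat the best (respectively second-best) two-zero representative via $0\,1^{a'}\,0\cdots \le 0\,1^{a'}\,0\,1^{t-a'-2}$. Your shortest-gap bound $\lfloor (t-j)/j\rfloor$ covers every $j\ge 3$, which is slightly more complete than the paper, since the paper only writes out the comparison against three-zero strings.
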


With the explicit closed-form expressions for the representatives established, we can now analyze the structure of the largest cyclotomic cosets. The following theorem demonstrates that these pruned cosets, specifically $C_{r_{|\Xi|-1}}$ for odd $t$ and $C_{r_{|\Xi|-1}}\cup C_{r_{|\Xi|-2}}$ for even $t$, contains two consecutive terms.
\begin{theorem}\label{thm:binary_consecutive}
    If $q = 2$ and $t\ge 3$ is odd, then the cyclotomic coset $C_{r_{|\Xi|-1}}$ contains two consecutive terms, specifically, $2r_{|\Xi|-1}$ and $2r_{|\Xi|-1}-1$. If $q=2$ and $t \ge 4$ is even, then the cyclotomic cosets $C_{r_{|\Xi|-1}}$ and $C_{r_{|\Xi|-2}}$ contains two consecutive terms, specifically, $2^2r_{|\Xi|-1}$ and $2 {r_{|\Xi|-2}}$.
\end{theorem}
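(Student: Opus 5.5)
The plan is to work with binary representations of residues modulo $n=2^t-1$. The key fact is that multiplying by $2$ modulo $2^t-1$ cyclically rotates the $t$-bit binary expansion. So a residue lies in $C_r$ exactly when its $t$-bit string is a cyclic rotation of that of $r$. Moreover, a residue whose string has all ones except zeros at positions $i,j$ equals $n-2^i-2^j$. I will take the closed forms of the representatives from Lemma~\ref{lem:binary_rep}, read off the positions of their zero bits, rotate, and compare the resulting integers.

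\textbf{Odd $t$.} Write $t=2s+1$ with $s\ge1$, so that $\lfloor (t-1)/2\rfloor=s$ and $r_{|\Xi|-1}=2^{2s}-2^s-1$.
\begin{itemize}
\item Since $2^{2s}-1$ is the all-ones string on positions $0,\dots,2s-1$, the $t$-bit string of $r_{|\Xi|-1}$ has zeros exactly at positions $s$ and $2s$.
\item Hence $2r_{|\Xi|-1}=2^{2s+1}-2^{s+1}-2$, which as a $t$-bit string has zeros at positions $0$ and $s+1$. Indeed $2r_{|\Xi|-1}=n-2^0-2^{s+1}$, and it is automatically in the coset.
\item The other element is $2r_{|\Xi|-1}-1=n-2-2^{s+1}$, whose string has zeros at positions $1$ and $s+2$.
\item Rotating the zero set $\{s,2s\}$ of $r_{|\Xi|-1}$ by $s+2$ positions modulo $t$ gives $\{2s+2,3s+2\}\equiv\{1,s+1\}$.
\end{itemize}
I must recheck the last bullet carefully, since the target zero set is $\{1,s+2\}$. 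The cyclic gaps of $\{s,2s\}$ are $s$ and $s+1$, and $\{1,s+2\}$ has gaps $s+1$ and $s$. So some rotation matches: $\{s,2s\}+(s+2)\equiv\{2s+2,3s+2\}\equiv\{1,s+1\}$ is the wrong pairing, whereas $2s\mapsto 1$ requires a shift of $2$, and then $s\mapsto s+2$. Thus $2r_{|\Xi|-1}-1\equiv 2^2r_{|\Xi|-1}\pmod n$, which places it in $C_{r_{|\Xi|-1}}$. Together with $2r_{|\Xi|-1}\in C_{r_{|\Xi|-1}}$, this gives the claimed consecutive pair.

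\textbf{Even $t$.} Write $t=2s+2$ with $s\ge1$, so again $\lfloor (t-1)/2\rfloor=s$.
\begin{itemize}
\item $r_{|\Xi|-1}=2^{2s+1}-2^s-1$ has zero bits at positions $s$ and $2s+1$.
\item $r_{|\Xi|-2}=2^{2s+1}-2^{s+1}-1$ has zero bits at positions $s+1$ and $2s+1$.
\item Rotating $r_{|\Xi|-1}$ by two positions gives zeros at $\{s+2,\,2s+3\}\equiv\{s+2,1\}$. So $4r_{|\Xi|-1}\equiv n-2-2^{s+2}$.
\item Rotating $r_{|\Xi|-2}$ by one position gives zeros at $\{s+2,\,2s+2\}\equiv\{s+2,0\}$. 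So $2r_{|\Xi|-2}\equiv n-1-2^{s+2}$.
\end{itemize}
Hence $2r_{|\Xi|-2}=4r_{|\Xi|-1}+1$ as residues in $[0,n-1]$, so the two cosets jointly contain consecutive terms. The statement's phrase ``$2^2r_{|\Xi|-1}$'' should be read modulo $n$.

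The main obstacle is bookkeeping rather than ideas. I need to check that the reduced values lie in $[0,n-1]$ and are not wrapped, which is clear since each has exactly two zero bits. I also need the index conventions modulo $t$ to be right; the off-by-one I caught in the odd case shows this is where errors creep in. I will verify the rotations numerically for $t=5$ and $t=6$. For $t=5$ we have $r_{|\Xi|-1}=11$, and $C_{11}$ contains both $22$ and $21$, matching Example~\ref{ex:support_of_F}. For $t=6$ we have $r_{|\Xi|-1}=29$ and $r_{|\Xi|-2}=27$, with $4\cdot29\equiv53$ and $2\cdot27=54$ modulo $63$.
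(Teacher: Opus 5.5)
Your method (read off the zero-bit positions and use that multiplication by $2$ modulo $n$ rotates the $t$-bit string) is the paper's. Your even case is correct and agrees with the paper's computation; your $s$ is the paper's $k-1$. The odd case, however, ends on a false identity. The residue $2r_{|\Xi|-1}-1=n-2-2^{s+1}=n-2^{1}-2^{s+1}$ has its zero bits at positions $1$ and $s+1$, not $1$ and $s+2$. So your first computation, rotating $\{s,2s\}$ by $s+2$ to obtain $\{1,s+1\}$, was already correct. Your ``correction'' to a shift of $2$ leads to $2r_{|\Xi|-1}-1\equiv 2^2r_{|\Xi|-1}\pmod n$, which is false. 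In fact $2^2r_{|\Xi|-1}\equiv n-2-2^{s+2}$, and for $t=5$ this gives $4\cdot 11\equiv 13\neq 21$. The correct identity is $2r_{|\Xi|-1}-1\equiv 2^{s+2}r_{|\Xi|-1}\pmod n$; for $t=5$, $2^4\cdot 11=176\equiv 21$. This is exactly the paper's $2^{k+1}(2r_{|\Xi|-1})$. Your $t=5$ sanity check missed the error because it only cited $21\in C_{11}$ from Example~\ref{ex:support_of_F}, rather than testing the congruence you actually asserted.

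Your $t=6$ check also uses the wrong values. Lemma~\ref{lem:binary_rep} gives $r_{|\Xi|-1}=2^5-2^2-1=27$ and $r_{|\Xi|-2}=2^5-2^3-1=23$. The number $29$ is not a coset representative at all; it lies in $C_{23}$. With the correct values, $4\cdot 27\equiv 45=n-2-2^{4}$ and $2\cdot 23=46=n-1-2^{4}$, which confirms your general even-case formulas. Once you restore the odd-case shift to $s+2$, the argument is complete and essentially identical to the paper's.
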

\begin{proof}
    \underline{For odd $t=2k+1$.} The term $2r_{|\Xi|-1}$ is, clearly, in $C_{r_{|\Xi|-1}}$ and we are left to show that $2r_{|\Xi|-1}-1$ is also in it. From Lemma~\ref{lem:binary_rep}, we know that $r_{|\Xi|-1}$ has the binary representation $0\boldsymbol{1}_{k-1}0 \boldsymbol{1}_{k}$. Then,
    \begin{itemize}
        \item $2r_{|\Xi|-1}$ has the binary representation $\boldsymbol{1}_{k-1}0 \boldsymbol{1}_{k-1}10$, and
        \item $2r_{|\Xi|-1}-1$ has the binary representation $\boldsymbol{1}_{k-1}0 \boldsymbol{1}_{k-1}01$.
    \end{itemize}
    Clearly, the binary representation $\boldsymbol{1}_{k-1}0 \boldsymbol{1}_{k-1}01$ can be obtained by rotating $\boldsymbol{1}_{k-1}0 \boldsymbol{1}_{k}0$, $k+1$ times. In other words, $2r_{|\Xi|-1}-1 = 2^{k+1} (2r_{|\Xi|-1})\in C_{r_{|\Xi|-1}}$.

    \underline{For even $t=2k$.} The terms $2^2r_{|\Xi|-1}$ and $2 {r_{|\Xi|-2}} $ are both in $C_{r_{|\Xi|-1}}$ and $C_{r_{|\Xi|-2}}$, respectively. For even $t = 2k$, from Lemma~\ref{lem:binary_rep}, $r_{|\Xi|-1}$ and $r_{|\Xi|-2}$ has the binary representation $0\boldsymbol{1}_{k-1} 0\boldsymbol{1}_{k-1}$ and $0\boldsymbol{1}_{k-2} 0 \boldsymbol{1}_{k}$, respectively. Then,
    \begin{itemize}
        \item $2^2r_{|\Xi|-1}$ has the binary representation $\boldsymbol{1}_{k-2} 0\boldsymbol{1}_{k-1}01$, and
        \item $2r_{|\Xi|-2}$ has the binary representation $\boldsymbol{1}_{k-2}0\boldsymbol{1}_{k-1}1 0$,
    \end{itemize}
    which are two consecutive terms.
\end{proof}

Theorem~\ref{thm:binary_consecutive} establishes the existence of a sequence of two consecutive zero coefficients in the permuted repair-trace polynomial when the appropriate cosets are pruned. By the BCH bound (Theorem~\ref{thm:bchbound}), this structural gap guarantees a minimum distance of at least 3, allowing for the correction of a single error. We can now formally prove Theorem~\ref{thm:binary_bound}.

\begin{proof}[Proof of Theorem~\ref{thm:binary_bound}]
    The upper-bound $K$ is a direct consequence of Theorem~\ref{thm:binary_consecutive} and the fact that $\cRT(\FF^*,k)\subset \cT(\FF^*,k)$. Now, let's prove that it is optimal. For odd $t\ge 3$, take $$f(x) = \sum_{i\in [1,|\Xi|]} x^{1+r_i}.$$ Then,
    \begin{align*}
    F(x) &= \tr(f(x)/x) =  \tr\left(\sum_{i\in[1,|\Xi|]}x^{r_i}\right)\\ &= \sum_{i\in[0,2^t-2]} x^i = \begin{cases}
        1&\text{if }x=1,\\
        0&\text{if }x\ne 1.
    \end{cases}
    \end{align*}
    
    \noindent For even $t\ge 4$, take $$f(x) = \sum_{i\in[1,|\Xi|]}\left(1+\omega^{-r_i(2^{t/2}-1)}\right)x^{1+r_i}.$$ Then,
    \begin{align*}
    F(x) &= \tr(f(x)/x) = \tr\left(\sum_{i\in[1,|\Xi|]}\left(1+\omega^{-r_i(2^{t/2}-1)}\right)x^{r_i}\right)\\
    &= \sum_{j\in [0,2^t-2]} \left(1+\omega^{-j(2^{t/2}-1)}\right)x^j\\
    &= \sum_{j\in[0,2^t-2]} x^j + \sum_{j\in[0,2^t-2]} \left(\omega^{-(2^{t/2}-1)}x\right)^j\\
    &= \begin{cases}
        1&\text{if } x = 1 \text{ or } x = \omega^{2^{t/2}-1},\\
        0&\text{otherwise}.
    \end{cases}
    \end{align*}
    Furthermore, since $\omega^{-r_{|\Xi|-1} (2^{t/2}-1)} = \omega^{2^{-1}(2^{t}-1)}=1$, we have $\supp(F) = [0,2^t-2]\setminus C_{|\Xi|-1}$. Therefore, $F(x)$ is a corresponding polynomial of $\cRT(\FF^*,k)$.
\end{proof}
\section{Distance Lower Bounds}\label{sec:lowerbounds}

\subsection{BCH Bound}
\label{subsec:bch_bound}

Recall that the repair-trace code $\cRT(\FF^*,k)$ is a subfield subcode of $\cT(\FF^*,k)$. Therefore, the BCH bound of $\cT(\FF^*,k)$ serves as a lower bound for the minimum distance of $\cRT(\FF^*,k)$. This relationship naturally yields the lower bound given in Theorem~\ref{thm:distance_lower_bound}(i). Hence, in this section, we focus on obtaining the BCH bound of $\cT(\FF^*,k)$.
As established in Theorem~\ref{thm:rep_trace_is_cyc}, the code $\mathcal{T}(\mathbb{F}^*, k)$ is a cyclic code of length $n = q^t - 1$, and its defining set of zeros is governed by the zero coefficients of the repair-trace polynomial $F(x) = \mathrm{Tr}(f(x)/x)$, which belong to the set $[0,q^t-2]\setminus \cS_k$. We denote this set of zero coefficients as $\mathcal{R}_k \triangleq [0,q^t-2] \setminus \mathcal{S}_k$. 

To obtain the strongest distance guarantee, we search for the longest sequence of zero coefficients spaced by an integer $b$ coprime to $n$. Equivalently, for each $b$ with $\gcd(b,n) = 1$, we consider $\cR_k^{(b)} \triangleq \{br : r\in\cR_k\}$ and find $\delta_b$, the length of the longest sequence of consecutive integers in it. By maximizing $\delta_b$ over all $b$ coprime to $n$, the BCH bound guarantees a minimum distance of at least $\max_{b}\delta_b + 1$. We formalize this procedure in Algorithm~D, which exhaustively searches for the optimal permutation multiplier $b$ over the zero coefficients set $\mathcal{R}_k$ and returns the highest guaranteed minimum distance $d_{\text{BCH}}$.

\begin{figure}[htbp]
    \centering
    \fbox{
        \begin{minipage}{0.46\textwidth}
        \vspace{0.5em}
        \noindent \textbf{Algorithm D} (BCH Distance Lower Bound)

        \vspace{0.5em}
        \noindent\textbf{Input:} Code length $n=q^t-1$, dimension $k$, and the polynomial support $\mathcal{S}_k$.

        \noindent\textbf{Output: } The BCH lower bound on the minimum distance $d_{BCH}$, optimal multiplier $b_{\text{opt}}$, maximum zero-sequence length $\delta_{\text{opt}}$, and sequence start index $s_{\text{opt}}$.

        \begin{enumerate}[label = \textbf{D.\arabic*.}, leftmargin=*, labelsep = 0.5em]
            \item \textbf{(Initialization)} Set $\delta_{\text{opt}} \gets 0$, $b_{\text{opt}} \gets 1$, $s_{\text{opt}} \gets 0$, and the zero coefficients set $\mathcal{R}_k \gets \{0, 1, \dots, n-1\} \setminus \mathcal{S}_k$.
            \item \textbf{(BCH Bound Optimization)} For each integer $b \in \{1, \dots, n-1\}$ such that $\gcd(b,n) = 1$:
            \begin{enumerate}[label = \textbf{D.2.\arabic*.}, leftmargin = 0.7em, labelsep = 0.5em]
                \item Compute the permuted zero-coefficient set $\mathcal{R}_k^{(b)} \gets \{(b \cdot r) \pmod n : r \in \mathcal{R}_k\}$.
                \item Find $\delta_b$, the length of the longest sequence of consecutive integers (modulo $n$) in $\mathcal{R}_k^{(b)}$, and let $s_b$ be its starting index.
                \item If $\delta_b > \delta_{\text{opt}}$, update $\delta_{\text{opt}} \gets \delta_b$, $b_{\text{opt}} \gets b$, $s_{\text{opt}} \gets s_b$.
            \end{enumerate}
            \item \textbf{(Output)} Return $d_{BCH} \gets \delta_{\text{opt}} + 1$, along with parameters $(\delta_{\text{opt}},b_{\text{opt}}, s_{\text{opt}})$.
        \end{enumerate}
        \vspace{0.5em}
        \end{minipage}
    }
    \label{fig:bch_bound}
\end{figure}

Despite having the BCH bound that can be obtained algorithmically in polynomial time, it is highly valuable to establish explicit, closed-form theoretical distance guarantees. In the next subsections, we establish two other explicit lower bounds: the Degree bound and the Character Sum bound. The Degree bound, in fact, is a special case of the BCH bound when $b = 1$ and is determined by finding consecutive zero coefficients starting from a specific position (which may not always be the longest sequence). {\color{black} On the other hand, the Character Sum bound is obtained analytically by expressing the codeword weights using additive characters and bounding their non-degenerate sums via Weil-like estimates over finite fields.} In terms of performance, the Character Sum bound often yields a higher distance guarantee than the BCH bound for small to moderate values of $k$ (see Fig.\ref{fig:comparison_of_decoders_and_distance_bound} for some numerical example). We discuss in more detail and prove these explicit bounds in the next subsections.
\subsection{Degree Bound}
We consider the code $\cT_1 \triangleq \left\{\left(\alpha^{p^{mt-m}} c_{\alpha}\right)_{\alpha\in\FF^*} : \vc \in \cRT(\FF^*,k)\right\}$
and show that it is a subcode of some Reed-Solomon code.

\begin{proposition}\label{prop:grs-1}
Let $\Delta$ be defined in Theorem~\ref{thm:distance_lower_bound}. If $\Delta<|\cA|$, then
$\cT_1\subseteq\rs(\FF^*,\Delta+1)$.
\end{proposition}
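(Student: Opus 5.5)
Write $q=p^m$ and $Q=p^{mt}$, so $\tr(y)=\sum_{j=0}^{t-1}y^{q^j}$ and $Q/q=p^{mt-m}$. The plan is to show that each codeword of $\cT_1$ is the evaluation on $\FF^*$ of a polynomial of degree at most $\Delta$. Take $\vc\in\cRT(\FF^*,k)$ coming from $f(x)=\sum_{i=0}^{k-1}f_ix^i$. For $\alpha\in\FF^*$ we have $1/\alpha=\alpha^{Q-2}$ and $\alpha^{Q-1}=1$, so
\begin{equation*}
\alpha^{Q/q}c_\alpha=\alpha^{Q/q}\sum_{j=0}^{t-1}\left(f(\alpha)\alpha^{Q-2}\right)^{q^j}.
\end{equation*}
The degree bookkeeping is easiest if, for each $j$, we first multiply $\alpha^{Q/q}$ into the $j$-th Frobenius term and then reduce exponents modulo $Q-1$.

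First compute, for each $j\in[0,t-1]$,
\begin{equation*}
\alpha^{Q/q}\left(f(\alpha)/\alpha\right)^{q^j}=\sum_{i=0}^{k-1}f_i^{q^j}\alpha^{(i-1)q^j+Q/q}.
\end{equation*}
For $i\ge1$ the exponent $(i-1)q^j+Q/q$ is nonnegative. Since $j\le t-1$ gives $q^j\le Q/q$, it is at most $(k-1)Q/q$, and it is strictly less than $Q-1$ whenever $(k-1)Q/q<Q-1$. So these terms contribute degree at most $(k-1)p^{mt-m}$.

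For $i=0$ the exponent is $Q/q-q^j$. This is nonnegative, so no reduction modulo $Q-1$ is needed, and it is at most $Q/q-1$. This $i=0$ term is exactly where the factor $\alpha^{p^{mt-m}}$ is used: it cancels the negative power $\alpha^{-q^j}$ coming from $f_0/x$. I expect this to be the only delicate point.

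Combining the two cases, every exponent is at most $\max\{(k-1)Q/q,\,Q/q-1\}$. For $k\ge2$ this maximum is $(k-1)p^{mt-m}$, and for $k=1$ only the $i=0$ terms are present, so the bound is $p^{mt-m}-1$. In both cases the maximum is $\Delta$.

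It follows that $\alpha^{Q/q}c_\alpha=P(\alpha)$ for a polynomial $P\in\FF[x]$ with $\deg P\le\Delta$, where $P$ does not depend on $\alpha$. By hypothesis $\Delta<|\cA|=Q-1$, which is what keeps the exponents in the range $[0,Q-2]$ without wraparound. Hence $(P(\alpha))_{\alpha\in\FF^*}\in\rs(\FF^*,\Delta+1)$, and therefore $\cT_1\subseteq\rs(\FF^*,\Delta+1)$. The remaining work is routine: check that $\Delta<Q-1$ really covers the $i\ge1$ exponents, and that the case split $k=1$ versus $k\ge2$ matches the definition of $\Delta$.
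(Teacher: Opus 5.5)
Your proof is correct and follows the paper's argument: push $\alpha^{p^{mt-m}}$ into each Frobenius term of the trace so that the pole coming from $f_0/x$ cancels, then bound the degree. Your monomial-by-monomial bookkeeping is a more careful version of the paper's one-line count $\max(k+p^{mt-m}-2,\ldots,(k-1)p^{mt-m})$. One small correction: all exponents are already nonnegative and no reduction modulo $p^{mt}-1$ occurs, so the hypothesis $\Delta<|\cA|$ is not needed to avoid wraparound. It only makes $\rs(\FF^*,\Delta+1)$ a genuine code of length $|\cA|$, so that the distance bound $|\cA|-\Delta$ used afterwards is meaningful.
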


\begin{proof}
We note that $\vc^{*}\in\cT_1$ can be represented as 
\begin{align*}
(F(\alpha))_{\alpha\in\cA}&=\left(\alpha^{p^{mt-m}}\tr\left(\frac{f(\alpha)}{\alpha}\right)\right)_{\alpha\in\cA}\\
&=\left(\alpha^{p^{mt-m}}\sum_{i=0}^{t-1}\left(\frac{f(\alpha)}{\alpha}\right)^{p^{mi}}\right)_{\alpha\in\cA}\\
&=\left(\alpha^{p^{mt-m-1}}f(\alpha)+\cdots+f(\alpha)^{p^{mt-m}}\right)_{\alpha\in\cA},
\end{align*}
where $F(x)$ is a polynomial of degree $\max(k+p^{mt-m}-2,\ldots,(k-1)p^{mt-m})$. This fact finishes the proof. 
\end{proof}

Since $\cT_1\subseteq \rs(\FF^*,\Delta+1)$ and $\cT_1$ is equivalent to the repair-trace code $\cRT(\FF^*,k)$ (since $\alpha\ne0$ for all $\alpha\in\FF^*$), the minimum distance of $\cRT(\FF^*,k)$ is at least $|\cA|-\Delta$ and we obtain Theorem~\ref{thm:distance_lower_bound}(ii). 

\subsection{Character Sum Bound}
We prove the Theorem~\ref{thm:distance_lower_bound}(iii) by modifying the proof of \cite[Theorem~5.4]{roth2006} for two cases, $m=1$ and $m>1$. Before we proceed further, let us provide a short overview of character sums and refer the reader to \cite{Lidl1996, pascale2013} for more details.

Let us assume that $\omega=e^{\frac{2i\pi}{p}}$ is a primitive $p$-th root of complex unity. 
It is well known that for any $x\in\BB$ {\color{black}with $m=1$}  it holds that 
\begin{equation}\label{eq:complexroot2}
\sum_{a\in\BB\setminus\{0\}}\omega^{ax} = 
\begin{cases}
  p-1 & \text{if $x=0$} \\
  -1 & \text{otherwise}.
\end{cases}  
\end{equation}

For any element $a$ from $\FF$, we can define an additive character as function $\chi_a(x)=\omega^{{\rm Abs}\tr(ax)}$, where $x\in\FF$ and ${\rm Abs}\tr(\cdot)$ is the trace function from $\FF$ to the finite filed with $p$ element. Character defined by $\chi_0(x)=1$ is called trivial, while all other characters are called non-trivial. The additive character $\chi_1(x)$ is said to be canonical. It is well known that all additive characters of $\FF$ form a group of order $p^{mt}$ isomorhic to the additive group of $\FF$ and the following property holds
\begin{equation}
\chi_{a+b}(x)=\chi_a(x)\chi_b(x).
\end{equation}
The orthogonality relation of additive characters is given by
\begin{equation}\label{eq:complexroot}
\sum_{x\in \FF}\chi_a(x) =
\begin{cases}
0,& \mbox{if $a\neq 0$}\\
p^{mt},& \mbox{if $a=0$}
\end{cases}
\end{equation}

By the same way, for any element $g^j$ from multiplicative group of $\FF$ we can define a multiplicative character as a function $\Psi_j(g^k)=e^{\frac{2i\pi jk}{p^{mt}-1}}$, where $g$ is a fixed primitive element of $\FF$. Character defined by $\Psi_0(x)=1$ is called trivial, while all other characters are called non-trivial. It is well known that all multiplicative characters of $\FF$ form a group of order $p^{mt}-1$ isomorphic to the multiplicative group of $\FF$ and the following property holds
\begin{equation}
\Psi_{ab}(x)=\Psi_a(x)\Psi_b(x).
\end{equation}

Our further derivations rely on the upper bound for absolute value of the following non-degenerate sum
\begin{equation}\label{nondegeneratesum2}
S(\chi_a, \Psi_b; \phi, \varphi)=\sum_{x\in\FF\setminus\mathcal{S}}\chi_a(\phi(x))\Psi_b(\varphi(x)),
\end{equation}
where $\mathcal{S}$ denotes the set of poles of functions $\phi(x)\in\FF[x]$ and $\varphi(x)\in\FF[x]$. Non-degenerate property means that $a\phi(x)\ne h(x)^p-h(x)+c$ and $\varphi\ne ch(x)^{p^{mt}-1}$ for any $h(x)\in\FF[x]$ and $c\in\FF$. It is clear that $a\phi(x)= h(x)^p-h(x)+c$ and $\varphi(x)= ch(x)^{p^{mt}-1}$ imply that $\chi_a(\phi(x))$ and $\Psi_b(\varphi(x))$ are respective constant numbers for each $x\in\FF\setminus\mathcal{S}$. Essentially, we have the  following generalization of Weil estimate proved by Castro and Moreno to the case of rational functions $\phi(x)$ and $\varphi(x)$ in \cite{CaestroMoreno2000} in notations of \cite{Cochrane2006} and \cite{Booker2022}.
\begin{proposition}[{\cite[Lemma~2.1]{Booker2022}}]
Let $\phi(x)$, $\varphi(x)$ be rational functions in $\FF$, $\chi_a$ be non-trivial additive character on $\FF$ and $\Psi_b$ be non-trivial multiplicative character on $\FF$. Let $\mathcal{S}$ be the set of poles of functions $\phi$ and $\varphi$ in $\FF$. Further, let $l$ be the number of distinct zeros and non-infinite poles of $\phi$. Let
$l_1$ be the number of all poles of $\varphi$ and $l_0$ be the sum of their multiplicities. Let $l_2$ be the number of non-infinite poles of $\varphi$ which are zeros or poles of $\phi$.
Then
\begin{align}\label{weilestimate}
|S(\chi_a, \Psi_b; \phi, \varphi)|&=|\sum_{x\in\FF\setminus\mathcal{S}}\chi_a(\phi)\Psi_b(\varphi)|\notag\\
&\leq(l+l_0+l_1-l_2-2)\sqrt{p^{mt}}
\end{align}
\end{proposition}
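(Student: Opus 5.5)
This Proposition is quoted from \cite{Booker2022}, following Castro and Moreno \cite{CaestroMoreno2000}, so in the paper it is simply invoked. If I had to prove it, I would follow the standard route through $L$-functions and Weil's Riemann hypothesis for curves over finite fields. Write $q=p^{mt}$. For each extension $\FF_r=\GF(q^r)$, define the lifted sum
\[
S_r=\sum_{x\in\FF_r\setminus\mathcal{S}}\chi_a\bigl(\tr_{\FF_r/\FF}(\phi(x))\bigr)\,\Psi_b\bigl(N_{\FF_r/\FF}(\varphi(x))\bigr),
\]
and package these sums into the generating function
\[
L(T)=\exp\Bigl(\sum_{r\ge1}S_rT^r/r\Bigr).
\]

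\emph{Step 1 (rationality).} I would show that $L(T)$ is a polynomial
\[
L(T)=\prod_{j=1}^{D}(1-\omega_jT),
\]
so that $S=S_1=-\sum_{j}\omega_j$. One way is to write $L$ as an Euler product over closed points of $\mathbb{P}^1\setminus(\mathcal{S}\cup\{\infty\})$. Equivalently, one can interpret $S_r$ through point counts on the fibre product $C$ of the Artin--Schreier cover $y^p-y=a\phi(x)$ and the Kummer cover $z^{e}=\varphi(x)$, where $e$ is the order of $\Psi_b$. The non-degeneracy hypotheses ensure that the relevant isotypic piece of $H^1$ of $C$ is nontrivial and that no trivial-character term survives; this is what removes any main term of size $q$ from $S$.

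\emph{Step 2 (purity).} Weil's theorem says each reciprocal root $\omega_j$ satisfies $|\omega_j|\le\sqrt{q}$, with equality for the pure part. Applying this to $S=-\sum_j\omega_j$ gives
\[
|S|\le D\sqrt{q},
\]
so it remains to bound $D$.

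\emph{Step 3 (the degree, the main obstacle).} I need $D\le l+l_0+l_1-l_2-2$. I would compute $D$ with the Grothendieck--Ogg--Shafarevich (Euler characteristic) formula for the rank-one sheaf $\mathcal{L}_\chi(a\phi)\otimes\mathcal{L}_\Psi(\varphi)$ on the open set $U$ obtained by removing the ramification points:
\[
D=-\chi_c(U,\mathcal{F})=-2+\#(\text{removed points})+\sum_{\text{removed points}}\mathrm{Swan}.
\]
The contributions should be accounted as follows.
\begin{itemize}
\item The zeros and finite poles of $\phi$ give $l$ removed points, and Swan conductors at poles of $\phi$ are bounded by their pole orders.
\item The poles of $\varphi$, including $\infty$, give $l_1$ points together with multiplicity-dependent terms that combine to $l_0$.
\item Points that are simultaneously singular for $\phi$ and for $\varphi$ must not be counted twice, which produces the correction $-l_2$.
\end{itemize}
Where the first two items overlap (with the tame Kummer ramification or wild ramification at common points), I expect the bookkeeping to need care to land exactly on the stated count, and I would treat those points case by case.

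As a fallback, I would follow Castro--Moreno's elementary reduction instead. There one writes the sum as a combination of point counts on the curve $C$ and bounds the genus of $C$ via Riemann--Hurwitz, with the conductor exponents at each ramified point giving the same linear count in $l,l_0,l_1,l_2$.
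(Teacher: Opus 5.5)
\textbf{Gap in Step~3.} The paper does not prove this statement; it quotes Booker et al., who rely on Castro--Moreno. Your Steps~1--2 (rationality plus purity, giving $|S|\le D\sqrt{p^{mt}}$ with $D=\dim H^1_c$) are a sound framework. The gap is Step~3, which carries the entire content of the proposition. It is not a matter of careful case analysis: your local bookkeeping is wrong, and the count you are aiming for cannot be reached.
\begin{itemize}
\item The Artin--Schreier sheaf $\mathcal{L}_{\chi}(a\phi)$ is lisse at the zeros of $\phi$. It ramifies only at the poles of $\phi$, wildly, with Swan conductor at most the pole order.
\item The Kummer sheaf $\mathcal{L}_{\Psi}(\varphi)$ is tame everywhere, because the order of $\Psi_b$ divides $p^{mt}-1$. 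It ramifies at the zeros and poles of $\varphi$ and has Swan conductor $0$.
\end{itemize}
So your first bullet removes points that need not be removed (zeros of $\phi$). Your second bullet omits points that must be removed (zeros of $\varphi$, where also $\Psi_b(0)=0$). It also credits the poles of $\varphi$ with multiplicity terms that do not exist. Meanwhile the genuine Swan contributions at the poles of $\phi$ have nothing in $l+l_0+l_1-l_2-2$, as defined, to be absorbed into.

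\textbf{What the computation actually gives.} Work on $U\subseteq\mathbb{A}^1$, so $\infty$ is always among the removed points. The Euler characteristic formula then gives $D\le l'+l_0'+l_1'-l_2'-2$, where:
\begin{itemize}
\item $l'$ is the number of distinct zeros and finite poles of $\varphi$;
\item $l_1'$ is the number of poles of $\phi$, with $\infty$ always counted, and $l_0'$ is the sum of their orders;
\item $l_2'$ is the number of finite poles of $\phi$ that are zeros or poles of $\varphi$.
\end{itemize}
This is the statement with $\phi$ and $\varphi$ interchanged in the definitions of $l,l_0,l_1,l_2$.

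\textbf{The printed version is false.} Over $\GF(5)$ take $a=1$, $\phi(x)=x^2$, $\varphi(x)=x$, and $\Psi_b$ quadratic; both parts are non-degenerate. The printed definitions give $l=l_0=l_1=1$ and $l_2=0$, hence the bound $\sqrt5\approx2.24$. But $S=2\zeta_5-2\zeta_5^{4}$ with $\zeta_5=e^{2\pi i/5}$, so $|S|=4\sin(2\pi/5)\approx3.80$. The corrected count gives $2\sqrt5$.

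\textbf{The paper uses the corrected count.} For $\varphi=1$, $\phi=f(x)/x$ and $k\ge2$, the corrected definitions give $l'=l_2'=0$, $l_1'=2$ and $l_0'\le k-1$, i.e.\ the bound $(k-1)\sqrt{p^{mt}}$. The printed definitions would instead give $(l-2)\sqrt{p^{mt}}$ with $l\le k$. That is already violated by $f=f_0+f_1x$ with $f_0\ne0$, where $|S|=1$.

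\textbf{What is missing.} You need to attach the wild (multiplicity) data to the additive argument and the tame point count to the multiplicative one. With the statement corrected in this way, your Steps~1--3 go through. For $H^2_c=0$ you need only that $a\phi$ is not of Artin--Schreier form \emph{or} that $\varphi$ is not a constant times an $\mathrm{ord}(\Psi_b)$-th power.
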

By setting $\varphi(x)=1$ and $\phi(x)=\frac{f(x)}{x}$ so that $a\phi(x)\ne h(x)^p-h(x)+c$ for any $h(x)\in\FF[x]$ and $c\in\FF$, we receive the following estimate:
\begin{align}\label{weilestimate2}
\left|\sum_{x\in\FF\setminus\{0\}} \chi_a \left(\frac{f(x)}{x}\right)\right| \le (k-1)\sqrt{p^{mt}}.
\end{align}

\begin{proposition}\label{prop:char-sum}
	If $\cA=\FF\setminus\{0\}$ and $m=1$, 
	then every nonzero word in $\cRT(\cA,k)$ has weight at least
	\begin{equation}\label{charsumbound}
		\frac{p-1}{p}\left(|\cA|-(k-1)\sqrt{p^t}\right).    
	\end{equation}
\end{proposition}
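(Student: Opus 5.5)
We follow the standard character-sum argument for the weight of trace codes, as in the proof of \cite[Theorem~5.4]{roth2006}. Fix a nonzero codeword $\vc=(\tr(f(\alpha)/\alpha))_{\alpha\in\cA}$ with $\deg f\le k-1$. Since $m=1$, $\BB=\GF(p)$, so $\tr$ coincides with ${\rm Abs}\tr$ and each coordinate $c_\alpha$ lies in $\GF(p)$.

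First, count zeros of $\vc$ using \eqref{eq:complexroot2}. For each $\alpha\in\cA$, summing over $a\in\BB\setminus\{0\}$ gives
\begin{equation*}
\sum_{a\in\BB\setminus\{0\}}\omega^{a c_\alpha}=\begin{cases}p-1,& c_\alpha=0,\\ -1,& c_\alpha\neq 0.\end{cases}
\end{equation*}
Summing over $\alpha\in\cA$ and writing $w=\wt(\vc)$, we get
\begin{equation*}
(p-1)(|\cA|-w)-w=\sum_{a\in\BB\setminus\{0\}}\sum_{\alpha\in\cA}\omega^{\tr(af(\alpha)/\alpha)}=\sum_{a\in\BB\setminus\{0\}}\sum_{\alpha\in\FF^*}\chi_a\!\left(\frac{f(\alpha)}{\alpha}\right).
\end{equation*}
Here we used $a\,\tr(x)=\tr(ax)$ for $a\in\BB$. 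The left side equals $(p-1)|\cA|-pw$.

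Next, bound each inner sum by \eqref{weilestimate2}: $|\sum_{\alpha\in\FF^*}\chi_a(f(\alpha)/\alpha)|\le (k-1)\sqrt{p^t}$ for every $a\neq0$. Combining over the $p-1$ values of $a$ gives $(p-1)|\cA|-pw\le (p-1)(k-1)\sqrt{p^t}$, which rearranges to the bound \eqref{charsumbound}.

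The main obstacle is justifying that the Weil-type estimate applies to every $a\in\BB^*$, i.e.\ that the non-degeneracy condition holds. The condition is $af(x)/x\neq h(x)^p-h(x)+c$, where $h$ may be taken as a rational function.

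Suppose first that $f_0\neq0$. Then $af(x)/x$ has a simple pole at $0$. For any rational $h$ with a pole at $0$, the function $h^p-h$ has a pole of order divisible by $p$ there, and if $h$ has no pole at $0$ then neither does $h^p-h+c$. So the identity is impossible, and we may count $l\le k$ distinct zeros and finite poles, with $l_0=l_1=0$ for $\varphi=1$. This yields the constant $k-1$ after accounting for the pole at infinity, as the paper records in \eqref{weilestimate2}.

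If instead $f_0=0$, then $f(x)/x$ is a polynomial of degree at most $k-2$. In this case we argue via the classical Weil bound for polynomials, which gives the bound $(k-2)\sqrt{p^t}$. The one exception is when $af(x)/x=h^p-h+c$, which makes $\chi_a(af/x)$ constant; we must then argue that this forces $\vc$ to be constant in $\BB$. A constant nonzero word has weight $|\cA|$, which exceeds the bound. So degenerate $a$ either contribute an allowable term or occur only for words satisfying the bound trivially.

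Handling this degenerate subcase cleanly is where we expect the most care to be needed. We would first try a degree argument: a polynomial $h^p-h+c$ of degree below $p^t$ with $\deg h\ge1$ has degree divisible by $p$ and at least $p$. We would then reduce $f/x$ modulo $x^{p^t}-x$ to check consistency.
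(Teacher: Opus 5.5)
Your proposal is correct and is essentially the paper's proof: the same zero-counting identity $\sum_{\alpha\in\cA}\sum_{a\in\BB\setminus\{0\}}\omega^{ac_\alpha}=(p-1)|\cA|-p\,\wt(\vc)$, the estimate \eqref{weilestimate2} for each $a$ in the non-degenerate case, and a separate degenerate case that yields a constant word (the paper's Case~1); your split according to whether $f_0\neq 0$ only refines how non-degeneracy is checked. The degenerate subcase you flag as the main obstacle needs no degree argument and no reduction modulo $x^{p^t}-x$: if $af(x)/x=h(x)^p-h(x)+c$ with $a\in\GF(p)\setminus\{0\}$, then $\tr(y^p)=\tr(y)$ gives $a\,c_\alpha=\tr(c)$ for every $\alpha\in\FF^*$, so $\vc$ is a constant vector exactly as in the paper (equivalently, $\omega^{ac_\alpha}=\chi_a(f(\alpha)/\alpha)$ being constant already forces $c_\alpha$ to be constant).
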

\begin{proof}
We distinguish between two cases.

\textit{Case 1}. $f(x)=x(h(x))^{p}-xh(x)+xb$ for some $h\in \mathbb{F}[x]$ and $b\in \mathbb{F}$. In this case,
\begin{align*}
c_j & =\tr\left(\frac{f(\alpha_{j})} {\alpha_{j}}\right) =\tr\left(h(\alpha_{j})^{p}\right)-\tr(h(\alpha_{j}))+\tr(b)\notag\\
&=\tr\left(h(\alpha_{j})\right)^{p}-\tr(h(\alpha_{j}))+\tr(b)=
\tr(b),
\end{align*}

In other words, $\vc$ is a multiple of the all-ones vector.

\textit{Case 2}. $f(x)\ne x(h(x))^{p}-xh(x)+xb$ for any $h\in \mathbb{F}[x]$ and $b\in \mathbb{F}$. In this case we can form the non-degenerate sum and apply an estimate~\eqref{weilestimate}. For $p$-th root of complex unity we can write down that
\begin{align}
\sum_{j=1}^{p^t-1}\left(\sum_{a\in \BB\setminus\{0\}}\omega^{ac_j}\right)
& =(p-1)(p^t-1-\wt(\vc))-\wt(\vc)\notag\\
&=(p-1)(p^{t}-1)-p\wt(\vc),   
\end{align}
where $\wt(\vc)$ is the Hamming weight of the codeword $\mathbf{c}$.

Utilizing the fact that $\omega^{a\textrm{Tr}(\frac{f(x)}{x})}$ for $a\in \mathbb{B}\setminus\{0\}$ is non-trivial additive character $\chi_a(\frac{f(x)}{x})$ we have
\begin{align}
\left|\sum_{a\in \mathbb{B}\setminus\{0\}}\sum_{j=1}^{p^t-1}w^{ac_j}\right|
& =\left|\sum_{a\in \mathbb{B}\setminus\{0\}}\sum_{x\in \mathbb{F}\setminus\{0\}}\chi_a\left(\frac{f(x)}{x}\right)\right|\notag\\
& \leq \sum_{a\in \mathbb{B}\setminus\{0\}}\left|\sum_{x\in \mathbb{F}\setminus\{0\}}\chi_a\left(\frac{f(x)}{x}\right)\right|.
\end{align}

Applying the estimate~\eqref{weilestimate} we have
\begin{equation}
\left|(p-1)(p^{t}-1)-p\wt(\vc)\right| \leq(p-1)(k-1)\sqrt{p^{t}}.
\end{equation}
Combining two cases, we get the proposition statement. 
\end{proof}

\begin{proposition}\label{prop:char-sum-new}
	If $\cA=\FF\setminus\{0\}$ and $m>1$,
	then every nonzero word in $\cRT(\cA,k)$ has weight at least
	\begin{equation}\label{charsumbound2}
		\frac{p^m-p}{p^m}\left(|\cA|-(k-1)\sqrt{p^{mt}}\right)
	\end{equation}
\end{proposition}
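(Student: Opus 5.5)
The plan is to follow the proof of Proposition~\ref{prop:char-sum}, replacing the prime-field identity \eqref{eq:complexroot2} by its analogue for $\BB=\GF(p^m)$. Write $\vc=(c_\alpha)_{\alpha\in\FF^*}$ with $c_\alpha=\tr(f(\alpha)/\alpha)\in\BB$. For $b\in\BB$, let $\chi^{\BB}_b(y)=\omega^{{\rm Abs}\tr_{\BB}(by)}$ denote the additive characters of $\BB$. Orthogonality on $\BB$ gives $\sum_{b\in\BB\setminus\{0\}}\chi^{\BB}_b(y)=p^m-1$ if $y=0$ and $-1$ otherwise. Transitivity of the trace gives ${\rm Abs}\tr_{\BB}(b\,\tr(z))={\rm Abs}\tr(bz)$, so $\chi^{\BB}_b(c_\alpha)=\chi_b(f(\alpha)/\alpha)$ is a nontrivial additive character of $\FF$ evaluated at $f(\alpha)/\alpha$. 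Summing over $\alpha$ yields
\begin{equation*}
(p^m-1)(p^{mt}-1)-p^m\wt(\vc)=\sum_{b\in\BB\setminus\{0\}}\sum_{x\in\FF^*}\chi_b\left(\frac{f(x)}{x}\right).
\end{equation*}

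The key step is to split the inner sums according to whether they are degenerate. For each $b\neq0$, the sum over $x$ is degenerate exactly when $b f(x)/x=h(x)^p-h(x)+c$ for some $h\in\FF[x]$ and $c\in\FF$. In that case $\chi_b(f(x)/x)=\chi_1(c)$ is constant, and the inner sum has modulus $p^{mt}-1$. Otherwise \eqref{weilestimate2} bounds the inner sum by $(k-1)\sqrt{p^{mt}}$.

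The main obstacle is to show that, for a nonzero codeword that is not constant, at most $p-1$ values of $b$ give a degenerate sum. Let $D=\{b\in\BB: b f(x)/x\in\{h^p-h+c\}\}$. The set of Artin--Schreier forms $\{h^p-h+c\}$ is closed under addition and under multiplication by $\GF(p)$, so $D$ is a $\GF(p)$-subspace of $\BB$. If $D$ contained two $\GF(p)$-independent elements $b_1,b_2$, then, after a degree comparison in which the pole at $0$ forces $f_0=0$ and the $p$-th power structure pins down the coefficients, one would get $f(x)/x\equiv$ const modulo $\{h^p-h\}$ for scalars $b$ spanning more than a line. I expect this would force $c_\alpha$ to be constant in $\alpha$, i.e.\ $\vc$ a multiple of the all-ones vector. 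In that case $\wt(\vc)\in\{0,p^{mt}-1\}$, which trivially satisfies \eqref{charsumbound2}. I am least sure of this dimension count. If it fails, I would instead work directly with $\tr(f(x)/x)$ viewed over $\GF(p)$, using the $p$-adic structure of the exponents, to show $|D|\le p$.

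Granting $|D\setminus\{0\}|\le p-1$, the triangle inequality gives
\begin{equation*}
p^m\wt(\vc)\ge (p^m-1)(p^{mt}-1)-(p-1)(p^{mt}-1)-(p^m-p)(k-1)\sqrt{p^{mt}}.
\end{equation*}
Dividing by $p^m$, this equals $\frac{p^m-p}{p^m}\left(p^{mt}-1-(k-1)\sqrt{p^{mt}}\right)$, which is \eqref{charsumbound2}.
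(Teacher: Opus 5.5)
\textbf{There is a genuine gap, and it cannot be closed: the statement is false for $m\ge 3$.} You follow the paper's route: orthogonality on $\BB$, then splitting $b\in\BB\setminus\{0\}$ into degenerate and non-degenerate inner sums. The step you flag as the main obstacle is exactly where it fails. For $m\ge 3$, a nonconstant codeword can have far more than $p-1$ degenerate values of $b$, so no dimension count can supply it. The paper's Case~1 asserts the same count, but its justification only proves the reverse inclusion $\GF(p)^*a_1\subseteq\mathfrak{B}$.

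Here is a counterexample. Take $f(x)=\sum_{j=0}^{m-1}x^{p^j+1}$, which is admissible for $k=p^{m-1}+2$. Since $\tr(z^p)=\tr(z)^p$, you get $c_\alpha=\sum_{j=0}^{m-1}\tr(\alpha)^{p^j}={\rm Abs}\tr(\alpha)\in\GF(p)$, which is nonconstant. Yet for every $b\in\ker{\rm Abs}\tr_{\BB}$ one has $\chi_b(f(\alpha)/\alpha)=\omega^{c_\alpha{\rm Abs}\tr_{\BB}(b)}=1$ for all $\alpha$. So $p^{m-1}-1$ nonzero values of $b$ give inner sums equal to $p^{mt}-1$. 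Indeed $bf(x)/x$ reduces to ${\rm Abs}\tr_{\BB}(b)\,x$ modulo forms $h^p-h$, so your $D$ is this whole hyperplane.

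This word has weight $(p-1)p^{mt-1}$. For every $m\ge 3$ and all large $t$, that lies below the claimed bound. For instance, take $p=2$, $m=3$, $t=3$, $k=6$, so $f=x^2+x^3+x^5$ and $c_\alpha$ is the absolute trace on $\GF(2^9)$. The weight is $256$, while $\frac{6}{8}\bigl(511-5\sqrt{512}\bigr)\approx 298.4$. Moreover, $k=6$ satisfies the hypothesis of Theorem~\ref{thm:distance_lower_bound}(iii).

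\textbf{What your computation does prove.} For every nonconstant $\vc$ (for which necessarily $D\neq\BB$), your argument gives $p^m\wt(\vc)\ge (p^m-|D|)\bigl(p^{mt}-1-(k-1)\sqrt{p^{mt}}\bigr)$. Since $|D|\le p^{m-1}$, this yields the factor $\frac{p-1}{p}$ for all $m$, the same as in the case $m=1$. For $m=2$, a proper $\GF(p)$-subspace of $\BB$ has at most $p$ elements. So the stated bound holds there, and your proof is complete in that case.

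You can also reduce $bf(x)/x$ modulo Artin--Schreier forms. The coefficient of each $x^e$ with $p\nmid e$ is then a linearized polynomial in $b^{1/p^J}$ of degree $p^J\le k-2$. Hence $|D|\le\min\{p^{m-1},k-2\}$ for $k\ge 3$. Because $|D|$ is a power of $p$, this recovers the stated factor $\frac{p^m-p}{p^m}$ whenever $k\le p^2+1$.
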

\begin{proof}
Let $\vc=\left(\tr\left(\frac{f(\alpha)} {\alpha}\right)\right)_{\alpha\in\cA}$ be a codeword of $\cRT(\cA,k)$. Let $\lambda_1$ be the canonical additive character of $\BB$. By the orthogonality relation of additive characters, we deduce that
\begin{align*}
\wt(\vc)&=p^{mt}-1-\#\left\{\alpha\in\cA:\tr\left(\frac{f(\alpha)} {\alpha}\right)=0\right\}\\
&=p^{mt}-1-\frac{1}{p^m}\sum_{\alpha\in\cA}\sum_{a\in \BB}\lambda_1\left(a \tr\left(\frac{f(\alpha)} {\alpha}\right)\right)\\
&=p^{mt}-1-\frac{1}{p^m}\sum_{a\in \BB} \sum_{\alpha\in\cA}\chi_a\left(\frac{f(\alpha)} {\alpha}\right)\\
&=\frac{(p^{mt}-1)(p^m-1)}{p^m}-\frac{1}{p^m}\sum_{a\in \BB\setminus\{0\}} \sum_{\alpha\in\cA}\chi_a\left(\frac{f(\alpha)} {\alpha}\right)
\end{align*}
From the above equation, we have
\begin{equation}\label{prop:eq1}
\sum_{a\in \BB\setminus\{0\}} \sum_{\alpha\in\cA}\chi_a\left(\frac{f(\alpha)} {\alpha}\right)=(p^{mt}-1)(p^m-1)-p^m\wt(\vc)
\end{equation}
We distinguish between two cases

\textit{Case 1}. $\frac{af(x)}{x}=(h(x))^{p}-h(x)+b$ for some $a\in \BB\setminus\{0\}$, $h(x)\in \FF[x]$ and $b\in \FF$. In this case, the number of such $a$ is at most $p-1$ and let $\mathfrak{B}$ be the collection of such $a$. In fact, if for $a_1$ from $\mathfrak{B}$ it holds that $\frac{a_1f(x)}{x}=(h(x))^{p}-h(x)+b$, then for $a_2$ from the same set it holds that $\frac{a_2f(x)}{x}=a_2a_1^{-1}((h(x))^{p}-h(x)+b)$. Hence, $\chi_{a_2}\left(\frac{f(x)} {x}\right)$ is a constant number for each $x\in\cA$ when $a_2a_1^{-1}$ belongs to the finite field with $p$ elements. Utilizing the estimate~\eqref{weilestimate2}, we have
\begin{align*}
&\left|\sum_{a\in \BB\setminus\{0\}} \sum_{\alpha\in\cA}\chi_a\left(\frac{f(\alpha)} {\alpha}\right)\right|
\le\sum_{a\in \BB\setminus\{0\}}\left|\sum_{\alpha\in\cA}\chi_a\left(\frac{f(\alpha)} {\alpha}\right)\right|\\
&\qquad\le(p-1)\#\cA+\sum_{a\in \BB\setminus(\{0\}\cup \mathfrak{B})}\left|\sum_{\alpha\in\cA}\chi_a\left(\frac{f(\alpha)}{\alpha}\right)\right|\\
&\qquad\le(p-1)(p^{mt}-1)+(p^m-p)(k-1)\sqrt{p^{mt}}.
\end{align*}
By \eqref{prop:eq1}, we obtain that $$\wt(\mathbf{c})\geq \frac{(p^m-p)\left((p^{mt}-1)-(k-1)\sqrt{p^{mt}}\right)}{p^m}.$$

\textit{Case 2}. $f(x)\neq x(h(x))^{p}-xh(x)+xb$ for any $a\in \BB\setminus\{0\}$, $h(x)\in \FF[x]$ and $b\in \FF$. Using a method analogous to \textit{Case 1}, we deduce that
\begin{align*}
\wt(\vc)\geq \frac{(p^m-1)\left((p^{mt}-1)-(k-1)\sqrt{p^{mt}}\right)}{p^m}.    
\end{align*}
Taking the minimum over two cases, we get the proposition statement. 
\end{proof}

Combining Proposition~\ref{prop:char-sum} and Proposition~\ref{prop:char-sum-new} together, we get Theorem~\ref{thm:distance_lower_bound}(iii).

\section{Robust Repair Schemes }
\label{sec:decoders}

\subsection{Robust Repair Scheme 1}
\label{subsec:decoder1}

In this section, we propose Robust Repair Scheme 1 that achieves the error-correction capability guaranteed by the BCH-bound (Algorithm~D). 
Suppose we receive a noisy word $\boldsymbol{y} = \boldsymbol{c}_{\text{true}} + \boldsymbol{e}$, where $\boldsymbol{c}_{\text{true}} \in \cRT(\FF^*, k)$ is the true repair-trace codeword and the error vector $\boldsymbol{e}$ has Hamming weight $\wt(\boldsymbol{e}) \le \lfloor(d_{\text{BCH}}-1)/2\rfloor$. From Algorithm~D, we identify the parameters $b_{\text{opt}}$ and $s_{\text{opt}}$ that locate a sequence of $\delta_{\text{opt}}$ many $b_{\text{opt}}^{-1}$--spaced consecutive zero coefficients in the corresponding repair-trace polynomial $F(x)$. Furthermore, with $a \equiv q^t - \delta_{\text{opt}} - s_{\text{opt}} - 1 \pmod{q^t - 1}$, the polynomial $F^{(a,b_{\text{opt}})}(x) \equiv x^a F(x^{b_{\text{opt}}}) \pmod{x^n - 1}$ contains $\delta_{\text{opt}}$ consecutive zero coefficients at the highest-degree terms, specifically at positions $q^t - \delta_{\text{opt}} - 1$ through $q^t - 2$. Our task is to apply a bijective transformation on the received word that mirrors the polynomial transformation, such that the uncorrupted symbols in the transformed word are governed by $F^{(a,b_{\text{opt}})}(x)$.

Specifically, for each evaluation point $\alpha \in \FF^*$, we compute
$$T^{(a,b_{\text{opt}})}(y_{\alpha}) = \alpha^a y_{\alpha^{b_{\text{opt}}}},$$
with the shift parameter $a \equiv q^t - \delta_{\text{opt}} - s_{\text{opt}} - 1 \pmod{q^t - 1}$. Because $\gcd(b_{\text{opt}}, q^t - 1) = 1$, the mapping $x \mapsto x^{b_{\text{opt}}}$ acts as a coordinate permutation over $\FF^*$, and multiplication by $\alpha^a$ is a non-zero scalar multiplication. Consequently, $T^{(a,b_{\text{opt}})}$ is bijective and preserves the Hamming weight of the error vector. We can therefore express the transformed word as
$$T^{(a,b_{\text{opt}})}(\boldsymbol{y}) = T^{(a,b_{\text{opt}})}(\boldsymbol{c}_{\text{true}}) + \boldsymbol{e}',$$
where the transformed error vector $\boldsymbol{e}'$ satisfies $\wt(\boldsymbol{e}') = \wt(\boldsymbol{e}) \le \lfloor(d_{\text{BCH}}-1)/2\rfloor$.
Since the transformed true codeword $T^{(a,b_{\text{opt}})}(\boldsymbol{c}_{\text{true}})$ corresponds to a polynomial $F^{(a,b_{\text{opt}})}(x)$ of degree at most $q^t - \delta_{\text{opt}} - 2$, it is a codeword in $\rs(\FF^*, q^t - \delta_{\text{opt}} - 1)$. This allows us to feed the transformed word directly into the Berlekamp-Welch algorithm to uniquely correct up to $\lfloor\delta_{\text{opt}}/2\rfloor$ errors and obtain $F^{(a,b_{\text{opt}})}(x)$.

Once the transformed polynomial $F^{(a,b_{\text{opt}})}(x)$ is successfully decoded, we reverse the transformation to recover the original repair-trace polynomial $F(x)$, that is, $F(x) = x^{-ab_{\text{opt}}^{-1}} F^{(a,b_{\text{opt}})}(x^{b_{\text{opt}}^{-1}}) \pmod{x^n - 1}$. Note that our ultimate goal is to obtain $f(0) = f_0$. To perform the trace-repair scheme, we require $k \le q^t - q^{t-1}$. Consequently, even in the worst-case scenario where $k = q^t - q^{t-1}$, $f_0$ remains the only coefficient in $f(x)$ whose corresponding terms in $F(x)$ fall into the cyclotomic coset $C_{r_{|\Xi|}}$. We formalize this in Lemma~\ref{lem:f0_extraction}. This observation allows us to extract $f_0$ directly from the recovered $F(x)$ without performing the standard trace-dual basis reconstruction. We formalize this procedure as Robust Repair Scheme 1.

\begin{lemma}
\label{lem:f0_extraction}
For any $k \le q^t - q^{t-1}$, the erased symbol $f(0) = f_0$ can be extracted directly from the coefficient of $x^{q^t-2}$ in the recovered trace polynomial $F(x)$.
\end{lemma}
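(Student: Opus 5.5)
We will show that the coefficient $F_{q^t-2}$ of $x^{q^t-2}$ in $F(x)=\tr(f(x)/x)$, reduced modulo $x^n-1$ with $n=q^t-1$, equals $f_0$. Here $q$ denotes $|\BB|=p^m$, as in Section~\ref{sec:support_of_rt}. The approach is to expand $F$ term by term, as in the proof of Theorem~\ref{thm:sup_is_union_of_cyc_coset}, and locate every monomial whose exponent is $q^t-2$.

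\textbf{Step 1: expansion.} We write
\[
F(x)=\sum_{j=0}^{t-1} f_0^{q^j}x^{(q^t-2)q^j}+\sum_{i=0}^{k-2}\sum_{j=0}^{t-1} f_{i+1}^{q^j}x^{iq^j},
\]
with exponents read modulo $q^t-1$. This is valid as a function on $\FF^*$, and it is also valid as a polynomial of degree less than $n$, because the evaluation map on $\FF^*$ is a bijection from such polynomials onto $\FF^n$. Consequently the recovered $F$, whose support lies in $[0,q^t-2]$, has uniquely determined coefficients.

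\textbf{Step 2: the $f_0$ part.} The exponent $(q^t-2)q^j \bmod (q^t-1)$ equals $q^t-1-q^j$. This equals $q^t-2$ only when $j=0$, since $q^j\equiv 1 \pmod{q^t-1}$ with $0\le j\le t-1$ forces $j=0$. The $f_0$ part therefore contributes exactly $f_0$ to the coefficient of $x^{q^t-2}$. If the trace is written with the Frobenius applied to the coefficient, this is the $j=0$ term $f_0$ itself, so no $q$-th power ambiguity arises.

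\textbf{Step 3: the remaining terms.} This is the main point: no term $f_{i+1}x^{iq^j}$ with $0\le i\le k-2$ may land on exponent $q^t-2$. That would require $i$ to lie in the coset $C'=C_{r_{|\Xi|}}=\{q^t-1-q^j\}$ described in the proof of Theorem~\ref{thm:sup_is_union_of_cyc_coset}. The smallest element of $C'$ is $r_{|\Xi|}=q^t-q^{t-1}-1$ (Lemma~\ref{lem:largest_rep}). The hypothesis $k\le q^t-q^{t-1}$ gives $i\le k-2\le q^t-q^{t-1}-2<r_{|\Xi|}$, so $i\notin C'$. Hence the cosets $C_i$ for these $i$ are disjoint from $C'$, and none of them contributes to $x^{q^t-2}$.

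Combining Steps 2 and 3 gives $F_{q^t-2}=f_0=f(0)$. The only delicate point is the boundary case $k=q^t-q^{t-1}$, which is handled by the strict inequality $k-2<r_{|\Xi|}$.
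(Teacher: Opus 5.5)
Your proof is correct and follows essentially the same route as the paper. You expand $F$ term by term, note that the $f_0$-part contributes exactly $f_0$ to $x^{q^t-2}$ through the $j=0$ term, and use $k-2<r_{|\Xi|}=q^t-q^{t-1}-1$ to rule out any overlap from the terms $\tr(f_{i+1}x^i)$. Your explicit checks, that $q^t-1-q^j=q^t-2$ forces $j=0$ and that a polynomial of degree below $n$ is determined by its values on $\FF^*$, simply make precise what the paper leaves implicit.
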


\begin{proof}
By Lemma \ref{lem:largest_rep}, the largest coset representative is $r_{|\Xi|} = q^t - q^{t-1} - 1$. The trace polynomial is expanded as
\begin{equation}
    F(x) = \text{Tr}(f_0/x) + \sum_{i=0}^{k-2} \text{Tr}(f_{i+1}x^i).
\end{equation}
Since $k \le q^t - q^{t-1}$, then $k - 2 \le q^t - q^{t-1} - 2 < r_{|\Xi|}$. Therefore, by Theorem~\ref{thm:sup_is_union_of_cyc_coset}, none of the terms $f_{i+1}x^i$ can belong to the cyclotomic coset $C_{r_{|\Xi|}}$. The only term contributing to $C_{r_{|\Xi|}}$ is $\text{Tr}(f_0/x) = \sum_{i=0}^{t-1} f_0^{q^i} x^{(q^t-2)q^i}$. Evaluating this sum at $i=0$ yields exactly $f_0 x^{q^t-2}$. Thus, the coefficient of $x^{q^t-2}$ in $F(x)$ has no overlapping terms and is equal to $f_0$.
\end{proof}

\begin{figure}[htbp]
    \centering
    \fbox{
        \begin{minipage}{0.45\textwidth}
        \vspace{0.5em}
        \noindent \textbf{Robust Repair Scheme 1}

        \vspace{0.5em}
        \noindent\textbf{Input:} Received word $\mathbf{y} = (y_\alpha)_{\alpha \in \mathbb{F}^*}$, code length $n = q^t - 1$, support $\cS_k$.
        
        \noindent\textbf{Output: } The erased symbol $f_0$, or ``Decoding Failure''.

        \begin{enumerate}[label = \textbf{1.\arabic*.}, leftmargin=*, labelsep = 0.5em]
            \item \textbf{(Preprocessing Phase)} Run Algorithm D to obtain $b_{\text{opt}}$, $\delta_{\text{opt}}$ and $s_{\text{opt}}$. Compute the shift parameter $a \gets n - \delta_{\text{opt}} - s_{\text{opt}} \pmod n$.
            \item \textbf{(Transformation)} For each $\alpha \in \mathbb{F}^*$, compute the transformed evaluation $y_{\alpha}^{(a,b_{\text{opt}})} \gets \alpha^a y_{\alpha^{b_{\text{opt}}}}$.
            
            \item \textbf{(Decoding Phase)} Run the Berlekamp-Welch algorithm 
            on the transformed word $(y_{\alpha}^{(a,b_{\text{opt}})})_{\alpha \in \mathbb{F}^*}$ under the assumption that the number of errors $e \le \lfloor \delta_{\text{opt}}/2 \rfloor$. If the algorithm fails to output a valid polynomial $P(x)$, terminate and return ``Decoding Failure''.
            
            \item \textbf{(Reconstruction)} Reconstruct the original repair trace polynomial $F(x) \gets x^{-a b_{\text{opt}}^{-1}} P(x^{b_{\text{opt}}^{-1}}) \pmod{x^n - 1}$. Extract and return $f_0$ from the coefficient of $x^{q^t-2}$ in $F(x)$.
        \end{enumerate}
        \vspace{0.5em}
        \end{minipage}
    }
    \label{fig:decoder_1}
\end{figure}

\begin{theorem}[Correctness of Robust Repair Scheme 1]
    Let $\boldsymbol{c}_{\text{true}} \in \cRT(\FF^*, k)$ be the true repair-trace codeword and $\boldsymbol{y} = \boldsymbol{c}_{\text{true}} + \boldsymbol{e}$ be the received word, with the error vector $\boldsymbol{e}$ satisfying $\wt(\boldsymbol{e}) \le \lfloor(d_{\text{BCH}} - 1)/2\rfloor$. Then, Robust Repair Scheme 1 uniquely and correctly outputs the erased symbol $f_0$.
\end{theorem}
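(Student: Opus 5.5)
We follow the transformed codeword through the four steps of the scheme and check that each step is exact. Write $b=b_{\text{opt}}$, $\delta=\delta_{\text{opt}}$, $s=s_{\text{opt}}$. Let $F(x)=\sum_{j\in\cS_k}F_jx^j$ be the repair-trace polynomial of $\boldsymbol{c}_{\text{true}}$, reduced modulo $x^n-1$. By Algorithm~D, the indices $s,s+1,\ldots,s+\delta-1$ lie in $\cR_k^{(b)}$. Equivalently, the coefficient of $x^{bj}$ in $F(x^{b})$ vanishes whenever $bj \bmod n$ is one of these indices.

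First we verify that the transformed true word is an evaluation vector of a low-degree polynomial. For every $\alpha\in\FF^*$ we have $\alpha^a c_{\alpha^{b}}=\alpha^aF(\alpha^{b})$. Since $\alpha^n=1$, this equals the evaluation at $\alpha$ of
$$P_{\text{true}}(x)\equiv x^aF(x^{b}) \pmod{x^n-1}.$$
The exponents of $P_{\text{true}}$ are $a+bj \bmod n$ for $j\in\cS_k$. Multiplication by $b$ maps $\cR_k$ onto $\cR_k^{(b)}$ and $\cS_k$ onto its complement. Hence the exponents $bj$ with $j\in\cS_k$ avoid the window $[s,s+\delta-1]$. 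With $a\equiv n-\delta-s$, the window is shifted to $[n-\delta,n-1]$. Therefore $\deg P_{\text{true}}\le n-\delta-1$, so $T^{(a,b)}(\boldsymbol{c}_{\text{true}})\in\rs(\FF^*,n-\delta)$. This code has minimum distance $\delta+1$.

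The main obstacle is a bookkeeping mismatch. The scheme's text states the shift once as $q^t-\delta-s-1$ and once as $n-\delta-s$, and it indexes the zero window via $b^{-1}$-spacing in one place and $b$-spacing in another. We fix the convention "$\cR_k^{(b)}=\{br\}$ contains $[s,s+\delta-1]$" and verify the shift against that convention. If the authors' convention differs, we replace $b$ by $b^{-1}$ in the transformation and the reconstruction accordingly.

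Next we show that the error weight is preserved. The map $\alpha\mapsto\alpha^{b}$ is a permutation of $\FF^*$ because $\gcd(b,n)=1$, and scaling by $\alpha^a\neq0$ preserves the support. So $\wt(\boldsymbol{e}')=\wt(\boldsymbol{e})\le\lfloor(d_{\text{BCH}}-1)/2\rfloor=\lfloor\delta/2\rfloor$, using $d_{\text{BCH}}=\delta+1$. The Berlekamp--Welch algorithm for $\rs(\FF^*,n-\delta)$ uniquely decodes up to $\lfloor\delta/2\rfloor$ errors. It therefore returns exactly $P=P_{\text{true}}$ and never declares failure.

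Finally we check that the reconstruction recovers $F$. Substituting $x\mapsto x^{b^{-1}}$ (with $b^{-1}$ taken modulo $n$) into $P(x)\equiv x^aF(x^b)$ gives $P(x^{b^{-1}})\equiv x^{ab^{-1}}F(x)$ modulo $x^n-1$. Multiplying by $x^{-ab^{-1}}$ returns $F(x)$ as a reduced polynomial of degree less than $n$. As polynomials reduced modulo $x^n-1$ correspond bijectively to evaluation vectors on $\FF^*$, this $F$ is the true repair-trace polynomial. Since $k\le q^t-q^{t-1}$ is required by the trace framework, Lemma~\ref{lem:f0_extraction} shows that the coefficient of $x^{q^t-2}$ in $F$ equals $f_0$. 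The scheme therefore outputs $f_0$ correctly, and the uniqueness of Berlekamp--Welch decoding makes this output unique.
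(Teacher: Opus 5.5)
Your proof is correct and follows the same route as the paper's: a weight-preserving bijective transformation into $\rs(\FF^*,n-\delta)$, unique Berlekamp--Welch decoding, inversion of the transformation, and then Lemma~\ref{lem:f0_extraction}; you simply make the exponent bookkeeping explicit. The ``mismatch'' you flag is not real, and no change of convention is needed. Since $n=q^t-1$, the shifts $q^t-\delta-s-1$ and $n-\delta-s$ are the same number. Likewise, a run $[s,s+\delta-1]\subseteq\cR_k^{(b)}$ is exactly a $b^{-1}$-spaced run of zero coefficients of $F$.
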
 
\begin{proof}
   By construction, the bijective transformation preserves the error weight, mapping the received word into a corrupted Reed-Solomon codeword with exactly $\wt(\boldsymbol{e})$ errors. Since this weight is bounded by the unique decoding radius $\lfloor\delta_{\text{opt}}/2\rfloor$, the Berlekamp-Welch algorithm is guaranteed to recover the transformed version of $\boldsymbol{c}_{\text{true}}$. Then, the inverse transformation reconstructs $\boldsymbol{c}_{\text{true}}$, and by Lemma 17, $f_0$ is correctly repaired.
\end{proof}

\subsection{Robust Repair Scheme 2}

Although Robust Repair Scheme 1 efficiently corrects up to $\lfloor (d_{\text{BCH}}-1)/2 \rfloor$ erroneous traces, there remains a gap between this number and the theoretical number of errors guaranteed by the Character Sum bound, $\lfloor (d_{\text{CS}}-1)/2 \rfloor$, for certain dimensions $k$. For instance, as illustrated in Fig.~1, when $p=2, m=1, t=8,$ and $k \in [3, 13]$, the error-correction capability guaranteed by the Character Sum bound strictly dominates the BCH bound. Our objective with Robust Repair Scheme 2 is to close this gap by manipulating the support of the repair-trace polynomial. As established in Theorem~\ref{thm:sup_is_union_of_cyc_coset}, the support of the repair-trace polynomial $F(x)$ is a union of cyclotomic cosets and some of these cyclotomic cosets are involved in the support of $F(x)$ due to exactly one coefficient of $f(x)$. 

\begin{definition}
    A cyclotomic coset $C\subset \cS_k$ is said to be \textit{isolated} if the terms in $F(x) = \tr(f(x)/x)$ with exponents in $C$ depend on exactly one coefficient of $f(x)$.
\end{definition}

For example, in Example~\ref{ex:support_of_F}, with $\mathbb{F} = GF(2^5)$ and $\mathbb{B} = GF(2)$, the isolated cyclotomic cosets are $C_0, C_5, C_7,$ and $C_{15}$. In particular, the terms in $F(x)$ corresponding to $C_{15}$ are constructed entirely by a single coefficient $f_0$. Because the terms within an isolated coset $C$ are dictated by a single coefficient $c\in\FF$, we can exhaustively guess the value of this coefficient. Since there are at most $|\FF|$ possible values, this guessing phase remains computationally tractable. For the correct guess $c$, the modified polynomial $F_c(x) = F(x) - \tr(c x^r)$, with $r$ being the representative of $C$, will have a strictly smaller support $\mathcal{S}_k \setminus C$. This transformation enlarges the pool of zero coefficients, which yields a larger BCH bound for the transformed word $(y_{\alpha} - \tr(c\alpha^{r_C}))_{\alpha\in\FF^*}$ corresponding to $F_c(x)$. Incorrect guesses for $c$ are reliably eliminated by enforcing a consistency check on the support of the decoded polynomial. However, because this newly enlarged BCH bound might exceed some theoretical distance lower bound of the original code, say $d$, we must strictly cap the decoding radius at $\lfloor (d-1)/2 \rfloor$ to guarantee the uniqueness of the solution. For Robust Repair Scheme~2, we take $d = d_{\text{CS}}$, the character sum bound. Furthermore, since we are unable to guarantee uniqueness beyond the number of correctable errors guaranteed by the character sum bounds, we simply apply Robust Repair Scheme~1 for $k$ where the BCH bound is greater than the character sum bound.

We formalize this idea in Robust Repair Scheme~2 and prove its correctness and uniqueness in Theorem~\ref{thm:uniqueness_of_decoder_2}.

\begin{figure}[htbp]
    \centering
    \fbox{
        \begin{minipage}{0.45\textwidth}
        \vspace{0.5em}
        \noindent \textbf{Robust Repair Scheme 2}

        \vspace{0.5em}
        \noindent\textbf{Input:} Received word $\mathbf{y} = (y_\alpha)_{\alpha \in \mathbb{F}^*}$, code length $n = q^t - 1$, support $\cS_k$, character sum bound $d_{\text{CS}}$.
        
        \noindent\textbf{Output: } The erased symbol $f_0$, or a decoding failure flag.

        \begin{enumerate}[label = \textbf{2.\arabic*.}, leftmargin=*, labelsep = 0.5em]
            \item \textbf{(Comparison Phase)} Run Algorithm D with support $\cS_k$ to obtain $d_{\text{BCH}}$. If $\lfloor (d_{\text{BCH}}-1)/2\rfloor \ge \lfloor (d_{\text{CS}}-1)/2\rfloor$, then run Decoder 1. Else, proceed to \textbf{2.2}.
            \item \textbf{(Preprocessing Phase)} 
            \begin{enumerate}[label = \textbf{2.1.\arabic*.}, leftmargin = 2.8em, labelsep = 0.5em]
                \item Set $I\gets \{C: C\subset \cS_k \text{ is isolated}\}$, $\delta_{\text{best}} \gets 0$.
                \item For each $C\in I$,
                \begin{enumerate}[label = \textbf{2.1.2.\arabic*.}, leftmargin = 3.6em, labelsep = 0.5em]
                    \item Set $\cS_k^* \gets \cS_k\setminus C$.
                    \item Run Algorithm D with support $\cS_k^*$ to obtain $\delta_{\text{opt}}, b_{\text{opt}}$ and $s_{\text{opt}}$.
                    \item If $\delta_{\text{opt}}>\delta_{\text{best}}$, set $\delta_{\text{best}}\gets \delta_{\text{opt}}$, $ b_{\text{best}}\gets b_{\text{opt}}$, $s_{\text{best}}\gets s_{\text{opt}}$, $ \tilde{C} \gets C$.
                \end{enumerate}
                \item Compute $a \gets n-\delta_{\text{best}} - s_{\text{best}}\pmod{n}$ .
            \end{enumerate}
        \item \textbf{(Testing Phase)} For each $c\in \FF$,
        \begin{enumerate}[label = \textbf{2.2.\arabic*.}, leftmargin = 2.8em, labelsep = 0.5em]
            \item \textbf{(Transformation Phase)} For each $\alpha\in \FF^*$,
            \begin{enumerate}[label = \textbf{2.2.1.\arabic*.}, leftmargin = 3.6em, labelsep = 0.5em]
                \item Set $y_{\alpha,c}\gets y_{\alpha} - \tr(c\alpha^{r_{\tilde{C}}})$.
                \item Set $y_{\alpha,c}^{(a,b_{\text{best}})}\gets \alpha^a y_{\alpha^{b_{\text{best}}},c}$
            \end{enumerate}
        \item \textbf{(Decoding Phase)} Run the Berlekamp-Welch algorithm on $\left(y_{\alpha,c}^{(a,b_{\text{best}})}\right)_{\alpha\in\FF^*}$ under the assumption that the number of errors $e \le \min\{\lfloor\delta_{\text{best}}/2\rfloor, \lfloor(d_{\text{CS}}-1)/2\rfloor\} $. Let $F_c(x)$ be the output. If it fails to output a valid polynomial $F_c(x)$, then guess another $c$.
        \item \textbf{(Reconstruction)} Reconstruct the original repair trace polynomial $$F(x) \gets x^{-ab^{-1}_{\text{best}}}F_c(x^{b^{-1}_{\text{best}}}) + \tr(c x^{r_{\tilde{C}}}),$$
        in $\textrm{mod }{x^n-1}$. If $\supp(F)\subseteq \cS_k$ and $F_{iq(\text{mod }q^t-1)} = F_i^q$, then extract and return $f_0$ from the coefficient of $x^{q^t-2}$ in $F(x)$, and terminate \textbf{2.3}.
        \end{enumerate}
        If no candidate $c \in \FF$ produces a valid $F(x)$, return ``Decoding Failure''.
        \end{enumerate}
        \vspace{0.5em}
        \end{minipage}
    }
    \label{fig:decoder_2}
\end{figure}
\begin{theorem}\label{thm:uniqueness_of_decoder_2}
    Let $\boldsymbol{c}_{\text{true}} \in \cRT(\FF^*, k)$ be the true repair-trace codeword and $\boldsymbol{y} = \boldsymbol{c}_{\text{true}} + \boldsymbol{e}$ be the received word with an error vector $\boldsymbol{e}$. Let $d_{BCH, C}$ be the BCH bound derived from the reduced support $\mathcal{S}_{k} \setminus C$ for some isolated coset $C$. If the error vector $\boldsymbol{e}$ satisfies $\wt(\boldsymbol{e}) \le \min\left\{\lfloor(d_{BCH, C} - 1)/2\rfloor, \lfloor(d_{CS} - 1)/2\rfloor\right\}$, then Robust Repair Scheme 2 uniquely and correctly outputs the erased symbol $f_0$.
\end{theorem}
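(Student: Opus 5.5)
The argument has two parts. First, the true guess $c = f_*$ passes every check of the scheme and returns the correct $f_0$. Second, no wrong guess can pass the checks with a different value, and uniqueness comes from the character sum bound. Throughout, $\tilde C$ is the isolated coset selected in the preprocessing phase, with representative $r_{\tilde C}$ and single governing coefficient $f_*$. We interpret $d_{BCH,C}$ in the hypothesis as the bound for this selected coset; since $\tilde C$ maximizes $\delta_{\text{opt}}$ over isolated cosets, the hypothesis for any $C$ implies it for $\tilde C$. We may also assume $\lfloor (d_{BCH}-1)/2\rfloor < \lfloor (d_{CS}-1)/2\rfloor$. Otherwise the scheme calls Robust Repair Scheme 1, and the correctness theorem for that scheme applies directly.

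\emph{Correctness for the true guess.} Take $c=f_*$. Because $\tilde C$ is isolated, the terms of $F$ with exponents in $\tilde C$ are exactly $\tr(f_* x^{r_{\tilde C}})$. Hence $F_c(x)=F(x)-\tr(f_*x^{r_{\tilde C}})$ has support in $\cS_k\setminus\tilde C$. Since $\alpha\mapsto \tr(f_*\alpha^{r_{\tilde C}})$ is a fixed vector, the word $(y_{\alpha,c})$ equals the evaluation of $F_c$ plus the same error vector $\boldsymbol e$.

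Next we repeat the argument of Robust Repair Scheme 1, with $\cS_k\setminus\tilde C$ in place of $\cS_k$. Algorithm D supplies $\delta_{\text{best}}$, $b_{\text{best}}$ and $s_{\text{best}}$. The map $T^{(a,b_{\text{best}})}$ is a coordinate permutation combined with nonzero scalings, so it preserves Hamming weight. After the transformation, the uncorrupted word lies in $\rs(\FF^*, q^t-\delta_{\text{best}}-1)$. The error weight is at most $\min\{\lfloor\delta_{\text{best}}/2\rfloor,\lfloor(d_{CS}-1)/2\rfloor\}$, so Berlekamp--Welch returns the transformed $F_c$. Inverting the transformation and adding back $\tr(f_*x^{r_{\tilde C}})$ recovers $F$ exactly. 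This $F$ satisfies the support check and the conjugacy check $F_{iq}=F_i^q$, since it genuinely comes from a trace. By Lemma~\ref{lem:f0_extraction}, the coefficient of $x^{q^t-2}$ is then $f_0$.

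\emph{Uniqueness.} Suppose some guess $c'$ passes all checks and yields $F'$. The key step is to show that passing the checks forces $F'$ to be a genuine repair-trace polynomial. We plan to argue this as follows:
\begin{itemize}
\item The conjugacy relations $F'_{iq}=F'^q_i$ on each coset make $F'$ equal to $\sum_{C} \tr(g_C x^{r_C})$, summed over the cosets in $\cS_k$.
\item In particular, the vector $(F'(\alpha))_{\alpha\in\FF^*}$ lies in $\BB^n$.
\item Each coset $C_r\subseteq\cS_k$ either has $r\le k-2$ or is $C_{r_{|\Xi|}}$. So we can choose $f'$ with $f'_{r+1}=g_{C_r}$ and $f'_0=g_{C_{r_{|\Xi|}}}$, giving $F'=\tr(f'(x)/x)$ with $\deg f'\le k-1$.
\item Hence $F'$ evaluates to a codeword $\boldsymbol c'\in\cRT(\FF^*,k)$.
\end{itemize}
This step is the main obstacle. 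A subtlety is that a coset of size smaller than $t$ forces $g_C$ into a subfield, and we must check that this is consistent with the paper's form of the trace. We will rely on the conjugacy check to enforce it.

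Once this is in place, Berlekamp--Welch guarantees that the decoded word is within distance $\le\lfloor(d_{CS}-1)/2\rfloor$ of $(y_{\alpha,c'})$ after the transformation. Undoing the weight-preserving transformation and the translation by $\tr(c'\alpha^{r_{\tilde C}})$ gives $d(\boldsymbol y,\boldsymbol c')\le\lfloor(d_{CS}-1)/2\rfloor$. Also $d(\boldsymbol y,\boldsymbol c_{\text{true}})\le\lfloor(d_{CS}-1)/2\rfloor$ by hypothesis. The triangle inequality then gives $d(\boldsymbol c',\boldsymbol c_{\text{true}})\le d_{CS}-1$. By Theorem~\ref{thm:distance_lower_bound}(iii), this forces $\boldsymbol c'=\boldsymbol c_{\text{true}}$, and therefore $F'=F$.

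The last point is that evaluation on $\FF^*$ is injective for polynomials of degree at most $n-1$, so $F'=F$ holds as polynomials. Their coefficients of $x^{q^t-2}$ therefore agree, and every accepted guess returns the same correct $f_0$.
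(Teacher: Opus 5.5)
Your proof is correct and takes the same route as the paper. Correctness for the right guess comes from rerunning the Scheme~1 argument on the reduced support $\cS_k\setminus\tilde C$, and uniqueness from the triangle inequality: two codewords of $\cRT(\FF^*,k)$ within $\lfloor (d_{\text{CS}}-1)/2\rfloor$ of $\boldsymbol{y}$ must coincide.

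You are more careful than the paper in two places. First, you justify that passing the support and conjugacy checks puts the output in $\cRT(\FF^*,k)$. For a coset of size $s<t$, the check forces the coefficient $F'_r$ (not $g_C$) into $\GF(q^s)$, and surjectivity of the relative trace then supplies $g_C$. Second, you conclude that every accepted guess yields $\boldsymbol{c}_{\text{true}}$, rather than that no wrong guess is accepted. Yours is the right formulation, because for such small cosets distinct guesses $c$ give the same vector $(\tr(c\alpha^{r}))_{\alpha}$.
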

\begin{proof}
    In the comparison phase, if the BCH bound guarantees more error correction than the character sum bound, we run Robust Repair Scheme~1. In other words, we only need to show correctness and uniqueness for the case $d_{\text{BCH}} < d_{\text{CS}}$. The preprocessing phase outputs the best isolated cyclotomic coset $C$ that maximizes the BCH bound and also outputs the required parameters $\delta_{\text{best}}, b_{\text{best}}, $and $s_{\text{best}}$. We use the outputs from the preprocessing phase and prove, Robust Repair Scheme~2 outputs $f_0$ correctly with the correct guess $c'$ and rejects all outputs from wrong guesses $c\ne c'$.
    
    \textbf{(Correctness)} For the correct guess, say $c'$, let $\boldsymbol{c}_{\text{true},c'} = \boldsymbol{c}_{\text{true}} -  (\tr(c' \alpha^{r_C}))_{\alpha\in\FF^*}$. Then,  we have
    \begin{align*}
        \boldsymbol{y}_{c'} &= \boldsymbol{y} - (\tr(c' \alpha^{r_C}))_{\alpha\in\FF^*} = \boldsymbol{c}_{\text{true}} -  (\tr(c' \alpha^{r_C}))_{\alpha\in\FF^*} + \boldsymbol{e}\\ 
        &= \boldsymbol{c}_{\text{true},c'}  + \boldsymbol{e},
    \end{align*}
    which shows that such transformation preserve the error weight. Furthermore, the transformed codeword $\boldsymbol{c}_{\text{true},c'}$ corresponds to a polynomial $F_{c'}(x)$ with support set in $\cS_k\setminus C$. Following the idea of Robust Repair Scheme~1, we further transform $\boldsymbol{y}_{c'}$ into $T^{(a,b_{\text{best}})}(\boldsymbol{y}_{c'})$, where
    $$
    T^{(a,b_{\text{best}})}(\boldsymbol{y}_{c'}) = \left( T^{(a,b_{\text{best}})}({y}_{\alpha,c'})\right)_{\alpha\in\FF^*} = \left(\alpha^a y_{\alpha^{b_{\text{best}}},c'}\right)_{\alpha\in\FF^*}
    $$
    with $a \equiv q^t -\delta_{\text{best}} - s_{\text{best}} - 1 \pmod{q^t-1}$. Such transformation is bijective, hence it preserves the error weight. That is,
    $$
    T^{(a,b_{\text{best}})}(\boldsymbol{y}_{c'}) = T^{(a,b_{\text{best}})}(\boldsymbol{c}_{\text{true},c'}) + T^{(a,b_{\text{best}})}(\boldsymbol{e}),
    $$
    with $\wt\left(T^{(a,b_{\text{best}})}(\boldsymbol{e})\right) = \wt(\boldsymbol{e})$.
    Furthermore, since $T^{(a,b_{\text{best}})}(\boldsymbol{c}_{\text{true},c'})$ is a codeword corresponding to a polynomial $F^{(a,b_{\text{best}})}_{c'}(x)$ of degree at most $q^t - \delta_{\text{best}} - 2$, it is a codeword in $\rs(\FF^*, q^t - \delta_{\text{best}} - 1)$. This allows us to feed $T^{(a,b_{\text{best}})}(\boldsymbol{y}_{c'})$ directly into the Berlekamp-Welch algorithm to uniquely correct up to $\lfloor \delta_{\text{best}}/2 \rfloor$ errors and obtain $F^{(a,b_{\text{best}})}_{c'}(x)$. Then, the Reconstruction Phase computes $F(x)$ by performing inverse transformation of $T^{(a,b_{\text{best}})}$ and adding the trace terms back which results in the corresponding polynomial of $\boldsymbol{c}_{\text{true}}$. Since $\boldsymbol{c}_{\text{true}} \in\cRT(\FF^*,k)$, it clearly passes the check. Finally, by Lemma~\ref{lem:f0_extraction}, $f_0$ is correctly repaired.
    
    \textbf{(Uniqueness)} Now, we show that no incorrect guess $c\ne c'$ can result in a valid polynomial corresponding to $\cRT(\FF^*,k)$ through the check. Suppose that the decoder with the correct guess $c'$ outputs $F(x)$ corresponding to the valid codeword $\boldsymbol{c}_{\text{true}}\in\cRT(\FF^*,k)$. Furthermore, suppose that, with an incorrect guess $c\ne c'$, in step 2.2.3, we also obtain another polynomial $F_{\text{false}}(x)$ corresponding to $\boldsymbol{c}_{\text{false}}\in\cRT(\FF^*,k)$. Set $e = \min\left\{\lfloor(d_{BCH, C} - 1)/2\rfloor, \lfloor(d_{CS} - 1)/2\rfloor\right\}$. Then, we have $\wt(\boldsymbol{c}_{\text{true}} - \boldsymbol{y})\le e$ and $\wt(\boldsymbol{c}_{\text{false}} - \boldsymbol{y})\le e$. Since $\cRT(\FF^*,k)$ is a linear code, $\Delta \boldsymbol{c} = \boldsymbol{c}_{\text{true}} - \boldsymbol{c}_{\text{false}}$ is a valid codeword of $\cRT(\FF^*,k)$ and $\wt(\Delta \boldsymbol{c}) \ge d_{\text{CS}}$. But,
    \begin{align*}\wt(\Delta\boldsymbol{c}) &= \wt((\boldsymbol{c}_{\text{true}} - \boldsymbol{y}) - (\boldsymbol{c}_{\text{false}} - \boldsymbol{y}))\\
    &\le \wt(\boldsymbol{c}_{\text{true}} - \boldsymbol{y}) + \wt(\boldsymbol{c}_{\text{false}} - \boldsymbol{y})\le 2e < d_{\text{CS}},
    \end{align*}
    which is a contradiction.
\end{proof}

\subsection{Robust Repair Scheme 2 with List Decoding}

Again, as shown in Fig.~\ref{fig:comparison_of_decoders_and_distance_bound}, there remains a gap between the number of erroneous traces correctable by Robust Repair Scheme~2 and the number guaranteed by the character sum bound for some values $k$. To further close this gap, we replace the Berlekamp-Welch decoding algorithm with the Guruswami-Sudan List decoding algorithm \cite{GuruswamiSudan1998} in Robust Repair Scheme~2 step 2.2.2. to correct for a larger decoding radius. As before, to guarantee the uniqueness of the output, we must cap the decoding radius at $\lfloor (d_{\text{CS}}-1)/2\rfloor$.

We quote the performance of Guruswami-Sudan algorithm due to Koetter and Vardy \cite{koettervardy2003} in Theorem~\ref{thm:GS_list_decoding}.

\begin{theorem}[\cite{GuruswamiSudan1998, koettervardy2003}]\label{thm:GS_list_decoding}
	Fix $\Delta\le n$ and $\mu$.
	Set $\delta$ to be the smallest integer such that 
	$N_{1,\Delta}\triangleq \ceil{\frac{\delta+1}{\Delta}}\left(\delta-\frac{\Delta}{2}\floor{\frac{\delta}{\Delta}}+1\right)>\frac{n\mu(\mu+1)}{2}$.
	Next, set $e = n-\floor{\delta/\mu}$.
	Given $\vy\in \FF^n$, we can find {\em all} polynomials $F(X)$ of degree at most $\Delta$ such that 
	$F(\alpha_i) \ne y_i$ in at most $e$ positions.
	Let $\cF$ be the set of these polynomials. Furthermore, we can find all $F(X)$'s in polynomial  time (in $n$, $\mu$ and $|\cF|$).
\end{theorem}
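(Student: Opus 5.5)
The statement is a citation of the Guruswami--Sudan algorithm with multiplicities in the form given by Koetter and Vardy. I would reprove it by the standard interpolate-then-factor argument, working with the $(1,\Delta)$-weighted degree $\deg_{1,\Delta}(X^iY^j)=i+\Delta j$.

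\textbf{Step 1 (interpolation).} I would look for a nonzero bivariate polynomial $Q(X,Y)\in\FF[X,Y]$ with $\deg_{1,\Delta}Q\le\delta$ that has a zero of multiplicity at least $\mu$ at each of the $n$ points $(\alpha_i,y_i)$.
\begin{itemize}
\item Vanishing to order $\mu$ at one point means that all Hasse derivatives $Q^{[r,s]}(\alpha_i,y_i)$ with $r+s<\mu$ vanish. That is $\mu(\mu+1)/2$ homogeneous linear conditions on the coefficients of $Q$, so $n\mu(\mu+1)/2$ conditions in total.
\item The unknowns are the coefficients of the monomials $X^iY^j$ with $i+\Delta j\le\delta$. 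Put $L=\floor{\delta/\Delta}$. Their number is
$\sum_{j=0}^{L}(\delta-\Delta j+1)=(L+1)\bigl(\delta+1-\tfrac{\Delta}{2}L\bigr)$.
\item This matches $N_{1,\Delta}$, using $\ceil{(\delta+1)/\Delta}=L+1$. I will check the edge case where $\Delta$ divides $\delta+1$; if the two differ there, the actual count is still at least the one needed, so the argument goes through.
\item Since $N_{1,\Delta}>n\mu(\mu+1)/2$, the homogeneous system has more unknowns than equations and so has a nonzero solution. Gaussian elimination finds it in polynomial time.
\end{itemize}

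\textbf{Step 2 (every close $F$ is a root).} Let $F$ have $\deg F\le\Delta$ and agree with $\vy$ on a set $A$ of positions.
\begin{itemize}
\item The univariate polynomial $R(X)=Q(X,F(X))$ has degree at most $\deg_{1,\Delta}Q\le\delta$.
\item For each $i\in A$, substituting $Y=F(X)$ with $F(\alpha_i)=y_i$ turns the multiplicity-$\mu$ zero of $Q$ at $(\alpha_i,y_i)$ into a zero of $R$ of order at least $\mu$ at $\alpha_i$. Writing $Q$ in powers of $(X-\alpha_i)$ and $(Y-y_i)$, with $Y-y_i=F(X)-F(\alpha_i)$ divisible by $X-\alpha_i$, shows this.
\item Hence $R$ has at least $\mu|A|$ zeros counted with multiplicity. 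If $\mu|A|>\delta$, then $R\equiv 0$, which means $(Y-F(X))$ divides $Q(X,Y)$.
\end{itemize}

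\textbf{Step 3 (root finding).} All such $F$ are among the $Y$-roots of $Q$ of degree at most $\Delta$. I would extract them with the Roth--Ruckenstein algorithm, or by bivariate factorization.
\begin{itemize}
\item There are at most $\deg_YQ\le L$ of them, which bounds $|\cF|$.
\item The whole procedure runs in time polynomial in $n$, $\mu$ and $|\cF|$.
\item Finally, I would discard any candidate that disagrees with $\vy$ in more than $e$ positions.
\end{itemize}

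\textbf{Main obstacle.} The delicate point is the off-by-one at the boundary of the radius. The condition $\mu|A|>\delta$ requires $|A|\ge\floor{\delta/\mu}+1$. With $e=n-\floor{\delta/\mu}$, the statement only guarantees $|A|\ge\floor{\delta/\mu}$, which fails exactly when $\mu$ divides $\delta$. I would try first to use the minimality of $\delta$ and run the interpolation with $\delta-1$ in place of $\delta$. If the Koetter--Vardy count does not permit that, I would read the radius as $n-\floor{\delta/\mu}-1$, i.e.\ disagreements strictly less than $e$, which is the convention under which the cited result is usually stated. The monomial count and the multiplicity-substitution lemma in Step 2 are otherwise routine.
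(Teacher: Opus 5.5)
Your Steps 1--3 are the standard interpolate-and-find-roots argument behind the cited Guruswami--Sudan/Koetter--Vardy result. The paper gives no proof of its own, only the citation, and your steps are correct as far as they go. The edge case you worry about in Step 1 does not exist, since $\ceil{(\delta+1)/\Delta}=\floor{\delta/\Delta}+1$ for all integers $\delta\ge 0$, $\Delta\ge 1$. The real problem is the one you flag at the end. Your Step 2 needs $|A|\ge\floor{\delta/\mu}+1$ agreements, so it proves radius $n-\floor{\delta/\mu}-1$, one less than the printed $e=n-\floor{\delta/\mu}$. Your first proposed fix cannot work, because it points the wrong way. Minimality of $\delta$ says exactly that $N_{1,\Delta}(\delta-1)\le n\mu(\mu+1)/2$. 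So the interpolation system in weighted degree $\delta-1$ has at least as many equations as unknowns and need not have a nonzero solution.

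In fact, interpolation at multiplicity $\mu$ and weighted degree $\delta$ cannot reach the printed radius. Take $\FF=\GF(5)$ with evaluation points $1,2,3,4$ (so $n=4$), $\Delta=2$ and $\mu=1$.
\begin{itemize}
\item $N_{1,2}(2)=4$ does not exceed $n\mu(\mu+1)/2=4$, while $N_{1,2}(3)=6$ does. Hence $\delta=3$ and the printed radius is $e=1$.
\item The word $\vy=(0,0,2,0)$ is within distance $1$ of both $F_1=0$ and $F_2=(X-1)(X-2)$, whose evaluations are $(0,0,2,1)$.
\item Any nonzero $Q$ with $(1,2)$-weighted degree at most $3$ has the form $Q_0(X)+Q_1(X)Y$ with $\deg Q_0\le 3$ and $\deg Q_1\le 1$.
\item Here $Q_1\neq 0$, since otherwise $Q_0$ would vanish at four points. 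So $Q$ has at most one $Y$-root and cannot return both $F_1$ and $F_2$.
\item Interpolating in weighted degree $\delta-1=2$ instead, the conditions $Q_0(1)=Q_0(2)=Q_0(4)=0$ force $Q_0=0$. The remaining condition at $(3,2)$ then kills the $Y$-coefficient, so only $Q=0$ survives.
\end{itemize}

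So the statement as printed is off by one. The radius the cited analysis actually gives, and the one your argument proves, is $e=n-\floor{\delta/\mu}-1$: every $F$ with more than $\delta/\mu$ agreements lies in the output list. In the example this gives $e=0$, which matches the $[4,3,2]$ code correcting no errors. Adopt that corrected radius outright rather than as a fallback, and drop the attempt to exploit minimality of $\delta$.
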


In \cite{guruswami2012}, it is shown that when $\mu = 2\Delta n$, the Guruswami-Sudan algorithm can perform efficient list decoding with up to $e\le n-\left\lceil \sqrt{\Delta n+1/2}\right\rceil$
errors, where $\Delta$ is the degree of the corresponding polynomial. In other words, when $d_{\text{BCH}} < d_{\text{CS}}$, the degree of polynomial of interest in Robust Repair Scheme~2 is of degree $n-\delta_{\text{best}} -1$ where $\delta_{\text{best}}$ is an output from the preprocessing phase (step 2.2). Hence, Robust Repair Scheme~2 with list decoding can correct at most $e$ erroneous traces, where
\begin{equation*}
e = \begin{cases}
    \begin{aligned}
        &\min\Bigl\{n-\bigl\lceil \sqrt{(n-\delta_{\text{best}} -1)n+1/2}\bigr\rceil, \\
        &\qquad\quad \lfloor(d_{\text{CS}}-1)/2\rfloor\Bigr\} 
    \end{aligned} & \text{if } d_{\text{BCH}} < d_{\text{CS}},\\
    \lfloor (d_{\text{BCH}}-1)/2 \rfloor & \text{if } d_{\text{BCH}} \ge d_{\text{CS}}.
\end{cases}
\end{equation*}
As shown in Fig.~\ref{fig:comparison_of_decoders_and_distance_bound} that Robust Repair Scheme~2 with list decoding improves the number of tolerable erroneous traces from Robust Repair Scheme~2 with unique decoder. While the use of list decoding further increases the number of correctable errors, the gain is modest in practice.

\section{Conclusion}

In this paper, we studied the problem of bandwidth-efficient repair of a single erasure in full-length Reed–Solomon codes over small and moderate field sizes in the presence of erroneous traces. We established upper bounds on the dimension of Reed–Solomon codes that enable correction of a given number of errors, and derived an explicit and optimal bound for correcting at least one error in the binary case. Furthermore, we analyzed the minimum distance of the code formed by traces and provided theoretical lower bounds for it. To support practical applications, we proposed two robust repair schemes. Extending these results to non-full-length Reed–Solomon codes remains an interesting direction for future research.

\appendix

\begin{proof}[Proof of Lemma~\ref{lem:binary_rep}]
    Lemma~\ref{lem:binary_rep}.1. is a direct consequence of Lemma~\ref{lem:largest_rep} with $q = 2$. Note that, we can represent $r_{|\Xi|}$ in binary representation $0\boldsymbol{1}_{t-1}$ and the multiplication by $2$ results in a cyclic shift in any elements in the cyclotomic coset. For instance, $2r_{|\Xi|}$ is $\boldsymbol{1}_{t-1}0, 2^2 r_{|\Xi|}$ is $\boldsymbol{1}_{t-2}01$, and so on. This also shows that $|C_{r_{|\Xi|}}| = t$ and all strings with only one $0$ are in $C_{r_{|\Xi|}}$.
    
    \textbf{Claim:} The second largest representative $r_{|\Xi|-1}$ has binary representation containing exactly two $0$'s.
    
    \textit{Proof of Claim.} To prove this, we show that the largest binary representation containing two $0$'s, $0\boldsymbol{1}_{a_2}0\boldsymbol{1}_{b_2}$, is larger than the largest binary representation containing three $0$'s, $0\boldsymbol{1}_{a_3}0\boldsymbol{1}_{b_3}0\boldsymbol{1}_{c_3}$. For $0\boldsymbol{1}_{a_2}0\boldsymbol{1}_{b_2}$ to be the largest, we maximize $a_2$ under the constraints $a_2 + b_2 = t-2$ (since there are in total $t$ slots) and $0\le a_2 \le b_2$ (since it is a representative). Hence, we require $a\le (t-2)/2$ and the largest integer $a_2$ is $\lfloor t/2\rfloor -1$. Now, for $0\boldsymbol{1}_{a_3}0\boldsymbol{1}_{b_3}0\boldsymbol{1}_{c_3}$ to be the largest, in particular, we maximize $a_3$ under the constraints  $a_3 + b_3 + c_3 = t-3$ and $0\le a_3 \le b_3,c_3$. Hence, we require $a\le (t-3)/3$ and the largest integer $a_3$ is $\lfloor t/3\rfloor - 1$. Clearly, for $t\ge 3$, $a_3 =\lfloor t/3\rfloor - 1  \le \lfloor t/2\rfloor - 1 = a_2$. Hence,
    \begin{align*}
0\boldsymbol{1}_{a_3}0\boldsymbol{1}_{b_3}0\boldsymbol{1}_{c_3} \le 0\boldsymbol{1}_{a_3}0\boldsymbol{1}_{t-a_3 - 2} \le 0\boldsymbol{1}_{a_2}0\boldsymbol{1}_{b_2}.
    \end{align*}
    
    Hence, the binary representative of $r_{|\Xi|-1}$ must contain two zeros and based on the proof of the claim above, it is of the form $0\boldsymbol{1}_{\lfloor t/2\rfloor-1} 0 \boldsymbol{1}_{\lceil t/2\rceil-1}$. For integer $t$, $\lceil t/2\rceil-1 = \lfloor (t-1)/2\rfloor$ and the result $r_{|\Xi|-1} = 2^{t-1} - 2^{\lfloor(t-1)/2\rfloor} - 1$, for $t\ge 3$ follows.

    Clearly, the next largest representative containing two $0$'s is $0\boldsymbol{1}_{\lfloor t/2\rfloor - 2} 0 \boldsymbol{1}_{\lceil t/2\rceil}$. We are left to check that the largest representative containing three $0$'s is never larger than it. Indeed, for $t\ge 4$, $a_3 = \lfloor t/3\rfloor -1 \le \lfloor t/2 \rfloor - 2 = a_2 - 1$. Hence,
\begin{align*}
0\boldsymbol{1}_{a_3}0\boldsymbol{1}_{b_3}0\boldsymbol{1}_{c_3} \le 0\boldsymbol{1}_{a_3}0\boldsymbol{1}_{t-a_3 - 2} \le 0\boldsymbol{1}_{a_2-1}0\boldsymbol{1}_{b_2+1}.
    \end{align*}
    Therefore, $r_{|\Xi|-2} = 2^{t-1} - 2^{\lfloor(t-1)/2\rfloor+ 1} - 1$, for $t\ge 4$.
\end{proof}
\balance
\printbibliography

\end{document}